\newcommand{\N}{\mathbb{N}}
\newcommand{\Z}{\mathbb{Z}}
\newcommand{\Q}{\mathbb{Q}}
\newcommand{\A}{\mathcal{A}}
\newcommand{\B}{\mathcal{B}}
\newcommand{\W}{\mathcal{W}}
\newcommand{\C}{\mathcal{C}}
\newcommand{\bF}{\mathbb{F}}
\newcommand{\set}[1]{\{#1\}}
\renewcommand{\vec}[1]{{\bf #1}}
\DeclareMathOperator{\PSF}{PSF}
\DeclareMathOperator{\chr}{char}
\DeclareMathOperator{\Poly}{Poly}
\DeclarePairedDelimiter{\card}{|}{|}
\title{Pumping-Like Results for Copyless Cost Register Automata and Polynomially Ambiguous Weighted Automata}
\titlerunning{Pumping-Like Results for CCRA and Polynomially Ambiguous WA}
\author{Filip Mazowiecki}{University of Warsaw, Poland}{f.mazowiecki@mimuw.edu.pl}{}{Supported by Polish National Science Centre
SONATA BIS-12 grant number 2022/46/E/ST6/00230}
\author{Antoni Puch}{University of Warsaw, Poland}{a.puch@student.uw.edu.pl}{}{Supported by Polish National Science Centre
SONATA BIS-12 grant number 2022/46/E/ST6/00230}
\author{Daniel Smertnig}{University of Ljubljana and Institute of Mathematics, Physics, and Mechanics (IMFM), Slovenia}{daniel.smertnig@fmf.uni-lj.si}{https://orcid.org/0000-0002-5391-2471}{Supported by the Slovenian Research and Innovation Agency (ARIS) program P1-0288 and grant J1-60025}
\authorrunning{F. Mazowiecki, A. Puch and D. Smertnig}
\keywords{weighted automata, cost register automata, ambiguity, linear recurrence sequences, equivalence problem} 
\begin{document}

\maketitle

\begin{abstract}
In this work we consider two rich subclasses of weighted automata over fields: polynomially ambiguous weighted automata and copyless cost register automata. Primarily we are interested in understanding their expressiveness power. Over the field of rationals and $1$-letter alphabets, it is known that the two classes coincide; they are equivalent to linear recurrence sequences (LRS) whose exponential bases are roots of rationals. We develop a tool we call Pumping Sequence Families, which, by exploiting the simple single-letter behaviour of the models, yields two pumping-like results over arbitrary fields with unrestricted alphabets, one for each class. As a corollary of these results, we present examples proving that the two classes become incomparable over the field of rationals with unrestricted alphabets.

We complement the results by analysing the zeroness and equivalence problems. For weighted automata (even unrestricted) these problems are well understood: there are polynomial time, and even NC$^2$ algorithms. For copyless cost register automata we show that the two problems are \textsc{PSpace}-complete, where the difficulty is to show the lower bound.
\end{abstract}

\clearpage
\section{Introduction}
Weighted automata are a computational model assigning values from a fixed domain to words~\cite{droste2009handbook}. The domain can be anything with a semiring structure. Typical examples are: fields~\cite{Schutzenberger61b}, where in particular probabilistic automata assign to every word the probability of its acceptance~\cite{paz71}; and tropical semirings, popular due to their connection with star height problems~\cite{Hashiguchi88}. In this paper we focus on weighted automata over fields. These are finite automata with transitions, input and output edges additionally labeled by weights from the field. On an input word the value of a single run is the product of all weights, and the output of the weighted automaton is the sum of values over all runs. See \cref{fig:examples} for examples.

Unlike finite automata, nondeterminism makes weighted automata more expressive. This naturally leads to the decision problem of \emph{determinisation}: given a weighted automaton does there exist an equivalent deterministic one? Over fields, it was recently shown that the problem is decidable~\cite{BellS23}, later improved to a 2-\textsc{ExpTime} upper bound on the running time~\cite{BenaliouaLR24}. Both papers rely on techniques used to obtain Bell and Smertnig's result~\cite{bell2021noncommutative} characterising the intermediate class of \emph{unambiguous weighted automata}: a subclass that allows nondeterminism, but for every word there is at most one run of nonzero value. The authors proved Reutenauer's conjecture~\cite{Reutenauer79}, which we explain below.

\begin{figure}
\centering
\begin{subfigure}[t]{.50\textwidth}
\centering
\begin{tikzpicture}
\node[state] (p) {$p_1$};
\node[state,right=1cm of p] (p2) {$p_2$};

\node[state,right = 0.9cm of p2] (q) {$q_1$};
\node[state,right = 1cm of q] (q2) {$q_2$};

\path
(p) edge[->,bend left] node[above] {$a \mid 2$} (p2)
(p2) edge[->,bend left] node[below] {$a \mid 2$} (p)
(p2.north) edge[->] node[right] {$1$} ++ (0,0.5)
(q) edge[->,bend left] node[above] {$a \mid 3$} (q2)
(q2) edge[->,bend left] node[below] {$a \mid 3$} (q)
(q.north) edge[->] node[right] {$1$} ++ (0,0.5)
(p.west) edge[<-] node[above] {$1$} ++ (-0.5,0)
(q.west) edge[<-] node[above] {$1$} ++ (-0.5,0)
;
\end{tikzpicture}
\caption{A four state unambiguous weighted automaton $\A$ over a $1$-letter alphabet $\set{a}$. Nonzero initial labels for $p_1$ and $q_1$ have value $1$. The nonzero final states are $p_2$ and $q_1$, both with weight $1$. For every word $a^n$ there are two runs: on the left of value $1 \cdot 2^n$; on the right of value $1 \cdot 3^n$. Depending on the parity of $n$ only one of the runs has nonzero output, thus $\A(a^n) = 3^n$ for even $n$, and $\A(a^n) = 2^n$, otherwise.}\label{fig:unambiguous}
\end{subfigure}
\hspace*{\fill}
\begin{subfigure}[t]{.45\textwidth}
\centering
\begin{tikzpicture}
\node[state] (p) {};
\node[state,right=1.1cm of p] (q) {};

\path
(p) edge[->,loop above] node[above] {$a \mid 1$} (p)
(q) edge[->,loop above] node[above] {$a \mid 1$} (q)
(p) edge[->] node[above] {$a \mid 1$} (q)
(p.west) edge[<-] node[above] {$1$} ++ (-0.5,0)
(q.east) edge[->] node[above] {$1$} ++ (0.5,0)
;
\end{tikzpicture}
\caption{A 2-state polynomially ambiguous weighted automaton $\B$ over the alphabet $\set{a}$. Note that every word $a^n$ has $n$ runs with value $1$, hence $\B(a^n) = n$.}
	\label{fig:n}
\end{subfigure}
\caption{Weighted automata over the field of rationals $\Q(+,\cdot)$. For clarity, we omit zero labels.}\label{fig:examples}
\end{figure}

Given a weighted automaton $\W$ over the alphabet $\Sigma$ consider $\W(\Sigma^*)$, the set of all outputs over all words. For example, in \cref{fig:examples} we have: $\A(a^*) = \set{\,2^{2n+1} \mid n \in \N\,} \cup \set{\, 3^{2n} \mid n \in \N\,}$; and $\B(a^*) = \N$. Reutenauer's conjecture (now Bell and Smertnig's Theorem) states that for every weighted automaton $\W$ over a field $K$: there exists an equivalent unambiguous weighted automaton if and only if there exists a finitely generated multiplicative subgroup $G \subseteq K$ such that $\W(\Sigma^*) \subseteq G \cup\{0\}$. For example $\A(a^*) \subseteq G_{\A}$, where $G_{\A}$ is generated by the transition weights $\set{2,3}$. It is not hard to see that this construction generalises to every unambiguous weighted automaton, the crux is to prove the other implication. As an immediate nontrivial application, notice that there is no unambiguous weighted automaton equivalent to $\B$, as the set $\N \setminus \set{0}$ is not contained in any finitely generated subgroup.

The equivalence problem for weighted automata over fields is famously decidable in polynomial time~\cite{Schutzenberger61b}. However, most natural problems are undecidable~\cite{paz71,FijalkowRW22,DaviaudJLMP021,CzerwinskiLMPW22}. This triggered the study of intermediate classes between deterministic and unrestricted weighted automata. One way to define such a class is based on \emph{ambiguity}, generalising unambiguous weighted automata. A much broader class are \emph{polynomially ambiguous weighted automata}, where the number of accepting runs is bounded by a polynomial in the size of the input word (see \cref{fig:n}). Restricting the input automaton to polynomially ambiguous can significantly lower the complexity of a problem, for example, the discussed problem of determinisation is known to be in \textsc{PSpace} over the field of rationals~\cite{JeckerMP24}. Another way to define such a class comes from \emph{cost register automata}~\cite{AlurDDRY13} (CRA), a deterministic model with polynomial register updates. In this context it is natural to consider its copyless restriction (CCRA), because every function recognisable by a CCRA is also recognisable by a weighted automaton~\cite{MazowieckiR15} (for simplicity the definition of CCRA is postponed to \cref{sec:preliminaries}).

As far as we know, these two classes, polynomially ambiguous weighted automata and copyless CRA, are the richest studied classes that are known to be strictly contained in the class of unrestricted weighted automata. Over the tropical semiring they are known to be incomparable in terms of expressiveness~\cite{MazowieckiR19,ChattopadhyayMM21}, which suggests the same over fields. One attempt to prove this result was in~\cite{BarloyFLM22}, where the authors considered weighted automata over the field of rationals with $1$-letter alphabets. By identifying words $a^n$ with their length $n$, one can view such automata as sequences. In fact weighted automata over $1$-letter alphabets are equivalent to the well-known class of \emph{linear recurrence sequences} (LRS)~\cite{OuaknineW15}. In~\cite{BarloyFLM22}, the authors prove that polynomially ambiguous weighted automata and copyless CRA coincide, and that they are also equivalent to the class of LRS whose exponential bases are roots of rationals. This means that in the exponential polynomial representation of LRS: $\sum_{i=1}^n p_i(x) \lambda_i^x$, for every $i$ there is an $n_i$ such that $\lambda_i^{n_i} \in \Q$. In particular this shows that the Fibonacci sequence does not belong to this class, as the golden ratio $\varphi$ is not of this form.

\paragraph*{Our Contribution}
Our work can be seen as a follow-up to~\cite{BarloyFLM22}. An immediate corollary of our results is that polynomially ambiguous weighted automata and copyless CRA over the field of rationals are incomparable classes in terms of expressiveness. To prove this, we developed a tool we call Pumping Sequence Families ($\PSF$), which allows us to exploit the behaviour of these classes over $1$-letter alphabets. In the following we use the standard sequence notation $(a_n)_n = a_0, a_1, a_2,\ldots$.

\begin{definition} \label{d:psf}
    A Pumping Sequence Family of a function $f: \Sigma^* \mapsto A$, for any set $A$, is the set of all sequences of the form $\hat{f}(u, w, v) \coloneqq (f(uw^nv))_n$, with $u, w, v$ ranging over all words in $\Sigma^*$. We denote it by $\PSF(f)$.
\end{definition}
One should think that $f$ is being projected onto many single-letter-like cases at once, where $w$ plays the role of the single letter in the alphabet, while $u$ and $v$ correspond to slight adjustments of respectively the initial and acceptance conditions. The definition of $\PSF(f)$ exploits that $u$, $v$ and $w$ range over all words, which captures behaviour beyond the single letter alphabet case. Using fixed words one cannot differentiate polynomially ambiguous weighted automata and copyless CRA due to~\cite[Theorem 6, Theorem 13]{BarloyFLM22}. However, this simple extension of the single letter case analysis will be enough to show those models to be incomparable. To showcase our approach, let us consider a simple application for languages, where $A = \set{\top, \bot}$ (meaning acceptance and rejection of the word).

\begin{example}\label{example:PSFfinite}
Consider $f : \set{a}^* \mapsto \set{\top, \bot}$ which maps all words of even length to $\top$ and all others to $\bot$. Then $\PSF(f)$ consists of four sequences (for simplicity we write an example generator for each sequence):
\begin{itemize}
\item $\hat{f}(\varepsilon, \varepsilon, \varepsilon) = \top, \top, \top, \top,  \dots$
\item $\hat{f}(a, \varepsilon, \varepsilon) = \bot, \bot, \bot, \bot, \dots$
\item $\hat{f}(\varepsilon, a, \varepsilon) = \top, \bot, \top, \bot, \dots$
\item $\hat{f}(a, a, \varepsilon) = \bot, \top, \bot, \top, \dots$.
\end{itemize}
\end{example}

\begin{example}\label{example:PSFinfinite}
Consider $g : \set{a, b}^* \mapsto \set{\top, \bot}$ which maps all words of the form $a^nb^n$ to $\top$ and all others to $\bot$. Then $\PSF(g)$ is an infinite set:
\begin{itemize}
\item $\hat{g}(ab, \varepsilon, \varepsilon) = \top, \top, \top, \top, \dots$
\item $\hat{g}(a, \varepsilon, \varepsilon) = \bot, \bot, \bot, \bot, \dots$
\item $\hat{g}(\varepsilon, ab, \varepsilon) = \top, \top, \bot, \bot, \dots$
\item $\hat{g}(a^k, b, \varepsilon) = \underbrace{\bot,\ldots \bot}_{k}, \top, \bot, \bot, \dots$ for every $k\ge 0$.
\end{itemize}
\end{example}


The above examples already present us with a simple use case for Pumping Sequence Families. By looking at the transition function of the underlying DFA we can see that, generalising \cref{example:PSFfinite}, for a regular language the Pumping Sequence Family of its characteristic function will be finite. However, as witnessed in \cref{example:PSFinfinite}, the Pumping Sequence Family of the context-free language $L = \set{\,a^kb^k \mid k\in \N\,}$ is infinite, proving that it is not regular and thus differentiating regular and context-free languages. Note that, due to Parikh's theorem, over $1$-letter alphabets the two language classes are equivalent and semi-linear. Meaning if we would fix $u$, $v$ and $w$ then simply looking at the single letter case behaviour we would not be able to differentiate these classes.

We now present how we use Pumping Sequence Families for weighted functions. There the set $A$ from \cref{d:psf} is simply the underlying field.

For a function $h$ recognised by a Copyless Cost Register Automaton we will restrict elements of its Pumping Sequence Family. Note that, since copyless CRA are a subset of weighted automata, elements of $\PSF(h)$ can be essentially represented as exponential polynomials. Consider such a sequence $a_n = \sum_{i=1}^d p_i(n) \lambda_i^n$. Let us write the polynomials $p_i$ explicitly: $p_i(x) = \sum_{j=1}^{m_i} \alpha_{i,j} x^j$. For every degree $k$ we define the sum of $k$-degree coefficients $S_k((a_n)_n) = \sum_{i=1}^d\alpha_{i,k}$. In \cref{theorem:CCRA} we show that, up to minor technical details, for all $k$ the set $\set{\,S_k((a_{n})_n) \; | \; (a_n)_n \in \PSF(h) \,}$ is contained in a finitely generated subsemiring $R$. For intuition, if we consider the generators $\{\tfrac{1}{2}, \tfrac{1}{3}\}$, by adding, subtracting and multiplying, they generate $R = \set{\,\tfrac{a}{6^k} \mid a \in \Z, k \in \N\,}$. This allows us to give an example of a polynomially ambiguous automaton that is not definable by any copyless CRA (the proof is short but technical, see \cref{exm:wfa-not-ccra}).

For polynomially ambiguous weighted automata, our work is inspired by~\cite{puch-smertnig24}, where the authors attempt to characterise polynomially ambiguous weighted automata in a similar manner to Bell and Smertnig's Theorem. For a function recognised by a polynomially ambiguous weighted automaton $\W$ and given a sequence $(c_n)_n \in \PSF(\W)$ consider again its exponential polynomial $\sum_{i=1}^d p_i(x) \lambda_i^x$ and let $E((c_n)_n) =\set{\,\lambda_i \mid 1 \le i \le d\,}$ be the set of exponential bases. In \cref{t:poly_fin_gen} we show that the set $\bigcup_{(c_n)_n \in \PSF(\W)} E((c_n)_n)$ is contained in a finitely generated subgroup $G$. We provide a self-contained proof, and we show that our property, which is simpler to work with when considering examples, is equivalent to the one in~\cite{puch-smertnig24} (conjectured to characterise polynomially ambiguous automata). We obtain a corresponding, simple but technical, example of a copyless CRA that is not definable by any polynomially ambiguous weighted automaton (\cref{exm:ccra-not-pa}).

In this context a natural question is whether our property for copyless CRA can be a characterisation. We conjecture that it is not the case and that, in some sense, such a characterisation should not exist. We show examples of functions that satisfy the property we developed for CCRA, but we find it unlikely that there are CCRA that define them. More generally, in~\cite{MazowieckiR19} the authors prove that the class of CCRA is not closed under reversal for the tropical semiring. More precisely, there is a CCRA $\C$ such that there is no CCRA $\C'(w) = \C(w^r)$, where $w^r$ is $w$ reversed. We conjecture that over fields CCRA are also not closed under reversal, which makes such characterisations unlikely.

Our final contribution is the analysis of the equivalence and zeroness problems for both classes. As already mentioned for weighted automata (even without restrictions) equivalence and zeroness are in polynomial time~\cite{Schutzenberger61b} and even in NC$^2$~\cite{Tzeng96}. For copyless CRA the translation to weighted automata~\cite{MazowieckiR15} yields an exponential blow up in the size of the automaton (we provide a self-contained short translation). Since problems in NC$^2$ can be solved sequentially in polylogarithmic space~\cite{Ruzzo81}, this yields a trivial \textsc{PSpace} algorithm. Our contribution is a matching \textsc{PSpace} lower-bound.

\begin{restatable}[]{theorem}{theorempspace}
\label{theorme:pspace}
Zeroness and equivalence problems are \textsc{PSpace}-complete for CCRA over $\Q$.
\end{restatable}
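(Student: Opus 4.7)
For the upper bound the plan is exactly the one sketched in the introduction. First I would give a short, self-contained polynomial-time translation from a copyless CRA $\mathcal{C}$ into an equivalent weighted automaton $\mathcal{W}_{\mathcal{C}}$ of size at most exponential in $|\mathcal{C}|$: the states of $\mathcal{W}_{\mathcal{C}}$ track a state of $\mathcal{C}$ together with a ``thread'' of registers that will eventually feed into the final output expression, and each transition of $\mathcal{C}$ induces the appropriate weighted transitions. Then I would invoke Tzeng's NC$^2$ algorithm for zeroness and equivalence of weighted automata. By Ruzzo's theorem NC$^2 \subseteq \mathrm{DSPACE}(\log^2 n)$, so simulating that algorithm on the exponentially larger $\mathcal{W}_{\mathcal{C}}$ consumes only $\mathrm{poly}(|\mathcal{C}|)$ space. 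Equivalence of two CCRA reduces to zeroness in the standard way, by going to the difference weighted automaton (using closure of WA under subtraction), so both problems land in \textsc{PSpace}.

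For the matching lower bound my plan is a polynomial-time reduction from a canonical \textsc{PSpace}-complete problem; natural candidates are QBF or non-emptiness of the intersection of a family of DFAs over a shared alphabet. Given an instance $I$, I would construct in polynomial time a CCRA $\mathcal{C}_I$ over $\mathbb{Q}$ (together, in the equivalence case, with a trivial companion $\mathcal{C}_I'$) such that $\mathcal{C}_I \equiv 0$ iff $I$ is a ``no''-instance. The key resource to exploit is that a single register $r$ can reach the value $2^n$ after $n$ letters via the copyless update $r \mapsto 2r$, and polynomial copyless combinations of several such doubling registers allow a deterministic CCRA to compute a weighted sum whose terms are indexed by objects of exponential size (e.g.\ truth assignments to the quantified variables, or tuples of states of the DFAs). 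Arranging these weights so that their total vanishes precisely on ``no''-instances then captures the \textsc{PSpace} condition as zeroness.

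The technical heart of the argument, and the reason the authors advertise this as the harder direction, is respecting the copyless discipline while still performing such an exponential aggregation. Unlike weighted automata we have no nondeterminism, and unlike full CRA we cannot duplicate register values: every register must appear in at most one right-hand side per transition. The plan is a pipeline-style gadget in which each register carries a partial product or partial sum, is consumed exactly once per update, and fresh registers are reintroduced from field constants at each step; reading a word that encodes a witness structure then drives the registers through a prescribed polynomial evaluation whose final output value is a signed sum over all configurations relevant to $I$. The hard part will be verifying that the gadget simultaneously (i) remains copyless at every transition, (ii) stays polynomial in $|I|$, and (iii) does not accidentally collapse to zero for spurious reasons, so that vanishing of the output is truly equivalent to the target \textsc{PSpace} property.
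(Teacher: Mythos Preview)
Your upper bound sketch is correct and matches the paper exactly: translate the CCRA into an exponential-size weighted automaton, then run Tzeng's $\mathrm{NC}^2$ zeroness test in polylogarithmic space on that implicit automaton, computing states and transitions on the fly.

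The lower bound plan, however, has a genuine gap. The mechanism you propose---doubling registers to reach $2^n$ and then arranging a \emph{signed sum over all configurations} that cancels to zero exactly on no-instances---is not what is needed and it is unclear how it could be made to work. A CCRA is deterministic, so on a fixed input word it computes a single value; there is no internal enumeration of exponentially many assignments or state-tuples, and nothing in your plan explains how a copyless pipeline would aggregate over such an exponential set. For QBF in particular the alternating quantifiers do not decompose as a signed sum, and the ``doubling to $2^n$'' trick, while true, plays no role.

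The paper's reduction is conceptually simpler and avoids any aggregation. The input word itself encodes a candidate witness: a sequence of valuations $\vec v_1,\dots,\vec v_n$ (blocks separated by $\#$) that step a binary counter through all values of the universal variables while guessing existential ones. The CCRA merely \emph{verifies} this witness. The key tool is the standard arithmetisation of Boolean formulas ($\neg\varphi\mapsto 1-p$, $\varphi_1\wedge\varphi_2\mapsto p_1p_2$), which turns each of $\textsc{start}$, $\textsc{next}$, $\textsc{end}$, $\varphi$ into a $\{0,1\}$-valued polynomial. Such a polynomial is not copyless, but it is $\ell$-copyless for $\ell=|\varphi|$; the CCRA therefore carries $\ell$ parallel copies of the current (and previous) valuation in disjoint register banks, so that each check can be evaluated once in a copyless way. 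A single register $s$ is multiplicatively updated by these $\{0,1\}$ checks at every $\#$, so $s$ stays $1$ iff all checks pass; the output multiplies $s$ by $\textsc{end}$. Thus the CCRA outputs a nonzero value on some word iff a valid witness sequence exists iff the QBF is true. No cancellation, no doubling, no sum over configurations---the missing idea in your plan is precisely this ``store enough copies to make the formula-polynomial copyless, then multiply the $0/1$ checks into a flag register'' construction.
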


\paragraph*{Organisation}
We start with definitions in \cref{sec:preliminaries}. In \cref{sec:ccra} and \cref{sec:polyamb} we prove the properties of copyless CRA, and polynomially ambiguous weighted automata, respectively; and we present examples separating the classes. In Appendix~\ref{sec:equivalence} we discuss the decision problems.

\section{Preliminaries} \label{sec:preliminaries}
Let $\N\coloneqq \{0,1,2, \dots\}$.
For a field $K$, let $K^\times \coloneqq K\setminus \{0\}$ denote the multiplicative group of nonzero elements.
We sometimes write $1_K$ and $0_K$ for the elements $1$ and $0$ of the field, to emphasize which $1$ and $0$ we mean.

\subsection{Automata and Sequences}

A \emph{weighted automaton} over a field $K$ is a tuple $\A = (d, \Sigma, (M(a))_{a \in \Sigma}, I, F)$, where: $d \in \N$ is its dimension; $\Sigma$ is a finite alphabet; $M(a) \in K^{d \times d}$ are transition matrices; $I$,~$F \in K^d$ are the initial and final vectors, respectively. 
For simplicity, sometimes we will write $\A = (d, M, I, F)$, that is, we will omit $\Sigma$ in the tuple.

Weighted automata can be defined more generally over semirings, but in this paper we only consider the case of fields.
The field $\Q$ is already rich enough to
produce all phenomena of interest to us.
Thus, our examples will be for $\Q$ with the usual addition and product, unless stated otherwise.

Given a word $w = w_1\ldots w_n \in \Sigma^*$, we denote $M(w) \coloneqq M(w_1) \cdot\ldots\cdot M(w_n)$.
In particular $M(\epsilon)$ is the $d\times d$-identity matrix.
A weighted automaton defines a function $\A\colon \Sigma^* \to K$, by $\A(w) \coloneqq I^\intercal \cdot M(w) \cdot F$.
We say that a weighted automaton $\A$ is a \emph{linear recurrence sequence \textup(LRS\textup)} if $|\Sigma| = 1$. 
Then, by identifying $\Sigma^*$ with $\N$, that is, identifying the word $a^n$ with its length $n$, we write that $\A\colon \N \to K$.
We will also denote such sequences $(a_n)_n$ instead of $\A$, where $a_n \coloneqq \A(n)$.

\begin{example}\label{example:linear}
Consider an LRS $\A = (2, M, I, F)$, where: $M = \begin{psmallmatrix} 1 & 1  \\ 0 & 1 \end{psmallmatrix}$; $I = \begin{psmallmatrix} 1 \\ 0 \end{psmallmatrix}$ and $F = \begin{psmallmatrix} 0 \\ 1 \end{psmallmatrix}$. Then $\A(n) = a_n = n$.
\end{example}

For weighted automata we define their underlying automata.
A weighted automaton $\A = (d, \Sigma, (M(a))_{a \in \Sigma}, I, F)$ can be interpreted as an automaton with states $\set{1,\ldots,d}$ such that for every $a \in \Sigma$ a nonzero entry in $M_a[i,j]$ defines a transition from $i$ to $j$ labeled by $a$ of weight $M_a[i,j]$ (thus we ignore transitions of weight $0$).
Similarly, initial and final states are $i$ such that $I[i]$ and $F[i]$ are nonzero, respectively.
Their weights are $I[i]$ and $F[i]$.
By ignoring the weights of transitions, initial and final states, we obtain a finite automaton $\B$, which we call the \emph{underlying automaton} of $\A$. 

\begin{example}\label{example:automaton_linear}
The LRS $\A = (2, M, I, F)$ in \cref{example:linear} is an equivalent presentation of the weighted automaton $\B$ in \cref{fig:n}.
\end{example}

A weighted automaton $\A$ is \emph{polynomially ambiguous} if there is a polynomial function $p\colon \N \to \N$ such that for every $w \in \Sigma^*$ the number of accepting runs of the underlying automaton on $w$ is bounded by $p(|w|)$.
For example, the automaton in \cref{example:automaton_linear} is polynomially ambiguous as it suffices to take $p(n) = n$.
In general this is a strict subclass: there exist weighted automata that are not equivalent to any polynomially ambiguous weighted automata.

LRS can be characterised in another way. 
An LRS $(a_n)_n$ can be defined by a (homogeneous) recurrence relation of the form $a_{n+k} = \sum_{i=0}^{k-1} c_i \cdot a_{n+i}$ with $c_i \in K$ and $k$ initial values $a_0$, \dots,~$a_{k-1}$.
Here $k$ is the \emph{order} of the recurrence.
For instance, the LRS $(a_n)_n$ from \cref{example:linear} can be defined by $a_{n+2} = 2a_{n+1} - a_n$ and $a_0 = 0$, $a_1 = 1$. 
It is well-known that the two definitions coincide~\cite[Lemma 1.1]{halava2005skolem} \cite[Proposition 2.1]{CadilhacMPPS24}.
Moreover, the translation is effective in polynomial time, and under this translation, the dimension $d$ of the weighted automaton equals the order $k$ of the recurrence.

Any given LRS $(a_n)_n$ satisfies many different linear recurrences.
However, it is well-known that there is a unique (homogeneous) recurrence of minimal order satisfied by $(a_n)_n$ \cite[Ch.~6.1]{berstel-reutenauer11}.
The corresponding order $k$ is then the \emph{order} of the LRS.
This minimal recurrence gives rise to the \emph{characteristic polynomial} $q=x^k - c_{k-1} x^{k-1} - \dots - c_0$ of $(a_n)_n$ \cite[Ch.~6.1]{berstel-reutenauer11} \cite{everest2003recurrence}.
The roots of the characteristic polynomial (considered in the algebraic closure $\overline K$) are the \emph{characteristic roots} of the LRS $(a_n)_n$.

\begin{example}
    Continuing from \cref{example:linear}, the characteristic polynomial is $q=x^2-2x+1 = (x-1)^2$.
    Hence, the only characteristic root is $1$ (with multiplicity $2$).
\end{example}

The characteristic roots of LRS definable by polynomially ambiguous weighted automata are always roots of elements of $K$ \cite{BarloyFLM22,Kostolanyi23,puch-smertnig24}.
So, for example, the Fibonacci sequence $F_0=0$, $F_1=1$, $F_{n+2}=F_{n+1}+F_n$, is not recognised by a polynomially ambiguous weighted automaton over $\Q$, as its characteristic roots are the golden ratio $\varphi=\tfrac{1+\sqrt5}{2}$ and $\psi=\tfrac{1-\sqrt{5}}{2}$.

We recall an additional characterisation of LRS, namely as coefficient sequences of rational functions, leading to exponential polynomials.
See also \cite[Chapter~6]{berstel-reutenauer11}\cite{everest2003recurrence}\cite[Proposition 2.11]{halava2005skolem} or Appendix~\ref{subsec:appendix-exppoly}.
A sequence $(a_n)_n$ is an LRS if and only if the (formal) generating series $F = \sum_{n=0}^\infty a_n x^n \in K\llbracket x \rrbracket$ is a rational function: the series $F$ is the formal Taylor series expansion of some $p/q$ with $p$,~$q \in K[x]$ coprime polynomials and $q(0)\ne 0$.
For nonzero $\lambda \in \overline{K}$, the following are now equivalent:
\begin{itemize}
\item $\lambda$ is a characteristic root of $(a_n)_n$;
\item $\lambda$ appears as an eigenvalue of $M(a)$ in a weighted automaton representation of $(a_n)_n$ of minimal dimension;
\item $1/ \lambda$ is a pole of $F$, that is, $q(1/\lambda)=0$.
\end{itemize}
Further, the characteristic roots appear as eigenvalues of $M(a)$ in \emph{every} representation of $(a_n)_n$ using a weighted automaton.
But in a weighted automaton that is not of minimal dimension, the matrix $M(a)$ may have additional eigenvalues. 

In characteristic $0$, every LRS $(a_n)_n$, has, for large enough $n$, a representation as an \emph{exponential polynomial sequence \textup(EPS\textup)}:
\[
a_n = \sum_{i=1}^k q_i(n) \lambda_i^n \quad\text{for sufficiently large n},
\]
with $q_i$ polynomials over $\overline{K}$ and $\lambda_i \in \overline{K}$ the nonzero characteristic roots of $(a_n)_n$.
Furthermore, the exponential bases $\lambda_i$ and the polynomials $q_i$ are uniquely determined by $(a_n)_n$.

\begin{example}
    The Fibonacci numbers admit the representation $F_n = \tfrac{1}{\sqrt{5}} \varphi^n - \tfrac{1}{\sqrt{5}} \psi^n$.
\end{example}

In characteristic $p > 0$, the situation is more complicated (see Appendix~\ref{subsec:appendix-exppoly}): an LRS may not have a representation by an exponential polynomial (even for large $n$).
If it does have such a representation, it is however still unique as long as the polynomials $q_i$ are chosen of minimal degree, that is, with $\deg(q_i) < p$.
Further, every EPS is an LRS, and the exponential bases of the EPS are precisely the nonzero characteristic roots of the LRS.

\subsection{Cost Register Automata}
We will introduce one more formalism that generalises weighted automata to polynomial updates~\cite{AlurDDRY13}.
A \emph{cost register automaton} (CRA) over a field $K$ is a tuple $\C = (Q, q_0, d, \Sigma, \delta, \mu, \nu)$, where: $Q$ is a finite set of states; $q_0 \in Q$ is the initial state; $d \in \N$ is its dimension; $\Sigma$ is a finite alphabet; $\delta\colon Q \times \Sigma \to Q \times \Poly^d$ is a deterministic transition function, where $\Poly^d$ is the set of $d$-dimensional polynomial maps; $\mu\colon K^d$ is the vector of initial register values; and $\nu\colon Q \to \Poly^d$ is the final function.
Here, a \emph{polynomial map} $P \in \Poly^d$ is a tuple $P=(p^1,\dots,p^d)$ with polynomials $p^i \in K[x_1,\dots,x_d]$.
Every polynomial map induces a function $K^d \to K^d$.

Given $q \in Q$ and $a \in \Sigma$ we write $p_{q,a}$ for the polynomial map such that $\delta(q,a) = (q',p_{q,a})$ for some $q' \in Q$.
Note that if we ignore the polynomials in $\delta$, then $(Q,q_0,\Sigma,\delta)$ is just a deterministic finite automaton without final states. Thus, given a word $w$, there is a unique state reachable from $q_0$ when reading $w$. We will denote it $q_w$.
For words $w \in \Sigma^+$ we define polynomial maps $p_w$ by induction: if $w = a \in \Sigma$ is a letter then $p_w = p_{q_0,a}$; otherwise if $w = w'a$ for a letter $a \in \Sigma$ then $p_w = p_{q_{w'},a} \circ p_{w'}$.

A CRA defines a function $\C\colon \Sigma^* \to K$, similarly to weighted automata. Formally, given a word $w = w_1 \ldots w_n \in \Sigma^*$ we define $\C(w) =  (\nu(q_w) \circ p_w)(\mu)$. 
\begin{example}
We can define the automaton recognising the same function as in \cref{example:linear}. There is only one state, which is also initial, and only one letter. The dimension is $2$, we will label the two resulting registers as $x$ and $y$. There is only one transition defined by the polynomial map $(p^x,p^y)$ with $p^x(x,y) = x+y$ and $p^y(x,y) = y$. The initial vector is defined by $\mu(x) = 0$, $\mu(y) = 1$; and the output is the polynomial $x$.
\end{example}

One can think of the polynomial maps as generalising linear updates definable by matrices. When restricting the model to linear polynomials, the CRA formalism is equivalent to weighted automata~\cite{AlurDDRY13}, and it is called linear CRA. The resulting weighted automaton is of polynomial size in the size of the linear CRA. Note that CRA use separate notions of states (the set $Q$) and registers (i.e., the dimension $d$). In general, states are not needed, as one can easily encode the states by enlarging the dimension to $d \times |Q|$, even for linear CRA. However, such encodings do not preserve the copyless restriction on CRA, which we discuss next.

A \emph{copyless} CRA (CCRA) is a CRA such that all polynomial maps in the transition function and the output function are copyless. A polynomial map $P \in \Poly^d$ is copyless if it can be written using sum, product, variable names and constants using each variable name only once. In particular $x^k$ for $k > 1$ is not copyless. 
\begin{example}
For $d = 3$ the map $P$ is defined by three polynomials $p^x(x,y,z)$, $p^y(x,y,z)$ and $p^z(x,y,z)$.
If $p^x = (x+3)\cdot (y+z)$, $p^y = 7$ and $p^z=1$, then $P$ is copyless; but if $p^x = y+1$, $p^y = y$ and $p^z=z$, then $P$ is not copyless.
\end{example}
It is easy to see that copyless polynomial maps are preserved under composition.
Thus, in a CCRA all polynomial maps $p_w$ are copyless.

Functions definable by Copyless CRA are known to be definable by weighted automata~\cite{MazowieckiR15,MazowieckiR19}. For intuition we give a very short self-contained proof of this fact in Appendix~\ref{ssec:ccra-to-wa}.

\section{Pumping Sequence Families of CCRA}
\label{sec:ccra}
Throughout the section, fix a field $K$.
In this section we prove a result restricting Pumping Sequence Families of CCRA.
This result will be based on the observation that sequences of the form $(\A(u w^{m(n+1)} v))_n$, obtained from a CCRA $\A$, are always representable by very particular exponential polynomials. 
To this end, we first introduce the following class of functions.

\begin{definition} \label{d:generable}
    A $K$-valued sequence $(a_n)_n$ is an \emph{exponential polynomial sequence generable from $A \subseteq K$} \textup(in short, an \emph{$A$-generable EPS}\textup) if it can be obtained, using pointwise products and sums, from the following sequence families:
    \begin{itemize}
      \item Constant sequences $(\alpha)_n$ for $\alpha \in A$,
      \item The linear sequence $(n \cdot 1_K)_n$,
      \item Exponential sequences \textup(that is, geometric progressions\textup) $(\alpha^n)_n$ for $\alpha \in A$,
      \item Sequences of the form $\big(\frac{1}{\alpha-1} \alpha^n - \frac{1}{\alpha - 1}\big)_n$ for $1 \neq \alpha \in A$.
    \end{itemize}
\end{definition}

The last family may be a bit unexpected at first glance.
It arises from the geometric sum
\[
\frac{1}{\alpha - 1} \alpha^n - \frac{1}{\alpha - 1} = \frac{\alpha^n - 1}{\alpha - 1} = \sum_{i=0}^{n-1} \alpha^i \qquad (\alpha \ne 1),
\]
with the representation in \cref{d:generable} corresponding to the normal form for exponential polynomials (with the two exponential bases $\alpha$ and $1_K$).
Because possibly $1/(1-\alpha) \not \in A$, this last family cannot always be generated from the other three families.

\begin{example}
    Since $\sum_{i=0}^{n-1} \alpha^i = \alpha\big( \cdots (\alpha (\alpha+1) + 1)\big) + 1$, the geometric sum appears when iterating a copyless update rule of the form $x \mapsto \alpha x + 1$ from the starting value $1$.
\end{example}

Taking $A=K$, the class of $K$-generable EPS admits a more familiar description.

\begin{lemma} \label{l:eps-vs-geneps}
    A sequence $(a_n)_n$ is a $K$-generable EPS if and only if it is an EPS with coefficients and exponential bases in $K$.
\end{lemma}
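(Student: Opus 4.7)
The plan is to establish the two inclusions separately. For the forward direction, I would observe that each of the four generator families in \cref{d:generable} is manifestly an EPS with coefficients and exponential bases in $K$: the constant $(\alpha)_n$ is $\alpha \cdot 1_K^n$; the sequence $(n \cdot 1_K)_n$ is $q(n)\cdot 1_K^n$ with $q(x)=x$; the exponential $(\alpha^n)_n$ is $1_K\cdot\alpha^n$; and the geometric-sum family is already written in its exponential-polynomial form with bases $\alpha$ and $1_K$. I would then close the class under the two operations: pointwise sum is trivial, and pointwise product gives $\bigl(\sum_i p_i(n)\lambda_i^n\bigr)\bigl(\sum_j r_j(n)\mu_j^n\bigr) = \sum_{i,j} p_i(n)r_j(n)(\lambda_i\mu_j)^n$, where $\lambda_i\mu_j \in K$ and $p_ir_j \in K[x]$.

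For the backward direction, I write an arbitrary EPS with coefficients and bases in $K$ in expanded form $a_n = \sum_{i,j}\alpha_{i,j}\,n^j\,\lambda_i^n$ with $\alpha_{i,j}\in K$ and $\lambda_i\in K$. By the closure under pointwise sums, it suffices to generate each summand. I would express $(n^j\cdot 1_K)_n$ as the $j$-fold pointwise product of $(n\cdot 1_K)_n$ (with $j=0$ recovered as the constant $(1_K)_n$), combine it with the constant sequence $(\alpha_{i,j})_n$ (allowed because $\alpha_{i,j}\in K = A$) and the exponential $(\lambda_i^n)_n$, and take their pointwise product to obtain $(\alpha_{i,j}n^j\lambda_i^n)_n$. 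Summing these gives the desired sequence.

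I would also point out that the fourth generator family in \cref{d:generable} is redundant when $A=K$: for $1\neq\alpha\in K$, the scalar $\tfrac{1}{\alpha-1}$ lies in $K=A$, so the geometric-sum sequence is recovered as a constant times $(\alpha^n)_n$ plus a constant sequence, using only the first three families. This family is included in \cref{d:generable} precisely because it is indispensable once $A$ is restricted to a proper subsemiring of $K$ that is not closed under $\alpha\mapsto\tfrac{1}{\alpha-1}$, which is the setting relevant for \cref{theorem:CCRA}.

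There is no substantive obstacle: the lemma amounts to matching an ad-hoc generator set against the standard exponential-polynomial presentation, and the only bookkeeping is to track that bases remain in $K$ under multiplication and that polynomial coefficients in $K[x]$ are closed under sum and product. The more delicate analogue, where $A$ is a finitely generated subsemiring and the fourth family genuinely contributes, is what I would expect to require real work later on.
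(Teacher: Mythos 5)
Your proposal is correct and follows essentially the same route as the paper's proof: verify the four generator families directly and use closure of $K$-coefficient, $K$-base EPS under pointwise sums and products for one direction, and decompose an arbitrary such EPS into monomial building blocks $(\alpha_{i,j})_n$, $(n^j)_n$, $(\lambda_i^n)_n$ for the other. Your added remark that the fourth family is redundant when $A=K$ but not for proper subsemirings matches the paper's own comment following \cref{d:generable}.
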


\begin{proof}
    We first check that every $K$-generable EPS is indeed an EPS.
    Since EPS with coefficients and exponential bases in $K$ are closed under products and sums, it suffices to verify the claimed property for the families in \cref{d:generable}.
    However, each of these families is obviously an EPS and the only exponential bases that appear are $1_K$ and $\alpha \in K$.

    Conversely, suppose that $(a_n)_n$ has a representation $a_n = \sum_{\lambda \in K^\times} \sum_{i \ge 0} \alpha_{\lambda,i} n^i \lambda^n$ with $\alpha_{\lambda,i} \in K$ (only finitely many of which are nonzero).
    Each of $(\alpha_{\lambda,i})_n$, $(n^i)_n$ and $(\lambda^n)_n$ is clearly a $K$-generable EPS, and so is therefore $(a_n)_n$.
\end{proof}

Recall that the exponential bases being contained in $K$ is a nontrivial restriction on an EPS.
In general, these will be contained in the algebraic closure $\overline K$.
In particular, every $A$-generable EPS is trivially a $K$-generable EPS, and hence by \cref{l:eps-vs-geneps} an EPS (in the sense discussed in \cref{sec:preliminaries}), so that our terminology is consistent.
Working with, possibly proper, subsets $A \subseteq K$ will be crucial to obtain a pumping-like criterion that is strong enough to differentiate between CCRA and polynomially ambiguous WFA.

We need a final definition before stating our main theorem of the section.

\begin{definition}
\label{definition:RCCRA}
Let $R \subseteq K$ be a subsemiring.
An \emph{$R$-CCRA} is a CCRA with all of its initial register values, output expression and transition coefficients in $R$.
\end{definition}

We will now exploit Pumping Sequence Families, the main tool introduced in this paper (recall \cref{d:psf}).

\begin{theorem}
\label{theorem:CCRA}
If $R \subseteq K$ is a subsemiring and $f\colon \Sigma^* \to K$ is recognised by an $R$-CCRA, then there exists $m \ge 1$ such that, 
\begin{itemize}
\item for every $g \in \PSF(f)$, the sequence $(h(n))_n = (g(m(n+1)))_n$ is an $R$-generable EPS, and
\item if the characteristic of $K$ is $0$ and $q$ is the exponential polynomial representing $h$, then for every $k \in \N$ the sum of $k$-degree coefficients $S_k(q)$ is in $R$.
\end{itemize}
\end{theorem}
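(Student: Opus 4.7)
Fix an $R$-CCRA $(Q, q_0, d, \Sigma, \delta, \mu, \nu)$ computing $f$. The plan is to pick $m$ depending only on $|Q|$ and $d$, large and divisible enough, and then factor
\[
h(n) \;=\; f(u w^{m(n+1)} v) \;=\; \Psi(\Phi^n(\mu''))
\]
for an appropriate copyless $R$-polynomial map $\Phi$, a polynomial $\Psi$ with $R$-coefficients, and a starting value $\mu'' \in R^d$. Concretely: because $Q$ is finite and every copyless polynomial map on $d$ registers has a dependency graph that is a functional graph (each input variable appears in at most one component, so out-degrees are $\leq 1$), I take, for example, $m = 2\cdot(|Q|+d)!$. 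With this choice, for any $u, w$ the DFA trajectory $(q_{u w^k})_k$ has entered its cycle by step $m$ and is periodic of period dividing $m$. Writing $q^* \coloneqq q_{u w^m}$, let $\Phi$ be the copyless $R$-polynomial map for reading $w^m$ from $q^*$, let $\mu'' \in R^d$ be the register vector reached after reading $u w^m$ from $q_0$ starting from $\mu$, and let $\Psi$ be the composition of $\nu$ at the state $q_{u w^m v}$ with the map for reading $v$ from $q^*$; then $h(n) = \Psi(\Phi^n(\mu''))$. The large divisibility of $m$ ensures that $\Phi$'s dependency graph is \emph{stable}: every register is either a self-loop (\emph{cyclic}) or has constant $R$-valued output $\Phi(x)_j = c_j$ (\emph{transient}).

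\textbf{Iteration.} I analyse $\Phi^n(\mu'')$ coordinate-wise. My choice of $m$ also guarantees that $\mu''_j = c_j$ for every transient $j$; intuitively, after reading $u w^m$ the transient registers have already stabilised to their fixed-point values. For a cyclic register $i$, copylessness forces $\Phi(x)_i$ to be multilinear in $x_i$ together with its transient in-neighbours $x_{j_1}, \ldots, x_{j_s}$, hence in particular linear in $x_i$. Substituting the constants $c_{j_\ell}$ collapses this to an affine recurrence $\Phi^{n+1}(\mu'')_i = \bar\alpha_i \Phi^n(\mu'')_i + \bar\beta_i$ with scalars $\bar\alpha_i, \bar\beta_i \in R$, valid for all $n \geq 0$. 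Its closed-form solution
\[
\Phi^n(\mu'')_i \;=\; \mu''_i \cdot \bar\alpha_i^n + \bar\beta_i \cdot \frac{\bar\alpha_i^n - 1}{\bar\alpha_i - 1}
\quad (\bar\alpha_i \neq 1), \qquad \text{or} \qquad \mu''_i + n \bar\beta_i \quad (\bar\alpha_i = 1),
\]
is directly assembled from the four generator families of \cref{d:generable}, so is $R$-generable; transient coordinates give constant sequences in $R$. Applying the $R$-polynomial $\Psi$ and invoking closure of $R$-generable EPS under $+$ and $\cdot$ shows that $h(n)$ is $R$-generable.

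\textbf{Coefficient homomorphism.} For the second bullet, assume $\chr(K) = 0$, which guarantees uniqueness of exponential polynomial representations. Define $T\colon \{\text{EPS}\} \to K[x]$ by $T\bigl(\sum_i p_i(x) \lambda_i^x\bigr) \coloneqq \sum_i p_i(x)$. Additivity is immediate and multiplicativity follows from $(p(x)\lambda^x)(q(x)\mu^x) = p(x)q(x)(\lambda\mu)^x$ after collecting by exponential base, so $T$ is a ring homomorphism (with respect to pointwise $+$ and $\cdot$). Each generator of \cref{d:generable} maps into $R[x]$: constants $\alpha \in R$ to $\alpha$, $(n)_n$ to $x$, $(\alpha^n)_n$ to $1$, and the geometric-sum generator $\bigl(\tfrac{\alpha^n-1}{\alpha-1}\bigr)_n = \tfrac{1}{\alpha-1}\alpha^n + \bigl(-\tfrac{1}{\alpha-1}\bigr)\cdot 1^n$ to $\tfrac{1}{\alpha-1} - \tfrac{1}{\alpha-1} = 0$. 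Since $R[x]$ is closed under $+$ and $\cdot$ in $K[x]$, $T(h) \in R[x]$ for every $R$-generable EPS $h$, and $S_k(h)$ is just the degree-$k$ coefficient of $T(h)$, hence lies in $R$.

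\textbf{Main obstacle.} The delicate part is extracting a single $m$ that works uniformly for all $(u, w, v)$, both stabilising $\Phi$'s dependency graph (so that the cyclic/transient dichotomy is well-defined) and placing $\mu''$ already in the transient steady state (so that $n = 0$ is not an anomalous value demanding a separate formula). There are only finitely many possible dependency-graph shapes on $d$ registers and DFA cycles of length at most $|Q|$, so such a uniform $m$ exists by a standard LCM/size argument; but verifying the ``no anomaly at $n = 0$'' condition requires carefully comparing the register flow along the trajectory from $q_u$ with that along the cycle from $q^*$, and this is the most technical step. The remainder is then routine bookkeeping with the four generator families.
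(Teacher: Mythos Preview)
Your proof is correct and rests on the same core ideas as the paper: iterate enough to stabilise both the DFA trajectory and the register dependency graph, so that each surviving register obeys an affine recurrence $x \mapsto \bar\alpha x + \bar\beta$ with $\bar\alpha,\bar\beta \in R$, and then propagate the coefficient-sum property from the four generator families through sums and products. The paper organises this as a chain of reductions through auxiliary CCRAs (simple $\to$ single-state single-letter $\to$ single-letter $\to$ general, Lemmas~\ref{lemma:ccra_simple}--\ref{lemma:general_ccra_generable}); in particular it handles the possible two-state shape (an initial state feeding into the loop state) by a register-duplication trick that inflates the register count to $4r+2$, which is why $(4r+2)!$ appears in the paper's $m$. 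You instead argue directly that after reading $uw^m$ the transient coordinates of $\mu''$ already agree with the constants $c_j$ of $\Phi$ --- your ``no anomaly at $n=0$'' step --- which bypasses that trick, avoids the auxiliary automata, and yields a smaller $m$. Your ring homomorphism $T\colon \text{EPS} \to K[x]$ is precisely the map implicit in the paper's \cref{lemma:coeff_sum} together with \cref{l:kcoeffsum-product}; packaging it as a homomorphism makes the closure argument a one-liner.
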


The sum of $k$-degree coefficients $S_k(q)$ is obtained by summing all the coefficients of $x^k$ in $q$ across all the exponential bases (see Appendix~\ref{subsec:appendix-exppoly} for a detailed discussion).
The characteristic condition in the second property can be removed, leading to a slightly weaker result which is discussed in Appendix~\ref{ssec:ccra-positive-char}.

It is obvious that, for any input, the output of an $R$-CCRA is in $R$. However, this is different from the property in \cref{theorem:CCRA} --- we make a claim about the coefficients of the exponential polynomial, not the values that it takes. 
The individual coefficients do not need to always lie in $R$, as the following example illustrates.

Observe that we can assume $R$ is finitely generated -- by the initial register values, output expression and transition coefficients of the CCRA.

\begin{example} \label{example:basic_ccra}
\begin{itemize}
\item Consider the left $\Z$-CCRA in \autoref{fig:basic_ccra}. On words of the form $a^{n+1}$, this CCRA outputs $q(n)=\A(n+1)=\tfrac{3^{n+1}-1}{2} = \tfrac{3}{2} \cdot 3^n - \tfrac{1}{2} \cdot 1^n$.
Even though the automaton itself only uses integer coefficients, a denominator $2$ appears in the coefficients of $q$.
However, the sum of the coefficients is $S_0(q)=\tfrac{3}{2} - \tfrac{1}{2} = 1$, an integer.

\item The second $\Z$-CCRA in \autoref{fig:basic_ccra} outputs
\[
\B(a^{n+1})=\frac{5^{n+2} - 1}{4}(n+7) + 3^{n+1} = \big(\tfrac{25}{4}n + \tfrac{175}{4}\big) \cdot 5^{n} + 3 \cdot 3^n + \big(-\tfrac{1}{4}n - \tfrac{7}{4} \big) \cdot 1^n \eqqcolon q(n).
\]
Here $S_1(q) = \tfrac{25}{4} - \tfrac{1}{4} = 6 \in \Z$ and $S_0(q) = \tfrac{175}{4} + 3 - \tfrac{7}{4} = 45 \in \Z$.
\end{itemize}
\end{example}
\begin{figure}
\centering
\begin{minipage}[t]{.6\textwidth}
    \captionsetup{width=0.9\linewidth}
    \centering

    \begin{tikzpicture}
    \node[state] (p0) {};
    \node[left = 1.25cm of p0] (pin) {};
    \node[right = 0.75cm of p0] (pout) {};
        
    \path
    (pin) edge[->] node[above] {$x\coloneqq 0$} (p0)
    (p0) edge[->] [loop above] node[above] {$x\coloneqq 3x+1$} (p0)
    (p0) edge[->] node[above] {$x$} (pout);

    \node[state, right = 2cm of pout] (q0) {};
    \node[left = 1.5cm of q0] (qin) {};
    \node[right = 1.25cm of q0] (qout) {};
        
    \path
    (qin) edge[->] node[above] {$x, z\coloneqq 1$} node[below] {$y\coloneqq 6$} (q0)
    (q0) edge[->] [loop above] node[above, align=center] {$x\coloneqq 5x+1$, $y \coloneqq y+1$,\\ $z \coloneqq 3z$} (q0)
    (q0) edge[->] node[above] {$xy+z$} (qout);
    \end{tikzpicture}
            
    \captionof{figure}{Two simple single-state CCRAs on a single-letter alphabet (\cref{example:basic_ccra}).}
    \label{fig:basic_ccra}
\end{minipage}%

\end{figure}

Before proving \cref{theorem:CCRA}, we demonstrate how it can be applied.
We use it to show that not every function recognisable by a polynomially ambiguous WFA can be recognised by a CCRA.

\begin{figure}
    \begin{minipage}[t]{.42\textwidth}
        \captionsetup{width=0.9\linewidth}
       
        \centering
        \begin{tikzpicture} 
        \node[state] (p0) {};
        \node[left = 0.4cm of p0] (pin) {};
        \node[state, right = 0.9cm of p0] (p1) {};
        \node[state, right = 0.9cm of p1] (p2) {};
        \node[right = 0.4cm of p2] (pout) {};
            
        \path
        (pin) edge[->] node[above] {$1$} (p0)
        (p0) edge[->] [loop above] node[above,align=center] {$a \mid 2$\\$b \mid 1$} (p0)
        (p0) edge[->] node[above] {$a \mid 2$} (p1)
        (p1) edge[->] [loop above] node[above,align=center] {$a \mid 2$\\$b \mid 1$} (p1)
        (p1) edge[->] node[above] {$b \mid 1$} (p2)
        (p2) edge[->] [loop above] node[above,align=center] {$a \mid 1$\\$b \mid 1$} (p2)
        (p2) edge[->] node[above] {$1$} (pout);
        \end{tikzpicture}  
            
        \captionof{figure}{A polynomially ambiguous weighted automaton with no equivalent CCRA (\cref{exm:wfa-not-ccra}).}
        \label{fig:polyamb}    
    \end{minipage}
    \begin{minipage}[t]{.57\textwidth}
        \captionsetup{width=0.9\linewidth}
        \centering
    
        \begin{tikzpicture}
            \node[state] (p0) {};
            \node[left = 1.75cm of p0] (pin) {};
            \node[right = 1.25cm of p0] (pout) {};
                
            \path
            (pin) edge[->] node[above] {$x,y,z\coloneqq 1$} (p0)
            (p0) edge[->] [loop above] node[above,align=center] {$x \coloneqq 2x + y$, $y \coloneqq 4$,\\ $z \coloneqq \frac{z}{2}+1$} (p0)
            (p0) edge[->] node[above] {$x+z$} (pout);
        
            \node[rectangle, draw=red, thick, minimum size=1.5em,
                  right = 1.75cm of p0, yshift = 1.5cm] (ry) {$y$};
            \node[rectangle, draw=blue, thick, minimum size=1.5em,
                  right = 0.75cm of ry] (rx) {$x$};
            \node[rectangle, draw=blue, thick, minimum size=1.5em,
                  below = 0.75cm of rx] (rz) {$z$};
        
            \path
            (ry) edge[->] (rx)
            (rx) edge[->] [loop above] (rx)
            (rz) edge[->] [loop above] (rz);
        \end{tikzpicture}    
       
        \captionof{figure}{A simple CCRA and its variable flow graph. Red nodes are constant registers; blue nodes are updating ones (\autoref{definition:simple}).}
        \label{fig:simple}    
    \end{minipage}
\end{figure}
    
\begin{example} \label{exm:wfa-not-ccra}    
    The automaton in \autoref{fig:polyamb} is polynomially ambiguous.
    Let $f\colon \{a,b\}^* \to \Q$ be the function associated with the automaton and, for the sake of contradiction, assume that $f$ can be recognised by a CCRA. 
    This yields a finitely generated subsemiring $R$ and a natural number $m$ such that for all $g \in \PSF(f)$, the sequence $(h(n))_n \coloneqq (g((n+1)m))_n$ meets the conditions from \autoref{theorem:CCRA}. Consider, for any $k$, $g_k \coloneqq \hat{f}(\varepsilon, a^kb, \varepsilon) \in \PSF(f)$. 
    By grouping the paths based on which $b$ is used to transition between the second and third state, we get
    \[
    h_k(n) \coloneqq g_k((n+1)m) = k2^k + 2k2^{2k} + \dots + {m(n+1)}k2^{{m(n+1)}k} = \sum_{j=1}^{m(n+1)} jk 2^{jk}.
    \]
    Using the identity $\sum_{j=1}^l j x^j = \tfrac{l x^{l+2} - (l+1) x^{l+1} + x}{(x-1)^2}$, which can be derived from the geometric sum $\sum_{j=1}^l x^j = \tfrac{x^{l+1}-x}{x-1}$ by formal differentiation and some easy manipulations, one finds
    \[
    h_k(n) =  q_1(n) \cdot (2^{km})^n + q_2(n) \cdot 1^n,
    \]
    with 
    \[
    q_1(n) =
    \frac{ km 2^{km + k}}{(2^k-1)} n 
    + \frac{k(m 2^{k}  - m - 1)2^{km+k}}{(2^k-1)^2} \quad\text{and}\quad q_2(n)=\frac{k2^k}{(2^k-1)^2}.
    \]

    We have $S_1(h_k) = km 2^{km+k}/(2^k-1)$.
    Only a finite set of prime numbers can appear among denominators of elements of $R$ and only finitely many primes divide $2m$, we can thus take a prime $p$ that fulfills neither of these conditions. 
    We can also now fix $k = p-1$.
    We get
    \[
    S_1(h_{p-1})=\frac{(p-1)m2^{(p-1)m + p - 1}}{2^{p-1} - 1}.
    \]
    Since $p \nmid 2m$, the numerator is not divisible by $p$. 
    However, by Fermat's Little Theorem, the denominator is.
    As we have assumed that $p$ does not appear in the denominator of any element of $R$, this means $S_1(h_{p-1}) \not\in R$, contradicting the statement of \autoref{theorem:CCRA}.
\end{example}

We record the conclusion as a theorem.

\begin{theorem}
    If $\card{\Sigma} \ge 2$, then there exist functions $f \colon \Sigma^* \to \Q$ that are recognisable by a polynomially ambiguous weighted automaton, but not by a $\Q$-CCRA.
\end{theorem}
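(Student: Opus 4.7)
The theorem is essentially a restatement of the conclusion of \cref{exm:wfa-not-ccra}, which already exhibits such a function in the case $\Sigma = \{a, b\}$. My plan is therefore short: first observe that the example directly settles $|\Sigma| = 2$, and then lift the witness to any larger alphabet by trivially extending both the function and its polynomially ambiguous automaton so that the extra letters play no role.

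For $|\Sigma| \ge 2$, fix two distinct letters $a, b \in \Sigma$, and let $f_0 \colon \{a,b\}^* \to \Q$ be the function from \cref{exm:wfa-not-ccra}. Define $f\colon \Sigma^* \to \Q$ by $f(w) = f_0(w)$ if $w \in \{a,b\}^*$, and $f(w) = 0$ otherwise. Polynomial ambiguity of $f$ follows by taking the automaton of \cref{fig:polyamb} and simply omitting all transitions on letters outside $\{a,b\}$: words using such a letter then have no accepting run at all, while the polynomial bound on the number of runs on $\{a,b\}^*$-inputs is inherited verbatim from \cref{fig:polyamb}.

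For the lower bound, suppose for contradiction that some $\Q$-CCRA $\C$ recognises $f$. Restricting the transition function of $\C$ to the sub-alphabet $\{a,b\}$ (leaving the registers, initial values and output expression untouched) yields a $\Q$-CCRA over $\{a,b\}$ whose output on any $w \in \{a,b\}^*$ coincides with $\C(w) = f(w) = f_0(w)$. This contradicts the conclusion of \cref{exm:wfa-not-ccra}, which rules out any $\Q$-CCRA computing $f_0$.

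The genuine difficulty has already been handled inside \cref{exm:wfa-not-ccra} itself: applying \cref{theorem:CCRA} to extract a finitely generated subsemiring $R$ and a modulus $m$, then choosing a prime $p$ avoiding both the finitely many primes that appear in denominators of $R$ and those dividing $2m$, and finally using Fermat's Little Theorem to show that $S_1(h_{p-1}) \notin R$. The present step is purely bookkeeping to accommodate larger alphabets, and I anticipate no obstacle beyond being precise about what ``restricting a CCRA to a sub-alphabet'' means.
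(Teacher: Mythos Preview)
Your proposal is correct and matches the paper's own proof, which is simply ``By \cref{exm:wfa-not-ccra}.'' You are in fact slightly more careful than the paper, since you spell out the (obvious) extension to alphabets with $\card{\Sigma} > 2$, which the paper leaves implicit.
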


\begin{proof}
    By \cref{exm:wfa-not-ccra}.
\end{proof}

\subsection{The Proof of \autoref{theorem:CCRA}}

The proof of \cref{theorem:CCRA} proceeds in several steps.
We start with a very simple case and then, in each step, use the previous result to show a slightly more general one.

\begin{definition}
\label{definition:simple}
A single letter, single state CCRA is \emph{simple} if, in the transition, every register value is either set to a constant \textup(\emph{constant registers}\textup), or depends only on its old value and on the values of constant registers \textup(\emph{updating registers}\textup).
\end{definition}

Since there is only one state, there is also only one transition, so the definition makes sense.
We can visualise constant and updating registers with a graph representing the dependency of register values on each other (\autoref{fig:simple}).

\begin{lemma}
\label{lemma:ccra_simple}
If $\A$ is a single state, single letter simple $R$-CCRA, then $(\A( a^{n+1}))_n$ is an $R$-generable EPS.
\end{lemma}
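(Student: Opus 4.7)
The plan is to exploit the copyless condition to show that every register of a simple CCRA evolves as an $R$-generable EPS, and then to assemble the output via the output polynomial using closure of the class under pointwise sums and products.

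First, I would observe that since there is only one transition, every constant register $c_i$ is repeatedly reset to the same constant $\gamma_i \in R$, so $c_i^{(n)} = \gamma_i$ for all $n \ge 1$. The key structural step is to argue that the update polynomial $p_x$ of each updating register $x$ is affine in $x$. By the simple assumption, $p_x$ depends only on $x$ and the constant registers; by copylessness the variable $x$ occurs at most once in the entire polynomial map, and hence at most once in $p_x$. A polynomial built from sums and products in which $x$ appears at most once has degree at most $1$ in $x$, so $p_x = \alpha(c)\, x + \beta(c)$. Evaluating at $c_i = \gamma_i$ yields coefficients $\alpha, \beta \in R$, and the starting value $x^{(1)} \in R$ because the first transition substitutes the initial values from $\mu$ (all in $R$) into a polynomial with coefficients in $R$.

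Once this affine structure is available, solving the recurrence $x^{(n+1)} = \alpha\, x^{(n)} + \beta$ for $n \ge 1$ is routine: the solution is $x^{(n+1)} = x^{(1)} + n\beta$ when $\alpha = 1$, and $x^{(n+1)} = x^{(1)} \alpha^n + \beta \cdot \tfrac{\alpha^n - 1}{\alpha - 1}$ otherwise. In both cases $(x^{(n+1)})_n$ matches a combination of the four base families in \cref{d:generable} with all parameters in $R$, hence is an $R$-generable EPS. The output $\A(a^{n+1})$ is the value of the copyless polynomial $\nu(q_0)$, with coefficients in $R$, applied to the register values at time $n+1$. Since constant-register sequences are constant sequences in $R$ and updating-register sequences are $R$-generable EPS, and this class is closed under pointwise sums and products, the conclusion follows.

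The main obstacle is the structural observation that copylessness forces the update of an updating register to be affine in that register, with coefficients that lie in $R$ once the (eventually constant) values of the constant registers have been substituted. After that, the remainder of the argument is algebraic bookkeeping: a standard geometric-sum solution of an affine recurrence, matched against the four generating families of \cref{d:generable}, followed by a closure-properties argument for the output step.
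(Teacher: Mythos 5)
Your proof is correct and follows essentially the same route as the paper's: observe that constant registers stabilise after the first step, reduce each updating register's recurrence to the affine form $x \mapsto \alpha x + \beta$ with $\alpha,\beta \in R$, solve it against the four generating families of \cref{d:generable}, and conclude via closure under pointwise sums and products for the output expression. Your explicit justification that copylessness forces the update to be affine in $x$ is a detail the paper leaves implicit, but the argument is the same.
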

\begin{proof}
In a simple CCRA, the register values change in very simple ways.
For constant registers, after the first transition, they remain set to the same values.
For updating registers, the first update is special. However, after that, the input they get from the constant registers stabilizes. 
Let us consider what happens from that point on. After the first step, the register values are of course still in $R$. Since the updating registers can only depend on themselves and constants, the update formulas can be reduced to the form $x \coloneqq \alpha x + \beta$ for constants $\alpha$,~$\beta \in R$. This gives us an LRS ($x_{n+1} = \alpha x_n + \beta$).
Solving the LRS, we need to distinguish two cases.
For $\alpha = 1$, the solution is 
\[
x_{n+1} = x_1 + \beta n, \qquad\text{and for $\alpha \neq 1$ it is}\qquad
x_{n+1} = \alpha^{n} \frac{\beta}{\alpha - 1} - \frac{\beta}{\alpha - 1} + \alpha^{n} x_1.
\]
In both cases the sequence $(x_{n+1})_n$ is clearly an $R$-generable EPS.
Constant registers, leading to constant sequences, also clearly are $R$-generable EPS.

The output expression combines these sequences using sums and products of the sequences and additional constants from $R$. 
These operations preserve the property of being an $R$-generable EPS, and so the sequence $(\A(a^{n+1}))_n$ is an $R$-generable EPS.
\end{proof}

In the next two lemmas we will reason about the behaviour of CCRA on cycles. Similar, but different, observations were made in~\cite[Proposition~1 and Lemma~4]{MazowieckiR19}.

\begin{lemma}
\label{lemma:multi_reg}
If $\A$ is a single state, single letter $R$-CCRA \textup(not necessarily simple\textup) with $r$ registers, then $(\A( a^{r!(n+1)}))_n$ is an $R$-generable EPS.
\end{lemma}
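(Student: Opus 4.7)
The plan is to reduce to the simple case of \cref{lemma:ccra_simple} by composing the single transition of $\A$ with itself $r!$ times. Concretely, let $\A'$ be the CCRA with the same register set, initial values, and output function as $\A$, but whose single transition is the $r!$-fold composition of $\A$'s transition polynomial map. Copylessness is preserved under composition of polynomial maps: each old variable occurs at most once in the outer copyless map and is substituted by a single copyless expression, so each variable still appears at most once overall. Hence $\A'$ is a single-state, single-letter $R$-CCRA, and $\A'(a^{n+1}) = \A(a^{r!(n+1)})$. It therefore suffices to verify that $\A'$ is simple in the sense of \cref{definition:simple}; then \cref{lemma:ccra_simple} applied to $\A'$ yields the claim.

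To analyse $\A'$, I would introduce the \emph{register flow graph} of $\A$: the partial function $f$ on the $r$ registers defined by $f(j) = i$ whenever the variable $x_j$ occurs in the update polynomial of register $i$. By copylessness each $x_j$ appears in at most one update, so $f$ is well-defined; viewed as a directed graph, every vertex has out-degree at most $1$. Connected components therefore consist either of a single cycle with in-trees attached, or of a single sink (vertex where $f$ is undefined) with in-trees, and all cycle lengths are bounded by $r$. By the same copylessness bookkeeping, the flow function of the composed transition is exactly $f^{r!}$.

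The core combinatorial observation is that every cycle length divides $r!$ (since cycle lengths lie in $\{1,\dots,r\}$) and every path to a cycle or a sink has length at most $r - 1 < r!$. Consequently $f^{r!}(c) = c$ for every cycle vertex $c$, while for every non-cycle vertex $t$ the iterate $f^{r!}(t)$ either lies on a cycle or is undefined, but is never itself non-cycle. Translating back to $\A'$: the update polynomial of a non-cycle register $i$ involves no variable $x_j$ at all (since no $j$ satisfies $f^{r!}(j) = i$), so it evaluates to a constant in $R$, making $i$ a constant register of $\A'$. The update of a cycle register $c$ involves $x_c$ itself and possibly some $x_t$ with $t$ non-cycle; the latter are constant registers of $\A'$, so $c$ satisfies the updating-register condition of \cref{definition:simple}. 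Hence $\A'$ is simple, and \cref{lemma:ccra_simple} delivers the conclusion. The only real obstacle is this careful bookkeeping with functional graphs; the algebraic and EPS content is entirely inherited from the simple case.
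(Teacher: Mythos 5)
Your proposal is correct and follows essentially the same route as the paper: compose the single transition $r!$ times, observe that copylessness makes the variable flow graph a functional graph (out-degree at most one) whose cycle lengths all divide $r!$, conclude that the composed automaton is simple, and invoke \cref{lemma:ccra_simple}. Your phrasing of the flow graph as a partial function $f$ with composed flow $f^{r!}$ is just a cleaner restatement of the paper's path-of-length-$r!$ argument, so there is nothing substantively different to compare.
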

\begin{proof}
Consider the compound effect on registers of the letter $a$ being applied $r!$ times. We can get the corresponding expressions simply by composing the substitution $r!$ times. They will still of course be copyless and polynomial, meaning we can create an auxiliary CCRA $\B$ with a 1-letter alphabet such that $\B( a^n) = \A( a^{nr!} )$. It is also easy to see, from how substitutions compose, that $\B$ is still an $R$-CCRA.

\begin{figure}
\centering
\begin{tikzpicture}
    \node[rectangle, draw, minimum size=1.5em] (Aq) {$q$};
    \node[rectangle, draw, minimum size=1.5em,right = 0.75cm of Aq] (Ar) {$r$};
    \node[rectangle, draw, minimum size=1.5em,right = 0.75cm of Ar] (Ay) {$y$};
    \node[rectangle, draw, minimum size=1.5em,right = 0.75cm of Ay] (Ax) {$x$};
    \node[rectangle, draw, minimum size=1.5em,right = 0.375cm of Ay.center, yshift = -1cm] (Az) {$z$};

    \path
    (Aq) edge[->] (Ar)
    (Ar) edge[->] (Ay)
    (Ay) edge[->] (Ax)
    (Ax) edge[->] (Az)
    (Az) edge[->] (Ay);

    \node[rectangle, draw=blue, thick, minimum size=1.5em, right = 2cm of Ax] (By) {$y$};
    \node[rectangle, draw=blue, thick, minimum size=1.5em, right = 0.75cm of By] (Bx) {$x$};
    \node[rectangle, draw=blue, thick, minimum size=1.5em, right = 0.75cm of Bx] (Bz) {$z$};

    \node[rectangle, draw=red, thick, minimum size=1.5em, below = 0.75cm of Bx] (Bq) {$q$};
    \node[rectangle, draw=red, thick, minimum size=1.5em, below = 0.75cm of Bz] (Br) {$r$};

    \path
    (Bq) edge[->] (Bx)
    (Br) edge[->] (Bz)
    (By) edge[->] [loop above] (By)
    (Bx) edge[->] [loop above] (Bx)
    (Bz) edge[->] [loop above] (Bz);
\end{tikzpicture}    
\caption{An example for variable flow graphs of $\A$ and $\B$ in the proof of \autoref{lemma:multi_reg}.}
\label{fig:dependency}
\end{figure}

We claim that the new CCRA $\B$ is simple: we will prove this by looking at the variable flow graph of $\A$.
Since $\A$ is copyless, there is at most one outgoing edge from any vertex.
In $\B$ the expression for register $v$ will use $u$ if and only if in the variable flow graph of $\A$ there is a path of length $r!$ from $u$ to $v$. (This is visualised in \autoref{fig:dependency}.)

Consider an arbitrary register $t$. It will either be in a cycle on the graph of $\A$ or not. Assume $t$ is not in a cycle and there is a path of length $r!$ from some register $u$ to $t$. Such a path would have to contain a cycle. However, that is impossible, since each vertex has at most one outgoing edge and $t$ itself is not in a cycle. This means $t$ will be a constant register in the auxiliary automaton. 

Now assume $t$ is in a cycle in the variable flow graph of $\A$, and let $l$ be the length of the cycle.
We want to prove that $t$ is an updating register in $\B$.
Assume there is a path of length $r!$ from some $u$ to $t$.
To show that $t$ is an updating register, we need to show that either $u=t$ or $u$ is a constant register in $\B$.
If $u$ is in the same cycle as $t$, we have $u = t$, since $l \mid r!$.
If $u$ is outside the cycle containing $t$, then $u$ cannot be a part of any cycle, as any vertex can have at most one outgoing edge.
This means that $u$ is a constant register in $\B$.
Thus, the auxiliary CCRA $\B$ is simple, and we can apply \autoref{lemma:ccra_simple} to it, finishing the proof.
\end{proof}

\begin{lemma}
\label{lemma:multi_state}
If $\A$ is a single letter $R$-CCRA \textup(not necessarily single state\textup) with $s$ states and $r$ registers, then $(\A(a^{(4r+2)!s!(n+1)}))_n$ is an $R$-generable EPS.
\end{lemma}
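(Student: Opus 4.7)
The plan is to reduce to the single-state case treated in \cref{lemma:multi_reg} (and ultimately to \cref{lemma:ccra_simple}) by folding the state periodicity into the initial values and single update of an auxiliary CCRA.

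On the single-letter alphabet the underlying DFA of $\A$ is a function $\delta\colon Q \to Q$ on $s$ states, so the orbit of $q_0$ is eventually periodic, with a tail of length at most $s$ and cycle of length $c \le s$; in particular $c \mid s!$. Setting $L \coloneqq (4r+2)!\,s!$ and $q^\star \coloneqq \delta^L(q_0)$, the state $q^\star$ lies on the cycle (since $L \ge s$), and every subsequent block of $L$ letters returns $\A$ to $q^\star$ (since $c \mid L$). I would then construct a single-state, single-letter $R$-CCRA $\B$ on the same $r$ registers whose initial vector is $\A$'s register vector after reading $a^L$, whose single transition is the compound polynomial map $P_\B$ of reading $a^L$ along the cycle from $q^\star$ to $q^\star$, and whose output is $\nu(q^\star)$. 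Composition of copyless polynomial maps preserves copylessness and membership of coefficients in $R$, so $\B$ is an $R$-CCRA with $\B(a^n) = \A(a^{L(n+1)})$ for every $n \ge 0$.

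The key observation is that $\B$ is \emph{simple} in the sense of \cref{definition:simple}. Writing $P_\B = P^{L/c}$, where $P$ is the compound of a single full cycle at $q^\star$, I would invoke the variable-flow-graph argument from the proof of \cref{lemma:multi_reg}: copylessness of $P$ forces its flow graph to have out-degree at most $1$, so all cycles in it have length at most $r$; since $r! \mid (4r+2)! \mid L/c$, composing $P$ with itself $L/c$ times makes every register either constant or dependent only on itself and constants. Applying \cref{lemma:ccra_simple} to the simple CCRA $\B$ then yields that $(\B(a^{n+1}))_n$ is an $R$-generable EPS.

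I expect the main obstacle to lie in passing from this to the sequence $(\B(a^n))_n = (\A(a^{(4r+2)!\,s!\,(n+1)}))_n$ of the statement, which requires extending the closed-form formula from \cref{lemma:ccra_simple} to $n = 0$. The difficulty is that in a simple CCRA the initial values of the constant registers need not a priori coincide with the steady-state values $\gamma_t$ that they attain after one update. However, the generous choice $L = (4r+2)!\,s!$---rather than the minimal $s!\,r!$---guarantees that, even during the initial segment of $\A$'s run, enough full cycles of $P$ take place to drive every constant register of $\B$ to its steady-state value already at time $n=0$; this makes the initial vector of $\B$ compatible with the simple update $P_\B$, so that the EPS formula extends cleanly to the full sequence.
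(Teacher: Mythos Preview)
Your argument is correct and takes a genuinely different route from the paper's. The paper first composes only $s!$ letters, obtaining an auxiliary automaton $\B'$ with $\B'(a^n)=\A(a^{ns!})$; after trimming, $\B'$ has either a single state or an initial state feeding into a single looping state, and the paper handles the two-state shape by an explicit gadget that encodes it into one state at the cost of blowing the register count up to $4r+2$ (this is where the constant $4r+2$ comes from). It then invokes \cref{lemma:multi_reg} as a black box on this $(4r+2)$-register automaton. You instead fold all $L=(4r+2)!\,s!$ letters at once and absorb the first block into the initial vector, landing directly on a single-state automaton with only $r$ registers. This sidesteps the register gadget entirely, but at two costs: you must re-run the flow-graph argument from inside the proof of \cref{lemma:multi_reg} (since $P_\B=P^{L/c}$ with $r!\mid L/c$) rather than invoking the lemma itself, and you must separately argue that the formula from \cref{lemma:ccra_simple} extends to $n=0$. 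Your steady-state sketch for the latter is right but worth stating precisely: the first $L$ steps of $\A$ decompose as the tail of length $\tau$, then $d<c$ further steps along the cycle to first reach $q^\star$, then $(L-\tau-d)/c$ full iterations of $P$; since $\tau+c\le s$ one gets $(L-\tau-d)/c\ge(L-s)/s\ge(4r+2)!-1\ge r$, which is enough iterations of $P$ to drive every register not on a cycle of its flow graph---precisely the constant registers of $P_\B$---to its stable value already in $\mu_\B$. The affine recursion for each updating register then holds from step $0$, and the $R$-generable formula extends as you claim.
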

\begin{proof}
Consider the compound effect on registers of the letter $a$ being applied $s!$ times. We can get the corresponding transitions between states by looking at paths of length $s!$, and corresponding update expressions by composing appropriate $s!$ substitutions. The updates will of course still be copyless and polynomial, and the transitions deterministic, meaning we can create an auxiliary CCRA $\B$ such that $\B( a^{n}) = \A( a^{ns!} )$. Note that the transition expression coefficients will all still be in $R$, so $\B$ is still an $R$-CCRA. Since $\A$ is deterministic, after at most $s$ steps it always reaches a cycle.
This cycle has length at most $s$, and so its length divides $s!$.
This means that, after trimming $\B$, we get an automaton of one of the forms in \autoref{fig:ccra_trimmed}.

\begin{figure}
\centering
\begin{tikzpicture}
\node[state] (p0) {};
\node[left = 1cm of p0] (pin) {};
\node[right = 1cm of p0] (pout) {};

\node[right = 1cm of pout] (qin) {};
\node[state, right = 1cm of qin] (q0) {};
\node[state, right = 2cm of q0] (q1) {};
\node[right = 1cm of q1] (qout) {};
\node[below = 1cm of q0] (q0ut) {};
   
\path
(pin) edge[->] node[below = 0.2cm, align=center] {$x_1 \coloneqq c_1$\\$\dots$\\$x_r \coloneqq c_r$} (p0)
(p0) edge[->] [loop above] node[above, align=center] {$x_1 \coloneqq \varphi_1(x)$\\$\dots$\\$x_r \coloneqq \varphi_r(x)$} (p0)
(p0) edge[->] node[below = 0.1cm] {$\varphi(x)$} (pout)
(qin) edge[->] node[above = 0.2cm, align=center] {$x_1 \coloneqq c_1$\\$\dots$\\$x_r \coloneqq c_r$} (q0)
(q0) edge[->] node[below, align=center] {$x_1 \coloneqq \varphi_1(x)$\\$\dots$\\$x_r \coloneqq \varphi_r(x)$} (q1)
(q1) edge[->] [loop above] node[above, align=center] {$x_1 \coloneqq \varphi'_1(x)$\\$\dots$\\$x_r \coloneqq \varphi'_r(x)$} (q1)
(q1) edge[->] node[below] {$\varphi'(x)$} (qout)
(q0) edge[->] node[left] {$\varphi(x)$} (q0ut)
;
\end{tikzpicture}
\caption{The two possible forms the auxiliary automaton $\B$ can take in the proof of \autoref{lemma:multi_state}.}\label{fig:ccra_trimmed}
\end{figure}

\begin{figure}
\centering
\begin{tikzpicture}
\node[state] (p0) {};
\node[left = 4.7cm of p0] (pin) {};
\node[right = 4.7cm of p0] (pout) {};
   
\path
(pin) edge[->] node[below, align=center] {
\begin{tabular}{ c c c }
 $x_1 \coloneqq c_1$ & $x_1' \coloneqq 0$ & $i_1 \coloneqq 1$ \\ 
 $\dots$ & $\dots$ & $\dots$ \\  
 $x_r \coloneqq c_r$ & $x_r' \coloneqq 0$ & $i_{2r+2} \coloneqq 1$    
\end{tabular}} (p0)
(p0) edge[->] [loop above] node[above, align=center] {
\begin{tabular}{ c c c}
 $x_1 \coloneqq 0$ & $x_1' \coloneqq i_1 \varphi_1(x) + (1-i_2) \varphi'_1(x')$ & $i_1 \coloneqq 0$ \\ 
 $\dots$ & $\dots$ & $\dots$ \\  
 $x_r \coloneqq 0$ & $x_r' \coloneqq i_{2r-1}\varphi_r(x) + (1-i_{2r}) \varphi'_r(x')$ & $i_{2r+2} \coloneqq 0$ 
\end{tabular}} (p0)
(p0) edge[->] node[below] {$i_{2r+1} \varphi(x) + (1-i_{2r+2}) \varphi'(x')$} (pout)
;
\end{tikzpicture}
\caption{How to transform $\B$ into a single state automaton in the proof of \autoref{lemma:multi_state}.}\label{fig:ccra_single_state}
\end{figure}

We want to reduce $\B$ to only one state.
The first possible form already has only one state. 
The second one can easily be simulated with one state, as shown in \autoref{fig:ccra_single_state}.
After this operation, the automaton $\B$ is a single-state $R$-CCRA with $4r+2$ registers such that $\B( a^n ) = \A(a^{ns!} )$. This lets us use \autoref{lemma:multi_reg} and finishes the proof.
\end{proof}

\begin{lemma}
\label{lemma:ccra_many_letters}
If $\A$ is an $R$-CCRA \textup(not necessarily single letter\textup) with $r$ registers and $s$ states, then, for all $w \in \Sigma^*$, the sequence $(\A(w^{(4r+2)!s!(n+1)}))_n$ is an $R$-generable EPS.
\end{lemma}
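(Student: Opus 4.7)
The plan is to reduce the multi-letter case to the single-letter case already handled by \cref{lemma:multi_state}. The idea is to treat the entire word $w$ as if it were a single letter, by pre-composing all its transitions.

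Concretely, given the $R$-CCRA $\A = (Q, q_0, r, \Sigma, \delta, \mu, \nu)$ and a word $w \in \Sigma^*$, I construct a new CCRA $\A'$ on the single-letter alphabet $\{b\}$ with the same set of states $Q$, same initial state $q_0$, same dimension $r$, same initial register vector $\mu$, and same final function $\nu$. For each $q \in Q$, if reading $w$ starting from $q$ in $\A$ leads (deterministically) to state $q_w^{(q)}$ via the composed polynomial map $p_{q,w} \coloneqq p_{q^{(|w|-1)},w_{|w|}} \circ \cdots \circ p_{q^{(1)},w_2} \circ p_{q,w_1}$, then I set the $b$-transition in $\A'$ from $q$ to go to $q_w^{(q)}$ with update $p_{q,w}$. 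Since composition preserves copylessness (as noted just after the definition of CCRA in \cref{sec:preliminaries}) and since composing polynomial maps with coefficients in the semiring $R$ produces polynomial maps whose coefficients are polynomial expressions (sums and products) in the original coefficients, the resulting $\A'$ is again an $R$-CCRA. It has $s$ states and $r$ registers.

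By construction, for every $n \geq 0$ and every word $u$, reading $b^n$ in $\A'$ simulates reading $w^n$ in $\A$, so $\A'(b^n) = \A(w^n)$. In particular,
\[
\A'\bigl(b^{(4r+2)!s!(n+1)}\bigr) = \A\bigl(w^{(4r+2)!s!(n+1)}\bigr).
\]
Applying \cref{lemma:multi_state} to the single-letter $R$-CCRA $\A'$ yields that the sequence $\bigl(\A'(b^{(4r+2)!s!(n+1)})\bigr)_n$ is an $R$-generable EPS, and hence so is $\bigl(\A(w^{(4r+2)!s!(n+1)})\bigr)_n$.

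The only nontrivial point to verify is that $\A'$ really is an $R$-CCRA, that is, that composing the $|w|$ update maps along the unique path in $\A$ starting at a state $q$ under $w$ produces a copyless polynomial map with coefficients in $R$. Copylessness is closed under composition, as already observed. For the coefficients: each $p_{q^{(i)}, w_{i+1}}$ has coefficients in $R$, and substituting a polynomial with $R$-coefficients into another such polynomial again produces a polynomial whose coefficients are sums of products of elements of $R$, hence lie in $R$ (since $R$ is a subsemiring). This is the main technical step, but it is essentially immediate from the semiring closure properties of $R$.
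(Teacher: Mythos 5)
Your proposal is correct and follows essentially the same route as the paper: both reduce to the single-letter case of \cref{lemma:multi_state} by composing the update maps of $\A$ along the word $w$ into a single transition over a one-letter alphabet, noting that copylessness and membership of the coefficients in the subsemiring $R$ are preserved under composition. The extra detail you supply on why the composed coefficients stay in $R$ is a welcome elaboration of what the paper states in one sentence.
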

\begin{proof}
Consider the composite effect of the word $w$ on registers and state transitions. This effect is still copyless, polynomial, deterministic and all the transition coefficients are still in $R$.
We can thus create an auxiliary $R$-CCRA $\B$ such that $\A ( w^n ) = \B ( a^n )$.
The CCRA $\B$ has a one letter alphabet, letting us use \autoref{lemma:multi_state} and finishing the proof.
\end{proof}

\begin{lemma}
\label{lemma:general_ccra_generable}
If $\A$ is an $R$-CCRA with $r$ registers and $s$ states, then, for all $u, w, v \in \Sigma^*$, the sequence $(\A( uw^{(4r+2)!s!(n+1)}v ))_n$ is an $R$-generable EPS.
\end{lemma}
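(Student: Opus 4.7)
The natural plan is to reduce to \cref{lemma:ccra_many_letters} by absorbing $u$ into the initial configuration and $v$ into the output expression. Concretely, I would construct an auxiliary one-letter $R$-CCRA $\B = (Q, q_u, r, \{a\}, \delta', \mu', \nu')$ satisfying $\B(a^n) = \A(u w^n v)$ for all $n \ge 0$, and then appeal to \cref{lemma:ccra_many_letters} applied to $\B$.

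The construction is as follows. Let $q_u$ denote the state reached by $\A$ after reading $u$, and set the initial register vector $\mu' \coloneqq p_u(\mu) \in K^r$, using the convention that $p_\varepsilon$ is the identity. For each state $q \in Q$, define the transition $\delta'(q,a) \coloneqq (q', p_{q,w})$, where $q'$ is the state reached from $q$ in $\A$ after reading $w$, and $p_{q,w}$ is the composition of the polynomial maps encountered on that path (again the identity if $w = \varepsilon$). Finally, let the output be $\nu'(q) \coloneqq \nu(q_{q,v}) \circ p_{q,v}$, where $q_{q,v}$ is the state reached from $q$ upon reading $v$. Unfolding the definitions shows that $\B(a^n)$ equals $\A(uw^n v)$, because starting from $\mu'$ (which already encodes the effect of $u$) and iterating the composite map $p_{\cdot,w}$ exactly $n$ times tracks the register values of $\A$ after $u w^n$, and $\nu'$ then applies the remaining $v$-transition and the original output.

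The two things to verify are that $\B$ is copyless and that $\B$ is an $R$-CCRA. Both reduce to the observation, already used implicitly in earlier lemmas, that copyless polynomial maps are closed under composition, and that polynomial maps with coefficients in the semiring $R$ are also closed under composition and under evaluation on $R$-vectors. Applying this to the compositions defining $\mu'$, $\delta'$, and $\nu'$ gives both properties simultaneously.

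With $\B$ in hand, \cref{lemma:ccra_many_letters}, applied to the (single) letter $a$, yields that $(\B(a^{(4r'+2)!s'!(n+1)}))_n$ is an $R$-generable EPS, where $r' \le r$ is the number of registers and $s' \le s$ the number of states of $\B$. Since $(4r'+2)!s'! \mid (4r+2)!s!$, writing $(4r+2)!s! = k \cdot (4r'+2)!s'!$ expresses $(\A(uw^{(4r+2)!s!(n+1)}v))_n = (\B(a^{(4r+2)!s!(n+1)}))_n$ as the subsequence indexed by $m = k(n+1) - 1$ of the $R$-generable EPS in $m$. The only nontrivial residual step, which I expect to be the main technical nuisance rather than a real obstacle, is to check that $R$-generable EPS are closed under linear subsampling $m \mapsto kn + (k-1)$: constants and the linear sequence are preserved trivially; an exponential $(\alpha^m)_m$ becomes $\alpha^{k-1} (\alpha^k)^n$, which is $R$-generable since $\alpha^{k-1}, \alpha^k \in R$; and the geometric-sum family decomposes via $\tfrac{\alpha^{kn+k-1}-1}{\alpha-1} = \alpha^{k-1} \cdot \tfrac{\alpha^k-1}{\alpha-1} \cdot \tfrac{(\alpha^k)^n-1}{\alpha^k-1} + \tfrac{\alpha^{k-1}-1}{\alpha-1}$, where every numerical factor $\tfrac{\alpha^j-1}{\alpha-1} = 1 + \alpha + \dots + \alpha^{j-1}$ lies in $R$ because $R$ is a semiring. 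Products and sums of $R$-generable EPS are handled by the inductive structure of \cref{d:generable}.
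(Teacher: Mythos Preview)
Your proof is correct, and the core idea---absorb $u$ into the initial configuration and $v$ into the output function---is exactly what the paper does. However, you take two unnecessary detours.

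First, you reduce to a single-letter automaton by collapsing $w$ into one transition. This is redundant: \cref{lemma:ccra_many_letters} already allows an arbitrary word $w$ over the original alphabet and performs this reduction internally. The paper simply builds $\B$ over $\Sigma$ with initial state $q_u$, initial registers $p_u(\mu)$, and output $\nu'(q) = \nu(q_{q,v}) \circ p_{q,v}$, then applies \cref{lemma:ccra_many_letters} to $\B$ with the word $w$.

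Second, your own construction has $\B = (Q, q_u, r, \{a\}, \delta', \mu', \nu')$, so $r' = r$ and $s' = s$ exactly. Thus $(4r'+2)!s'! = (4r+2)!s!$, and the entire subsampling argument---while correct, including the nice decomposition of the geometric-sum family---is never invoked. You could delete it without loss.
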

\begin{proof}
Adding some prefix $u$ simply changes the initial register values.
The register values are of course still in $R$.
Adding a suffix $v$ simply changes the output expression. 
Its coefficients are of course still in $R$.
Thus, we obtain an $R$-CCRA $\B$ with $\B(x) = \A(uxv)$ for all words $x \in \Sigma^*$.
By \autoref{lemma:ccra_many_letters}, the sequence $(\A(uw^nv))_n = (\B(w^n))_n$ is an $R$-generable EPS.
\end{proof}

We also need the next lemma which is proven in Appendix~\ref{sec:additional-proofs}.

\begin{restatable}[]{lemma}{coeffsum}
\label{lemma:coeff_sum}
If $R \subseteq K$ is a subsemiring, $\chr K = 0$, $(a_n)_n$ is an $R$-generable EPS and $q$ is the exponential polynomial representing $(a_n)_n$, the sum of $k$-degree coefficients of $q$ is in $R$.
\end{restatable}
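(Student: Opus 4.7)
The plan is to proceed by structural induction on how $(a_n)_n$ was built up from the four basic families of \cref{d:generable}, exploiting that the functional $S_k$ is additive under pointwise sums and satisfies a Cauchy-convolution identity under pointwise products. The characteristic-zero hypothesis enters only to guarantee uniqueness of the canonical exponential-polynomial representation, so that $S_k(q)$ is unambiguously defined from $(a_n)_n$.

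For the base cases one computes $S_k$ directly on each generator. The constant sequence $(\alpha)_n$ is represented by $\alpha \cdot 1^n$, giving $S_0 = \alpha \in R$ and $S_k = 0$ for $k \ge 1$; similarly $(n \cdot 1_K)_n$ gives $S_1 = 1_K \in R$ and all other $S_k = 0$; and $(\alpha^n)_n$ gives $S_0 = 1_K \in R$ and $S_k = 0$ for $k \ge 1$. The crucial case is the geometric-sum family $\bigl(\tfrac{1}{\alpha-1}\alpha^n - \tfrac{1}{\alpha-1}\bigr)_n$: the two individual coefficients $\pm\tfrac{1}{\alpha-1}$ need not lie in $R$, yet they cancel in the sum to give $S_0 = 0 \in R$ and $S_k = 0$ for $k \ge 1$. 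This cancellation is precisely what justifies listing this family as a separate generator.

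For the inductive step, addition of EPS corresponds to termwise addition of coefficients, so $S_k(q_{b+c}) = S_k(q_b) + S_k(q_c)$, which lies in $R$ by the induction hypothesis and closure of $R$ under addition. For multiplication, writing $b_n = \sum_i p_i(n) \lambda_i^n$ and $c_n = \sum_j r_j(n) \mu_j^n$, we obtain $b_n c_n = \sum_{i,j} p_i(n) r_j(n) (\lambda_i \mu_j)^n$, and summing $k$-degree coefficients across all bases yields
\[
S_k(q_{bc}) \;=\; \sum_{i,j} [x^k]\, p_i(x)\, r_j(x) \;=\; \sum_{l+m=k} S_l(q_b) \cdot S_m(q_c),
\]
which lies in $R$ by the induction hypothesis together with closure of $R$ under multiplication and addition.

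The main subtlety is that the convolution identity operates at the level of the raw expression $\sum_{i,j} p_i(n) r_j(n) (\lambda_i \mu_j)^n$, in which distinct index pairs may collide on the same exponential base. One needs to verify that regrouping these collisions into the canonical representation does not alter $S_k$, but this is immediate from additivity of polynomial coefficient extraction. Uniqueness in characteristic $0$ then ensures the $S_k$ obtained in this way really equals the one computed from the canonical exponential polynomial $q$. Apart from this bookkeeping step, the induction is entirely mechanical.
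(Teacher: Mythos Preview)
Your proof is correct and follows essentially the same approach as the paper: structural induction on the generation of the EPS, verifying the four base families directly (with the key cancellation in the geometric-sum case), then propagating through sums via additivity and through products via the Cauchy-convolution identity $S_k(qq')=\sum_{j=0}^k S_j(q)S_{k-j}(q')$, which the paper isolates as \cref{l:kcoeffsum-product}. Your explicit handling of the base-collision bookkeeping and the role of characteristic zero in guaranteeing uniqueness of the canonical representation are exactly the points the paper relies on, only spelled out in more detail.
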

We can finally prove the main theorem of \cref{sec:ccra}.

\begin{proof}[Proof of \autoref{theorem:CCRA}]
Let $\A$ be an $R$-CCRA recognizing $f$.
Let $m = (4r+2)!s!$, where $r$ is the number of registers and $s$ is the number of states of $\A$. 
By \autoref{lemma:general_ccra_generable} the sequence $(h(n))_n \coloneqq (g((n+1)m))_n \coloneqq (f(uw^{m(n+1)}v))_n$ is an $R$-generable EPS.
Let $q$ be the exponential polynomial representing $h$.
By \autoref{lemma:coeff_sum}, for every $k$, the sum of $k$-degree coefficients of this representation is in $R$.
\end{proof}

\subsection{CCRA versus \texorpdfstring{$R$}{R}-generable EPS}\label{subsection:examples}

We have seen that, for functions $f\colon \Sigma^* \to K$ recognisable by a CCRA, there always exists a finitely generated subsemiring $R$ and $m \ge 1$ such that for all $g \in \PSF(f)$ the sequence $(g((n+1)m))_n$ is an $R$-generable EPS.
We conjecture that this is not sufficient to characterise functions recognised by CCRA, even if it is already known that the function is recognised by a weighted automaton.

At present, we do not have a counterexample, but we outline a plausible candidate in this subsection.
However, it appears difficult to prove that the given function is not recognised by a CCRA.

\begin{example}
\label{example:maybe_unrecognisable}
Consider the following function $f\colon \{0,1\}^* \to \Q$. Given $w \in \Sigma^*$, let $0 < k_1 < k_2 <\ldots < k_r$ be the indices of all $1$'s in $w$, e.g.\ for $w = 0110$ we have: $r=2$, $k_1 = 2$, $k_2 = 3$. Then $f(w) = \sum_{i=1}^r k_i$.

Technical computations show that each element of $\PSF(f)$ is a $\tfrac{1}{2}\Z$-generable EPS (see Appendix~\ref{sec:additional-proofs}).
Nevertheless, it appears unlikely to us that $f$ could be recognised by a CCRA.
\end{example}

The reason why functions can or cannot be recognised by a CCRA can be subtle: while it appears that the function $f$ in \cref{example:maybe_unrecognisable} cannot be recognised by a CCRA, the following example shows a function of similar nature that can be recognised by a CCRA.

\begin{example} \label{exm:looks-unrecognisable-but-is-recognisable}
We define $g\colon \{0,1\}^* \to \Q$: as before, given $w \in \Sigma^*$, let $0 < k_1 < k_2 <\ldots < k_r$ be the indices of all $1$'s. Then $g(w) = \sum_{i=1}^{r} 2^{k_i}$
is recognised by the CCRA in \cref{fig:ccra_binary}.
\end{example}

\begin{figure}
\centering
\begin{minipage}[t]{0.5\textwidth}
    \captionsetup{width=0.9\linewidth}
    \centering
    \begin{tikzpicture}
    \node[state] (p0) {$p$};
    \node[left = 1.5cm of p0] (pin) {};
    \node[right = 1cm of p0] (pout) {};
        
    \path
    (pin) edge[->] node[above] {$x \coloneqq 1$} node[below] {$y \coloneqq 0$} (p0)
    (p0) edge[->] [loop above] node[above, align = center] {$0$:\ \ $x \coloneqq 2x$, $y \coloneqq \frac{y}{2}$} (p0)
    (p0) edge[->] [loop below] node[below, align = center] {$1$:\ \ $x \coloneqq 2x$, $y \coloneqq \frac{y}{2} + 1$} (p0)
    (p0) edge[->] node[above] {$\frac{yx}{2}$} (pout);
    \end{tikzpicture}
    \captionof{figure}{A CCRA recognising a function that, at first glance, may seem unrecognisable by a CCRA (\cref{exm:looks-unrecognisable-but-is-recognisable}).}\label{fig:ccra_binary}
\end{minipage}%
\begin{minipage}[t]{0.5\textwidth}
    \captionsetup{width=0.9\linewidth}
    \centering
    \begin{tikzpicture}
    \node[state] (p0) {$p$};
    \node[left = 1.5cm of p0] (pin) {};
    \node[right = 1cm of p0] (pout) {};
        
    \path
    (pin) edge[->] node[above] {$x\coloneqq 1$} node[below]{$y \coloneqq 0$} (p0)
    (p0) edge[->] [loop above] node[above, align=center] {$a$:\ \ $x \coloneqq x$, $y \coloneqq y+1$} (p0)
    (p0) edge[->] [loop below] node[below, align=center] {$b$:\ \ $x \coloneqq xy$, $y \coloneqq 0$} (p0)
    (p0) edge[->] node[above] {$x$} (pout)
    ;
    \end{tikzpicture}
    \captionof{figure}{A two-register CRA recognising the function from \cref{exm:ccra-not-pa}.}\label{fig:ccra_product_of_ks}
\end{minipage}
\end{figure}

Another promising example is discussed in Appendix~\ref{sec:additional-proofs}.

\section{Pumping Sequence Families of Polynomially Ambiguous WA}\label{sec:polyamb}

In this section we prove a result restricting Pumping Sequence Families of polynomially ambiguous weighted automata.

\begin{theorem}
\label{t:poly_fin_gen}
If $f$ is recognised by a polynomially ambiguous weighted automaton over $K$, then there exist a finitely generated multiplicative semigroup $G \subseteq \overline{K}$ and $N \ge 1$ such that
\begin{itemize}
    \item the characteristic roots of every sequence in $\PSF(f)$ are contained in $G$,
    \item and $\alpha^{N} \in K$ for all $\alpha \in G$.
\end{itemize}
\end{theorem}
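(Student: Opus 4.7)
Fix a polynomially ambiguous weighted automaton $\A = (d, M, I, F)$ recognising $f$. For any $u, w, v \in \Sigma^*$, the generating series of the PSF sequence $(f(uw^nv))_n$ is the rational function $I^T M(u)(\mathrm{Id}-x M(w))^{-1} M(v) F$, so the characteristic roots of this sequence are among the nonzero eigenvalues of $M(w)$. Thus it suffices to produce a single finitely generated semigroup $G \subseteq \overline{K}$ with $G^N \subseteq K$ that contains the nonzero eigenvalues of $M(w)$ for every word $w \in \Sigma^*$; the ``$N$-th power'' condition is the delicate half, while the ``characteristic roots in $G$'' condition is a corollary.

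The next step is to invoke a Reutenauer / Weber--Seidl-style structural normal form for polynomially ambiguous weighted automata: after a change of basis and trimming unreachable or dead states, the matrices $M(a)$ can be assumed block upper-triangular with diagonal blocks $M_1(a), \dots, M_s(a)$, each defining an unambiguous weighted automaton $\A_i$. Since the nonzero eigenvalues of a block upper-triangular matrix are the union of the nonzero eigenvalues of its diagonal blocks, this reduces the claim to the analogous statement about $M_i(w)$ for each fixed block.

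For each unambiguous block, I would analyse the eigenvalues of $M_i(w)$ via the cycle structure of the underlying NFA. By unambiguity, each entry of $M_i(w)$ is a bounded-length sum of products of transition weights, and consequently the characteristic polynomial of $M_i(w)$ has coefficients in the finitely generated subsemiring $R \subseteq K$ generated by the finitely many transition weights of $\A$. The core combinatorial point is that, under polynomial ambiguity, the weights attached to long ``cycles'' in the underlying NFA factor as products of weights of \emph{simple} cycles, of length at most $d$; there are only finitely many simple cycles, so their weights form a finite set $T \subseteq K$. Every eigenvalue of $M_i(w)$ is then, after some bounded exponent, a product of elements of $T$. Consequently the eigenvalues lie in the multiplicative semigroup $G \subseteq \overline{K}$ generated by roots of order at most $d$ of products of elements of $T$; this is finitely generated, and a standard splitting-field argument (using $[K(\alpha):K] \le d$ for every eigenvalue $\alpha$) produces an $N$, depending only on $d$ and on $T$, with $G^N \subseteq K$.

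Finally, to align with the paper's framing, I would verify that the statement established above is equivalent to the condition appearing in \cite{puch-smertnig24}, which is phrased slightly differently but in the same spirit, giving the promised self-contained proof. I expect the genuinely difficult step to be the third one, in particular the uniformity claim: producing a single finite multiset of simple-cycle weights that suffices for \emph{all} $w$ simultaneously. This is precisely where polynomial ambiguity is essential, since without it one cannot control the combinatorics of cycles in the underlying NFA sufficiently to prevent infinitely many ``new'' cycle weights from entering as $w$ grows in length.
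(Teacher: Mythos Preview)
Your high-level strategy is sound, but the execution diverges from the paper's and leaves the decisive step underspecified.

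The paper does not pass through a block-triangular decomposition into unambiguous pieces. Instead, it proves directly (\cref{l:polyamb_upper_triangular}) that for a trim polynomially ambiguous automaton and \emph{any} word $w$, the matrix $M(w^{d!})$ is upper triangular after a permutation of the basis, and that each diagonal entry is either $0$ or the weight of the unique $w^{d!}$-labelled closed walk at that state, hence a product of transition weights. This immediately yields that every nonzero eigenvalue of $M(w)$ has its $d!$-th power in the semigroup $H$ generated by the transition weights, giving $N=d!$ outright. A short separate lemma (\cref{l:fg-roots}) then shows that $G=\{\alpha\in\overline K:\alpha^{d!}\in H\cup\{1\}\}$ is finitely generated, using fixed $d!$-th roots of the generators of $H$ together with the $d!$-th roots of unity.

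Your route via the Weber--Seidl block decomposition is not wrong, but it postpones rather than avoids the hard step: within each ``unambiguous'' block you still need to show that the eigenvalues of $M_i(w)$ have bounded-exponent powers that are products of transition weights. Your sentence ``every eigenvalue of $M_i(w)$ is then, after some bounded exponent, a product of elements of $T$'' is exactly the crux, and your proposal does not indicate how to prove it; the observation that entries of $M_i(w)$ are single products of transition weights controls the \emph{entries}, not the \emph{eigenvalues}. What is actually needed here is precisely the triangularisation argument of \cref{l:polyamb_upper_triangular}, which already works on the full $M(w^{d!})$ without passing to blocks, so the decomposition buys nothing. Separately, the ``splitting-field argument'' you invoke does not do what you claim: the bound $[K(\alpha):K]\le d$ does not by itself produce an $N$ with $\alpha^N\in K$ (for instance $\alpha=1+\sqrt 2$ has degree $2$ over $\Q$ but no power of it lies in $\Q$). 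The uniform $N$ comes from the combinatorics of walks in the transition graph, not from field degrees.
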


With this theorem, we can give the following example. 

\begin{example} \label{exm:ccra-not-pa}
The function $f\colon \{a,b\}^* \to \Q$ defined by the CCRA in \cref{fig:ccra_product_of_ks} cannot be recognised by a polynomially ambiguous weighted automaton.
\end{example}
\begin{proof}
Let us consider inputs of the form $(a^kb)^{n}$ for $k \ge 1$. 
We have $\hat{f}(\varepsilon, a^kb, \varepsilon)(n) = f\big( (a^kb)^{n} \big) = k^n$, meaning that every natural number $k$ appears as a characteristic root of an LRS in $\PSF(f)$.
However, the monoid $(\Z_{>0},\cdot)$ generates $(\Q_{>0},\cdot)$ as a group, and $(\Q_{>0},\cdot)$ is a countably generated free abelian group (with primes as the generators).
Since subgroups of a finitely generated abelian group are finitely generated, but there are infinitely many primes, the natural numbers cannot be a submonoid of a finitely generated abelian group.
\Cref{t:poly_fin_gen} shows that $f$ is not recognised by a polynomially ambiguous weighted automaton.
\end{proof}

We again record this conclusion as a theorem.
\begin{theorem}
    If $\card{\Sigma} \ge 2$, then there exist functions $f\colon \Sigma^* \to \Q$ that can be recognised by a $\Q$-CCRA but not by a polynomially ambiguous weighted automaton over $\Q$.
\end{theorem}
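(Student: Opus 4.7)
The plan is to instantiate Example~\ref{exm:ccra-not-pa} explicitly and then invoke Theorem~\ref{t:poly_fin_gen} as a black box. Concretely, I take $\Sigma = \{a,b\}$ and let $f \colon \Sigma^* \to \Q$ be the function computed by the two-register CCRA in Figure~\ref{fig:ccra_product_of_ks}. By construction this shows $f$ is recognisable by a $\Q$-CCRA, so the only work is to exclude polynomially ambiguous recognisability.

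First I would establish the key evaluation: for every $k \ge 1$ and $n \ge 0$, reading $(a^k b)^n$ on the CCRA produces $f((a^k b)^n) = k^n$. This is a direct induction on $n$: on the block $a^k$ the register $y$ is incremented from $0$ to $k$ while $x$ is unchanged, and the subsequent $b$ performs $x \mapsto xy$ and $y \mapsto 0$, so $x$ is multiplied by $k$ and $y$ returns to $0$. Thus the initial value $x=1$ becomes $k^n$ after $n$ such blocks, and the output is $x$.

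Next, consider the family $g_k \coloneqq \hat{f}(\varepsilon, a^k b, \varepsilon) \in \PSF(f)$ for $k \ge 1$. By the previous step, $g_k(n) = k^n$, which is a linear recurrence sequence whose unique nonzero characteristic root is $k$. Suppose towards a contradiction that $f$ were recognised by a polynomially ambiguous weighted automaton over $\Q$. Theorem~\ref{t:poly_fin_gen} then provides a finitely generated multiplicative semigroup $G \subseteq \overline{\Q}$ and an integer $N \ge 1$ such that every characteristic root of every sequence in $\PSF(f)$ lies in $G$. In particular $k \in G$ for every positive integer $k$, so $G \supseteq \Z_{>0}$.

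Finally, I would carry out the group-theoretic contradiction. Let $H \subseteq \overline{\Q}^\times$ be the abelian group generated by $G$. Since $G$ is finitely generated as a semigroup, $H$ is finitely generated as a group (adjoin the inverses of the finitely many semigroup generators). On the other hand $H \supseteq \Z_{>0}$ and $H$ is closed under inversion, so $H \supseteq \Q_{>0}$. But $\Q_{>0}$ is a free abelian group of countably infinite rank with the rational primes as a $\Z$-basis (unique prime factorisation), and any subgroup of a finitely generated abelian group is finitely generated; in particular $\Q_{>0}$ cannot embed as a subgroup of $H$. This contradicts the existence of the polynomially ambiguous weighted automaton, so $f$ is not recognisable by one. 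For general $\Sigma$ with $\card{\Sigma} \ge 2$ one simply restricts to a two-letter sub-alphabet and uses the same $f$, trivially extended to ignore extra letters (or just relabels). There is essentially no obstacle here beyond verifying the CCRA computation and recalling that $\Q_{>0}$ has infinite rank; the heavy lifting is absorbed into Theorem~\ref{t:poly_fin_gen}.
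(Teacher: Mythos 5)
Your proposal is correct and follows essentially the same route as the paper: it uses the same CCRA from Figure~\ref{fig:ccra_product_of_ks}, the same key computation $f((a^kb)^n)=k^n$ placing every positive integer among the characteristic roots of sequences in $\PSF(f)$, and the same group-theoretic contradiction via Theorem~\ref{t:poly_fin_gen} (that $\Z_{>0}$ cannot sit inside a finitely generated multiplicative semigroup because $\Q_{>0}$ is free abelian of infinite rank). The extra detail you supply --- the induction verifying the CCRA computation and the explicit passage from the finitely generated semigroup $G$ to the finitely generated group it generates --- only makes explicit what the paper's proof of Example~\ref{exm:ccra-not-pa} leaves implicit.
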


\begin{proof}
    By \cref{exm:ccra-not-pa}.
\end{proof}

The core of the proof of \cref{t:poly_fin_gen} will be the following lemma.
The argument is similar to an argument in \cite{JeckerMP24} and in \cite[Prop.~9.3]{puch-smertnig24}. A self-contained proof is in Appendix~\ref{sec:additional-proofs}.
\begin{restatable}[]{lemma}{polyambuppertriangular}
\label{l:polyamb_upper_triangular}
Let $\A=(d,I,M,F)$ be a trim polynomially ambiguous weighted automaton. 
Then, for every word $w \in \Sigma^*$, there exists a permutation matrix $P$ such that $P \cdot M( w^{d!} ) \cdot P^{-1}$ is upper triangular. 
Furthermore, all nonzero eigenvalues of $M(w^{d!})$ are products of transition weights \textup(that is, of entries of the matrices $M(a)$ for letters $a \in \Sigma$\textup).
\end{restatable}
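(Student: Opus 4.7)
The plan is to analyse the directed graph $G_w$ on the state set $\{1,\dots,d\}$ whose edges are the pairs $(i,j)$ with $M(w)[i,j]\neq 0$, and to push polynomial ambiguity through to a rigid structural statement about its strongly connected components (SCCs). Once the structure is clear, reordering vertices along a topological sort of the SCC condensation makes $M(w)$, and hence every power $M(w)^k$, block upper triangular; the problem then reduces to checking that in $M(w^{d!})$ each diagonal block is actually diagonal with entries that are products of transition weights.

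The key structural claim is that every SCC $C$ of $G_w$ is either a singleton (with or without a self-loop) or induces a simple directed cycle on $C$, and that inside a cycle SCC every consecutive edge $q_i \to q_{i+1}$ is realised by a single $w$-labelled NFA-path of $\A$. Both parts come from the same EDA-style pumping argument. If some $q \in C$ had two distinct outgoing edges in $G_w$ to $p_1, p_2 \in C$, then combining them with paths $p_k \to q$ inside $C$ yields two distinct $q$-to-$q$ walks in $G_w$ of lengths $\ell_1, \ell_2$; taking $\ell_2$ copies of the first and $\ell_1$ copies of the second produces two distinct walks of common length $\ell_1\ell_2$, and iterating the choice $n$ times gives $2^n$ pairwise distinct $q$-to-$q$ walks of length $n\ell_1\ell_2$. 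By trimness, some $u$ drives the automaton from an initial state to $q$ and some $v$ from $q$ to a final state, so each walk lifts to an accepting run of $\A$ on $u w^{n\ell_1\ell_2} v$, and two distinct walks yield distinct runs because they visit different NFA-states between some two consecutive copies of $w$. That gives exponentially many accepting runs on an input of length linear in $n$, contradicting polynomial ambiguity. The very same argument, applied to two parallel $w$-labelled NFA-paths realising a single edge of a cycle SCC, forces each $M(w)[q_i, q_{i+1}]$ to be a single product of transition weights rather than a genuine sum.

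With the structure in hand, the rest is bookkeeping. A simple cycle in $G_w$ has length $\ell \le d$, so $\ell \mid d!$. In $M(w^{d!}) = M(w)^{d!}$, any length-$d!$ walk starting and ending in the same SCC must stay inside that SCC, since SCCs cannot be re-entered in $G_w$; and inside a cycle SCC the only such closed walks wind around the cycle exactly $d!/\ell$ times. Permuting states according to the SCC order, every diagonal block of $M(w^{d!})$ is therefore a diagonal matrix, the entire matrix is upper triangular, and each nonzero diagonal entry equals a $(d!/\ell)$-th power of a product of entries $M(w)[q_i,q_{i+1}]$. Since each such entry has already been shown to be a single product of transition weights of $\A$, every nonzero eigenvalue of $M(w^{d!})$ is likewise a product of transition weights. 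The only genuinely nontrivial step is the EDA-style argument in the second paragraph; the triangularisation and eigenvalue statements then drop out.
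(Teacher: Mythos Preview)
Your proof is correct and takes essentially the same route as the paper: both derive the contradiction with polynomial ambiguity via the same EDA-style pumping argument on pairs of distinct closed walks. The paper merely packages the structural step differently, arguing directly that the transition graph of $M(w^{d!})$ has no closed walks except loops and that each loop is realised by at most one NFA-run on $w^{d!}$, rather than first classifying the SCCs of $G_w$ and then showing per-edge uniqueness of the NFA realisation; these are equivalent formulations of the same observation.
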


Before proving \cref{t:poly_fin_gen}, we need a final small observation.

\begin{lemma} \label{l:fg-roots}
    If $H \subseteq K$ is a finitely generated semigroup and $N \ge 1$, then the semigroup $G=\{\, \alpha \in \overline{K} \mid \alpha^N \in H \cup \{1\} \,\}$ is also finitely generated.
\end{lemma}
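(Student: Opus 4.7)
The plan is to build an explicit finite generating set for $G$. Let $h_1, \ldots, h_k$ be generators of the semigroup $H$, and for each $h_i$ fix some $N$-th root $\beta_i \in \overline{K}$. Let $\mu_N = \{\zeta \in \overline{K} \mid \zeta^N = 1\}$ denote the group of $N$-th roots of unity in $\overline{K}$. I claim that the finite set $S = \mu_N \cup \{\beta_1, \ldots, \beta_k\}$ generates $G$ as a semigroup.

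First, $S \subseteq G$ is immediate from the definitions, since $\beta_i^N = h_i \in H$ and $\zeta^N = 1 \in H \cup \{1\}$ for $\zeta \in \mu_N$. The set $S$ is finite because $\mu_N$ has at most $N$ elements: the polynomial $x^N - 1$ has at most $N$ roots in the field $\overline{K}$, regardless of the characteristic of $K$.

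Conversely, to show any $\alpha \in G$ lies in the semigroup generated by $S$, write $\alpha^N = h$ for some $h \in H \cup \{1\}$. If $h = 1$, then $\alpha \in \mu_N \subseteq S$. Otherwise $h = h_{i_1} \cdots h_{i_m}$ is a product of generators of $H$; setting $\beta = \beta_{i_1} \cdots \beta_{i_m}$ gives $\beta^N = h = \alpha^N$, so $\zeta \coloneqq \alpha / \beta$ satisfies $\zeta^N = 1$, i.e., $\zeta \in \mu_N$. The decomposition $\alpha = \zeta \cdot \beta_{i_1} \cdots \beta_{i_m}$ then exhibits $\alpha$ as a product of elements of $S$.

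There is no real obstacle: the argument is a direct consequence of the standard fact that two $N$-th roots of the same element of a field differ by a root of unity, combined with the finiteness of $\mu_N$. The only potential pitfall is remembering the case $\alpha^N = 1$, which is precisely why the definition of $G$ uses $H \cup \{1\}$ and why $\mu_N$ must appear in the generating set.
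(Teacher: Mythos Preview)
Your proof is correct and follows essentially the same approach as the paper: choose $N$-th roots of the generators of $H$, adjoin the $N$-th roots of unity, and show any $\alpha \in G$ differs from a suitable product of these chosen roots by an element of $\mu_N$. The only cosmetic difference is that the paper phrases the last step as ``$\gamma = \gamma'\zeta$'' rather than forming the quotient $\alpha/\beta$, thereby sidestepping the (trivial) edge case $\alpha = \beta = 0$ when $0 \in H$.
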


\begin{proof}
    Suppose $\beta_1$, \dots,~$\beta_n$ generate $H$.
    For each $\beta_i$ let $\alpha_i \in \overline{K}$ be a root of $X^N - \beta_i$.
    Let $G'$ be the subsemigroup of $\overline{K}$ generated by $\alpha_1$, \dots,~$\alpha_n$ together with the $N$-th roots of unity in $\overline{K}$ (of which there are at most $N$, since they are the roots of $X^N-1$). 
    
    We claim $G=G'$.
    The inclusion $G' \subseteq G$ holds by definition.
    Suppose $\gamma \in G$. 
    Then $\gamma^N = \beta_1^{k_1}\cdots \beta_n^{k_n}$ for some $k_i \ge 0$.
    Define $\gamma'\coloneqq \alpha_1^{k_1} \cdots \alpha_n^{k_n} \in G'$.
    Then $(\gamma')^N = \gamma^N$.
    It follows that $\gamma= \gamma' \zeta$ with $\zeta$ an $N$-th root of unity (whether or not $\gamma =0$).
    So $\gamma \in G'$.
\end{proof}

\begin{proof}[Proof of \cref{t:poly_fin_gen}]
We have to show that there exists a finitely generated multiplicative semigroup $G \subseteq \overline{K}$ and an $N \ge 1$ such that for every $u$, $w$,~$v \in \Sigma^*$, the characteristic roots of $(f(uw^nv))_n$ are contained in $G$ and $\alpha^{N} \in K$ for all $\alpha \in G$.

Let $\A=(d,I,M,F)$ be a polynomially ambiguous weighted automaton recognising $f$.
Let $N \coloneqq d!$.
Without restriction, we can take $\A$ to be trim.
Let $H \subseteq K$ be the subsemigroup of $K$ generated by all the finitely many transition weights of $\A$, and let $G = \{\, \alpha \in \overline{K} \mid \alpha^N \in H \cup \{1\} \,\}$.
By \cref{l:fg-roots}, the semigroup $G$ is finitely generated.

The characteristic roots of the LRS $(\A(uw^{n}v))_n$ are eigenvalues of $M(w)$.
By \cref{l:polyamb_upper_triangular}, the eigenvalues of $M(w^{N})$ are products of transition weights.
We have $M( w^{N} )= M(w)^{N}$, and so the eigenvalues of $M(w)$ are roots of degree $N$ of products of transition weights of $\A$. This means they belong to $G$.
\end{proof}

We (ambitiously) conjecture the following converse of \cref{t:poly_fin_gen}.
\begin{conjecture} \label{c:inverse-pumping}
    Let $f \colon \Sigma^* \to K$ be recognised by a weighted automaton.
    If there exists a finitely generated multiplicative subsemigroup $G \subseteq \overline{K}$ and $N \in \Z_{\ge 1}$ such that
    \begin{itemize}
    \item the characteristic roots of every sequence in $\PSF(f)$ are contained in $G$,
    \item and $\alpha^{N} \in K$ for all $\alpha \in G$,
    \end{itemize}
    then $f$ is recognised by a polynomially ambiguous weighted automaton.
\end{conjecture}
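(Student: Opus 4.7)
The plan is to mirror Bell and Smertnig's strategy for unambiguous automata, replacing their finitely generated \emph{groups} by the finitely generated \emph{semigroup} $G$ from the hypothesis, and targeting a polynomially-ambiguous decomposition rather than a deterministic one. Let $\A=(d,I,M,F)$ be a minimal weighted automaton recognising $f$, and let $\mathcal{M}$ be the multiplicative semigroup of matrices generated by $\{M(a) : a \in \Sigma\}$. I would first pass to the algebraic extension $K' = K[\sqrt[N]{\alpha} : \alpha \in G]$, so that every characteristic root appearing across $\PSF(f)$ already lies in $K'$. Since polynomial ambiguity is a property of the underlying NFA, proving the conjecture over $K'$ suffices, after which a descent argument would return us to $K$.

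Next, I would study the Zariski closure $\overline{\mathcal{M}}$ inside $\mathrm{Mat}_d(K')$. The hypothesis together with \cref{l:polyamb_upper_triangular}'s reasoning ensures that every element of $\mathcal{M}$ has spectrum contained in $G \cup \{0\}$, a very restricted set. The aim is to show that $\overline{\mathcal{M}}$ is triangulisable: there exists a basis of $(K')^d$ in which every matrix in $\overline{\mathcal{M}}$ is block-upper-triangular, with diagonal blocks corresponding to ``irreducible building blocks'' of $f$ that can each be recognised unambiguously in the sense of Bell-Smertnig. Reassembling these blocks along the triangular structure would then yield a polynomially ambiguous weighted automaton, the polynomial bound coming from the number of diagonal blocks, which is at most $d$; this mirrors the converse direction of the reasoning behind \cref{l:polyamb_upper_triangular}.

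The critical algebraic step, and the real source of the conjecture's difficulty, is deducing triangulisability from a purely spectral constraint on a matrix \emph{semigroup}. For matrix \emph{groups} this is a classical Burnside-type statement, but the corresponding semigroup statement is delicate and not available off the shelf. A promising attack is to work through the ``coil'' decomposition of $\mathcal{M}$ developed in \cite{puch-smertnig24}: each simple cycle of the underlying NFA contributes eigenvalues in $G$, and one would try to argue inductively that any obstruction to simultaneous triangulisation (for instance, two cycles sharing a state and contributing multiplicatively independent eigenvalues) can be pumped, via suitable choices of $u$, $v$, and a repeated connecting word $w$, into a sequence in $\PSF(f)$ whose characteristic roots escape any finitely generated semigroup -- contradicting the hypothesis.

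A secondary obstacle is the gap between the PSF hypothesis, which directly constrains only matrices of the form $M(w)$ (single loops inside fixed prefix/suffix contexts), and triangulisability, which is a joint condition on arbitrary products in $\mathcal{M}$. Bridging this will likely require exploiting the freedom in the choice of $u$ and $v$ to simulate arbitrary contexts, combined with a pigeonhole argument over the finitely many reachable state pairs of the underlying NFA to promote local spectral control into global structural control. I expect this bookkeeping, together with the semigroup Burnside step, to be the principal difficulties; the purely combinatorial step of assembling a polynomially ambiguous automaton from a joint block-triangularisation should, by contrast, be routine.
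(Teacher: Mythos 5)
This statement is not a theorem of the paper: it is \cref{c:inverse-pumping}, which the authors explicitly pose as an open (and self-described ``ambitious'') conjecture. The paper offers no proof; it only records that the equivalent \cref{c:inverse-global} has been established in \cite{puch-smertnig24} in the special case where all transition matrices are invertible. So there is no paper proof to match your attempt against, and your attempt does not close the gap either: it is a research programme, not a proof. You yourself flag the decisive step --- deducing simultaneous (block-)triangularisability of the matrix \emph{semigroup} $\mathcal{M}$ from the constraint that all spectra lie in $G \cup \{0\}$ --- as ``delicate and not available off the shelf,'' and you do not supply it. A Burnside-type triangularisation theorem for semigroups under purely spectral hypotheses is essentially the open content of the conjecture; gesturing at an inductive ``coil'' argument and at pumping multiplicatively independent eigenvalues into $\PSF(f)$ does not constitute carrying it out.

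Two further points would need real work even granting triangularisability. First, polynomial ambiguity is a property of the underlying NFA (the zero/nonzero pattern of the matrices), not of the conjugacy class of the linear representation; a change of basis that block-triangularises $\overline{\mathcal{M}}$ does not by itself produce an automaton whose underlying NFA is polynomially ambiguous. You would need each diagonal block to be realised by an automaton that is unambiguous \emph{as an NFA} (this is where the full strength of Bell--Smertnig's theorem would be invoked, and its hypotheses --- output values in a finitely generated group --- must be verified for each block, which is not immediate from your setup), and you would need the off-diagonal couplings to be realisable without introducing superpolynomially many accepting runs. Second, the ``descent argument'' from $K'=K[\sqrt[N]{\alpha}:\alpha\in G]$ back to $K$ is asserted but not given; since $f$ takes values in $K$ but the constructed automaton would have weights in $K'$, one needs an argument (e.g.\ a trace/Galois averaging preserving polynomial ambiguity) that is plausible but not automatic. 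In short: the approach is a reasonable sketch of how one might attack the conjecture, consistent with the direction hinted at in \cite{puch-smertnig24}, but it proves nothing that the paper does not already leave open.
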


\cref{c:inverse-pumping} postulates a pumping-style characterisation.
The following conjecture postulates a ``global'' characterisation, with a similar restriction as in \cref{c:inverse-pumping} imposed on the eigenvalues of the matrix semigroup.
Here it is important that the condition is imposed on \emph{all} matrices, not just on the generators.

\begin{conjecture} \label{c:inverse-global}
    Let $f \colon \Sigma^* \to K$ be recognised by a weighted automaton.
    If there exists a finitely generated multiplicative subsemigroup $G \subseteq \overline{K}$ and $N \ge 1$ such that
    \begin{itemize}
        \item all eigenvalues of matrices $M(w)$ for $w \in \Sigma^*$ are contained in G,
        \item and $\alpha^N \in K$ for all $\alpha \in G$,
    \end{itemize}
    then $f$ is recognised by a polynomially ambiguous weighted automaton.
\end{conjecture}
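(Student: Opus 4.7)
The plan is to reverse-engineer a polynomially ambiguous presentation from the spectral hypothesis by combining a structural decomposition of the matrix semigroup with the Bell--Smertnig theorem for unambiguous automata. Fix a trim weighted automaton $\mathcal{A}=(d,M,I,F)$ recognising $f$ and work in the algebraic closure $\overline K$; let $\mathcal S$ be the matrix subsemigroup of $\overline K^{d\times d}$ generated by $\{M(a) : a \in \Sigma\}$.

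First, I would put $\mathcal S$ in a common block-triangular normal form by iterated application of Burnside's theorem on irreducible matrix semigroups: either $\mathcal S$ has a common invariant subspace, in which case a simultaneous change of basis makes every $M(w)$ block upper triangular, or $\mathcal S$ acts absolutely irreducibly on $\overline K^d$. Recursing on the diagonal blocks yields a uniform block-triangular shape in which each diagonal-block projection is an absolutely irreducible matrix subsemigroup of some $M_{d_i}(\overline K)$. The hypothesis passes cleanly to diagonal blocks, because the eigenvalues of a block-triangular matrix are the union of the eigenvalues of its diagonal blocks; hence every element of every diagonal-block semigroup still has spectrum in $G$.

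The heart of the proof would be a spectral rigidity lemma: an absolutely irreducible matrix subsemigroup of $M_{d_i}(\overline K)$ whose elements all have eigenvalues in a finitely generated semigroup $G$ with $G^N \subseteq K$ is forced to be one-dimensional. By Burnside's density theorem such a semigroup spans the full matrix algebra and is therefore determined by its traces; since every trace is a sum of $d_i$ elements of $G$ (of uniformly bounded degree over $K$ because $G^N \subseteq K$), the traces live in a finitely generated subring of $\overline K$, and I would try to exploit Zariski closure arguments on the Levi factor of the closure, together with Galois descent, to force $d_i=1$. Granting this, each diagonal block takes values in $G \cup \{0\}$ after a bounded field extension, and Bell and Smertnig's theorem produces an unambiguous weighted automaton recognising the corresponding coordinate of $\mathcal A$.

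Reassembling these unambiguous diagonal components along the block-triangular skeleton then yields the desired automaton: any accepting run uses at most $d$ block changes (bounded by the nilpotency of the strictly upper triangular part) and behaves unambiguously inside each block, so the number of runs on an input of length $n$ is $O(n^{d})$, giving polynomial ambiguity. The principal obstacle is the spectral rigidity lemma: ruling out higher-dimensional absolutely irreducible components whose spectra happen to stay inside $G$. Such representations can in principle arise from induced representations of finite groups, and controlling them is where existing tools seem insufficient; this is presumably why the statement is posed as a conjecture, and any successful attack will likely need to leverage the automata-theoretic origin of $\mathcal S$ (not merely its abstract semigroup structure) in a genuine way.
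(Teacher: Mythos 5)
This statement is posed in the paper as \emph{Conjecture}~\ref{c:inverse-global}; the paper offers no proof of it, only the remark that the special case where all transition matrices are invertible was established in the preprint \cite{puch-smertnig24}. So there is no proof of the paper's to compare yours against, and your proposal is, as you yourself concede in the last paragraph, not a proof: the ``spectral rigidity lemma'' is exactly the missing ingredient, and everything else is scaffolding around it.

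Beyond being unproved, that lemma is false as stated. An absolutely irreducible matrix subsemigroup of $M_{d_i}(\overline K)$ with all spectra in a finitely generated semigroup $G$ satisfying $G^N \subseteq K$ need not be one-dimensional: take the subsemigroup of $M_2(\Q)$ generated by $\bigl(\begin{smallmatrix} 0 & 1 \\ 1 & 0\end{smallmatrix}\bigr)$ and $\bigl(\begin{smallmatrix} 2 & 0 \\ 0 & 3\end{smallmatrix}\bigr)$. It is irreducible (the generators share no eigenvector), consists of monomial matrices, and every eigenvalue lies in the finitely generated group generated by $-1$, $\sqrt{2}$, $\sqrt{3}$, whose squares lie in $\Q$. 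This does not threaten the conjecture --- monomial transition matrices yield finitely ambiguous automata --- but it shows the correct target for the irreducible blocks is ``conjugate to a monomial semigroup'' (equivalently, an unambiguous component in the sense of Bell--Smertnig), not ``dimension one.'' Forcing that conclusion from spectral data alone is precisely the hard open problem here; in \cite{bell2021noncommutative} the analogous step requires deep Diophantine input (unit equations and height bounds), and the non-invertible case remains open. A secondary gap: even granting unambiguous diagonal blocks, reassembling them along the strictly upper-triangular part into an actual polynomially ambiguous automaton recognising $f$ is nontrivial, since the simultaneous change of basis destroys the combinatorial automaton structure and the off-diagonal blocks must be realised by transitions between the components.
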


A positive resolution of the conjectures would extend a characterisation in similar spirit of functions that can be recognised by unambiguous weighted automata \cite{BellS23}.
While the conjectures seem ambitious, in the preprint \cite{puch-smertnig24}, \cref{c:inverse-global} was already proved in the case that all transition matrices are invertible.

The following lemma shows that Conjectures~\ref{c:inverse-pumping} and~\ref{c:inverse-global} are in fact equivalent.

\begin{lemma}
Let $\A=(d,I,M,F)$ be a minimal weighted automaton and let $w \in \Sigma^*$.
Then the set of nonzero eigenvalues of $M(w)$ is precisely the set of all nonzero characteristic roots of the LRS $(\A(uw^nv))_n$ as $u$,~$v \in \Sigma^*$ range through all words. 
\end{lemma}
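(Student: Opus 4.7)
The proof splits into two inclusions. For the easy direction, a fact recalled in \cref{sec:preliminaries} states that the characteristic roots of an LRS appear as eigenvalues of the transition matrix in \emph{every} weighted automaton representation. For fixed $u, v \in \Sigma^*$, the LRS $(\A(uw^nv))_n$ is represented by the weighted automaton with transition matrix $M(w)$, initial vector $M(u)^\intercal I$, and final vector $M(v) F$, so its nonzero characteristic roots lie among the nonzero eigenvalues of $M(w)$.

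For the reverse inclusion, fix a nonzero eigenvalue $\lambda \in \overline{K}$ of $M(w)$. I first reduce to finding \emph{arbitrary} $I', F' \in K^d$ making $\lambda$ a characteristic root of $(I'^\intercal M(w)^n F')_n$. By minimality of $\A$, the sets $\{\,M(u)^\intercal I : u \in \Sigma^*\,\}$ and $\{\,M(v) F : v \in \Sigma^*\,\}$ each $K$-span $K^d$, so $I' = \sum_i c_i M(u_i)^\intercal I$ and $F' = \sum_j d_j M(v_j) F$ for some coefficients in $K$, which gives $(I'^\intercal M(w)^n F')_n = \sum_{i,j} c_i d_j (\A(u_i w^n v_j))_n$. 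Since the minimal polynomial of a sum of LRS divides the product of the individual minimal polynomials, the characteristic roots of a sum are contained in the union of those of the summands; hence $\lambda$ must already be a characteristic root of some $(\A(u_i w^n v_j))_n$.

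To construct such $I', F' \in K^d$, pass to $\overline{K}$ and put $M(w) = P J P^{-1}$ in Jordan normal form, writing the $\lambda$-block as $J_\lambda = \lambda \, \mathrm{Id} + N_\lambda$ with $N_\lambda$ nilpotent of index $k \ge 1$. Setting $\tilde{I} \coloneqq P^\intercal I'$ and $\tilde{F} \coloneqq P^{-1} F'$, the coefficient of $\lambda^n$ in the expansion of $I'^\intercal M(w)^n F' = \tilde{I}^\intercal J^n \tilde{F}$ equals $q(n) = \sum_{j=0}^{k-1} \binom{n}{j} \lambda^{-j} \tilde{I}_\lambda^\intercal N_\lambda^j \tilde{F}_\lambda$, where $\tilde{I}_\lambda, \tilde{F}_\lambda$ are the components in the $\lambda$-block. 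The assignment $B\colon (\tilde{I}, \tilde{F}) \mapsto q$ is $\overline{K}$-bilinear, and choosing $\tilde{F}_\lambda$ with $N_\lambda^{k-1} \tilde{F}_\lambda \ne 0$ together with $\tilde{I}_\lambda$ pairing nontrivially with $N_\lambda^{k-1} \tilde{F}_\lambda$ shows that $B$ is not identically zero on $\overline{K}^d \times \overline{K}^d$.

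The main subtlety is the $\overline{K}/K$-descent: I must verify that $B$ is nonzero on the smaller product $(P^\intercal K^d) \times (P^{-1} K^d)$, which corresponds to $I', F' \in K^d$. The key observation is that the $\overline{K}$-span of $K^d \subseteq \overline{K}^d$ equals $\overline{K}^d$, and this property is preserved by the invertible $\overline{K}$-linear maps $P^\intercal$ and $P^{-1}$. A $\overline{K}$-bilinear form vanishing on a product of two sets whose $\overline{K}$-spans are all of $\overline{K}^d$ must vanish everywhere, by extending one slot at a time via $\overline{K}$-linearity. Contrapositively, $B$ is nonzero on $(P^\intercal K^d) \times (P^{-1} K^d)$, yielding $I', F' \in K^d$ such that $\lambda$ is a characteristic root of $(I'^\intercal M(w)^n F')_n$, as required.
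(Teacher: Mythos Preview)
Your proof is correct and follows the same essential strategy as the paper: use the Jordan form of $M(w)$ to exhibit $\lambda$, and use minimality (the spanning of $K^d$ by the reachability and coreachability sets) together with the fact that a characteristic root of a linear combination of LRS must be a characteristic root of some summand.

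The organizational difference is in how the $K$ versus $\overline{K}$ issue is handled. The paper simply extends scalars to $\overline{K}$ at the outset, noting that minimality is preserved under field extension; after a change of basis putting $M(w)$ in Jordan form, one reads off $\lambda^n = e_1^\intercal M(w)^n e_1$ directly and decomposes $e_1^\intercal$ and $e_1$ as $\overline{K}$-linear combinations of reachable (co)vectors. You instead keep the minimality argument over $K$, reduce to finding $I',F' \in K^d$ with $\lambda$ a characteristic root of $(I'^\intercal M(w)^n F')_n$, and then run a separate descent argument: the coefficient map $B$ is $\overline{K}$-bilinear and nonzero, and $P^\intercal K^d$, $P^{-1}K^d$ each $\overline{K}$-span $\overline{K}^d$, so $B$ cannot vanish on their product. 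Both routes are valid; the paper's is shorter because extending scalars first absorbs the descent step into the statement that minimality persists over $\overline{K}$, whereas your version makes the same linear-algebra content explicit via the bilinear form.
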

\begin{proof}
One direction is obvious --- characteristic roots come from eigenvalues of the matrix $M(w)$.
We only have to show that every nonzero eigenvalue $\lambda \in \overline{K}$ of $M(w)$ shows up as characteristic root of some LRS.

Working over $\overline{K}$ we can assume that $K=\overline{K}$ is algebraically closed.
This allows us to change to a basis in which $M(w)$ is in the Jordan normal form.
In particular, we can assume that $M(w)$ is upper triangular and $M(w)[1,1] = \lambda$.
Let $e_1=(1,0,\dots,0) \in K^{d \times 1}$ and let $e_1^\intercal$ be its transpose.
Then $\lambda^n = e_1^\intercal M(w^n) e_1$.

Because $\A$ is minimal, the reachability set $\{\, I^\intercal M(w) \mid w \in \Sigma^* \,\}$ spans $K^{1 \times d}$ as a vector space, and analogously the coreachability set spans $K^{d \times 1}$ -- otherwise we could easily decrease the dimension.
Therefore, there exist $\alpha_i$,~$\beta_j \in K$ and $u_i$,~$v_j \in \Sigma^*$ such that $e_1^\intercal = \sum_{i=1}^d \alpha_i I^\intercal M(u_i)$ and $e_1 = \sum_{j=1}^d M(v_j) F \beta_j$.
Now
\[
\lambda^n = \sum_{i=1}^d \sum_{j=1}^d \alpha_i \beta_j\,  I^\intercal M(u_i w^n v_j)F,
\]
expresses the LRS $(\lambda^n)_n$ as linear combination of LRS $(I^\intercal M(u_i w^n v_j)F)_n$.
Since the former has a characteristic root $\lambda$, a summand must have $\lambda$ as a characteristic root as well: this is easily seen by considering the LRS as rational functions, and recalling that nonzero characteristic roots correspond to reciprocals of poles, or by the uniqueness result in \cref{t:lrs-repr}.
\end{proof}

While the main theorem of this section provides a necessary pumping criterion for polynomially ambigualisable automata, that is, those weighted automata that are equivalent to polynomial ambiguous ones, another related open problem is to relate the minimal degree of the polynomial bounding the ambiguity to arithmetic properties of the output (in other words, to characterise linearly ambiguous, quadratically ambiguous, etc.).
At least in characteristic zero, a tempting idea is to look at the degrees of polynomials arising in the PSF. 
Indeed, it is easy to see that if the ambiguity of the automaton is bounded by a polynomial of degree $d$, then no polynomial of higher degree can appear in the PSF.
The converse however does not hold, as the following example shows.

\begin{example}
The function $f\colon \{0,1\}^* \to \Q$, mapping a binary word to the natural number it represents (say, LSB on the left), is easily seen to be recognisable by a weighted automaton. 
We check that, for any $u$, $w$,~$v \in \Sigma^*$, the exponential polynomial representation of $(\A(uw^nv))_n$ only contains constant polynomials. Indeed, let $u = u_1 u_2 \dots u_r$, $w = w_1 w_2 \dots w_t$, $v = v_1 v_2 \dots v_l$. We have
{\small
\[
\begin{split}
f(uw^nv) &= 2^0 u_1 + 2^1 u_2 + \dots + 2^{r-1}u_r + 
            ( 2^r w_1 + 2^{r+1} w_2 + \dots + 2^{r+t-1} w_t) ( 1 + 2^t + \dots + 2^{t(n-1)} ) \\
         &+ 2^{nt+r} v_1 + \dots + 2^{nt+r+l-1}v_l = \alpha + \beta \sum_{i=0}^{n-1} 2^{ti} = \alpha + \beta \frac{2^{tn} - 1}{2^t - 1}  \quad (\alpha, \beta \in \Q).
\end{split}
\]
}
This gives us an exponential polynomial with only constant polynomials.

A set of the form $\{\, g_1 + \cdots + g_m \mid m \le M, g_i \in G \,\}$ for some $M \ge 0$ and a finitely generated subgroup $G \le \Q^\times$ is called a Bézivin set \cite{puch-smertnig24}.
One can show that $\N$ is not a Bézivin set.\footnote{This is a consequence of a theorem of Bézivin \cite[Th.~4]{Bezivin86}: if $\N$ were Bézivin, its generating series $\sum_{n=0}^\infty n x^n = \frac{x}{(1-x)^2}$ would have to have simple poles only, which is not the case.}
It is also easy to see that the output set of a finitely ambiguous weighted automaton is a Bézivin set.
Since $f(\Sigma^*) = \N$, the function $f$ cannot be recognised by a finitely ambiguous weighted automaton.
\end{example}

\section{Equivalence and Zeroness of CCRA}\label{sec:equivalence}
The two problems are defined as follows (for any classes of automata):
\begin{itemize}
 \item \emph{equivalence}: given two automata $\A$ and $\B$, decide if $\A(w) = \B(w)$ for all $w\in\Sigma^*$.
 \item \emph{zeroness}: given an automaton $\A$, decide if $\A(w) = 0$ for all $w$.
\end{itemize}
It is folklore that for (polynomially) weighted automata and CCRA the two problems are effectively interreducible. Indeed, to decide zeroness of $\A$ it suffices to check equivalence with $\B$ that outputs $0$ on all words. Conversely, to check equivalence of $\A$ and $\B$ one can check zeroness of $\A - \B$, which can be efficiently constructed for these models. Therefore we will deal only with zeroness.

For polynomially ambiguous weighted automata, even unrestricted weighted automata, we know that zeroness is in polynomial time~\cite{Schutzenberger61b} and in NC$^2$~\cite{Tzeng96}. Thus, we focus on the complexity of zeroness for CCRA. For the problem to make sense we need to introduce the size of the input CCRA. Given $\C = (Q, q_0, d, \Sigma, \delta, \mu, \nu)$, we say that its size is $|Q| + d + |\Sigma| + \max_p(|p|)$, where $p$ ranges over all polynomials and constants used in $\delta$, $\mu$ and $\nu$. We assume that polynomials are represented in the natural succinct form of arithmetic tree circuits, not as a list of all monomials.

Recall that in the update function one cannot use polynomials like $x^2$ because two copies of $x$ are needed. However, in some sense CCRA can evaluate any polynomial. For example there is a CCRA $\C$ such that $\C(1^n) = n^2$, simply by having two registers storing $n$ and defining the output function as their product. We can say that evaluating $x^2$ requires two copies of $x$. We generalise this observation to any polynomial. Given $x_1,\ldots,x_k$, we say that a polynomial $p(x_1,\ldots,x_k)$ is $d$-copyless if there exists a copyless polynomial $p'(\vec{x})$, where $\vec{x} = x_{1,1},\ldots,x_{1,k}, \ldots x_{d,1},\ldots,x_{d,k}$ ($d$ copies of every $x_i$) such that $p'(\vec{x}) = p(x_1,\ldots,x_k)$, substituting $x_{i,j} = x_j$ for all $1 \le i \le d$ and $1 \le j \le k$. In particular $1$-copyless is copyless.

We will use a standard and convenient lemma that allows us to turn formulas into polynomials. Note that we assume that formulas, like polynomials, are represented as tree circuits. By the size of the formula, denoted $|\varphi|$, we understand the size of the circuit. The proof can be found in Appendix~\ref{sec:additional-proofs}.
\begin{restatable}[]{lemma}{formulatopoly}\label{lem:formula_to_poly}
Let $\vec{x} = (x_1,\ldots,x_k)$ and let $\varphi(\vec{x})$ be a Boolean quantifier free formula. There exists a polynomial $p(\vec{x})$, of size polynomial in $|\varphi|$, such that for every $\vec{v} \in \set{0,1}^k$ we have: $p(\vec{v}) \in \set{0,1}$; and $p(\vec{v}) = 1$ if and only if $\varphi(\vec{v})$ evaluates to true.
Moreover, the polynomial $p(\vec{x})$ is $|\varphi|$-copyless.
\end{restatable}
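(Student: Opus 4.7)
The plan is to build the polynomial $p$ by a straightforward recursive arithmetisation of $\varphi$, viewing both the formula and the polynomial as tree circuits and replacing each Boolean connective by an arithmetic gadget that is itself ``copyless'', so that each leaf $x_i$ in $\varphi$ translates into exactly one leaf $x_i$ in $p$.

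Concretely, I would set, by induction on the structure of $\varphi$:
\begin{itemize}
\item $p_{x_i} = x_i$ (and $p_\top = 1$, $p_\bot = 0$ for the Boolean constants, if present);
\item $p_{\neg \psi} = 1 - p_\psi$;
\item $p_{\psi_1 \wedge \psi_2} = p_{\psi_1} \cdot p_{\psi_2}$;
\item $p_{\psi_1 \vee \psi_2} = 1 - (1 - p_{\psi_1})(1 - p_{\psi_2})$.
\end{itemize}
Note that the arithmetic gadget chosen for $\vee$ is crucial: the more familiar form $p_{\psi_1} + p_{\psi_2} - p_{\psi_1} p_{\psi_2}$ would duplicate both subpolynomials, whereas the form above uses each of $p_{\psi_1}$ and $p_{\psi_2}$ exactly once as a subtree.

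Correctness is shown by an obvious induction on $\varphi$: assuming inductively that $p_{\psi}(\vec v) \in \set{0,1}$ and that it equals $1$ iff $\psi(\vec v)$ is true, the four rules above each preserve this property (since $1-b$, $b_1 b_2$ and $1-(1-b_1)(1-b_2)$ realise negation, conjunction and disjunction on $\set{0,1}$). For the size bound, each rule adds only a constant number of nodes on top of the subpolynomials already built, so $|p| = O(|\varphi|)$ which is polynomial as required.

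For the copyless bound, I would prove by induction the strengthened statement: the number of leaves labelled $x_i$ in $p_\varphi$ equals the number of leaves labelled $x_i$ in $\varphi$. The base cases are immediate, and each inductive rule simply combines the subpolynomials by a constant amount of arithmetic whose new leaves are only constants and existing subtrees, without duplicating any subtree — which is exactly why the $\vee$ gadget had to be chosen in the nested form. In particular, each $x_i$ occurs at most $|\varphi|$ times in $p$. To exhibit the required copyless witness $p'$, I would walk over $p$ and relabel the $j$-th occurrence of $x_i$ by a fresh variable $x_{j,i}$; the resulting polynomial $p'$ has every variable appearing at most once as a leaf and hence is copyless, while substituting $x_{j,i} \coloneqq x_i$ recovers $p$. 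This shows $p$ is $|\varphi|$-copyless. The only delicate point in the whole argument is being careful about the tree-circuit representation in the $\vee$ case, since the naive polynomial-expression viewpoint would wrongly suggest a blow-up in the number of occurrences of variables.
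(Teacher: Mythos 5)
Your proposal is correct and follows essentially the same recursive arithmetisation as the paper's proof ($p_{x_i}=x_i$, $1-p$ for negation, $p_1\cdot p_2$ for conjunction), with the paper simply reducing to the basis $\{\neg,\wedge\}$ where you instead give an explicit nested gadget for $\vee$. Your explicit leaf-counting and relabelling argument for the $|\varphi|$-copyless claim is a slightly more careful spelling-out of what the paper leaves implicit.
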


We are ready to prove the main theorem.

\theorempspace*

\begin{proof}
Regarding the upper bound, by \cref{lemma:copyless_to_WA}, we know that a CCRA can be translated to a weighted automaton of exponential size. It is known that the equivalence problem for weighted automata is in NC$^2$~\cite{Tzeng96}. Since problems in NC$^2$ can be solved sequentially in polylogarithmic space~\cite{Ruzzo81}, this essentially yields a \textsc{PSpace} algorithm. One has to take care that the weighted automaton is not fully precomputed (as it would require too much space). A standard approach computing  the states and transitions on the fly solves this issue. See e.g.\ \cite[Section 6.1]{JeckerMP24} for a similar construction.

The rest of the proof is devoted to the matching \textsc{PSpace} lower bound.
We reduce from the validity problem for Quantified Boolean Formulas (QBF), which is known to be \textsc{PSpace}-complete \cite[Theorem 19.1]{papadimitriou2003computational}. One can assume the input is a formula of the form
\begin{align}\label{eq:qbf}
\psi =  \forall x_1 \exists y_1 \ldots \forall x_k \exists y_k \;\; \varphi(x_1,y_1,\ldots,x_k,y_k),
\end{align}
where $\varphi$ is quantifier-free. The variables $x_i$ and $y_i$ alternate, $x_i$ are quantified universally and $y_i$ are quantified existentially. For simplicity, we write $\vec{x} = (x_1,\ldots,x_k)$ and $\vec{y} = (y_1,\ldots,y_k)$. We write $\varphi(\vec{x},\vec{y})$ instead of $\varphi(x_1,y_1,\ldots,x_k,y_k)$.
Given $\vec{v} \in \set{0,1}^{2k}$ we denote by $\varphi(\vec{v})$ the truth value of the formula $\varphi$ with all variables evaluated according to $\vec{v}$.

For the reduction, we will need to go through many evaluations of $x_i$ and $y_i$ in a way that respects the quantifiers. It will be convenient to define these evaluations using auxiliary formulas. Let $\vec{x}'$ and $\vec{y}'$ be fresh copies of variables in $\vec{x}$ and $\vec{y}$ all of dimension $k$. We define three quantifier-free formulas: $\textsc{start}(\vec{x},\vec{y})$, $\textsc{next}(\vec{x},\vec{y},\vec{x}',\vec{y}')$ and $\textsc{end}(\vec{x},\vec{y})$, as follows.

\begin{align*}
 \textsc{start}(\vec{x},\vec{y}) = \bigwedge_{i = 1}^k \neg x_i, \hspace{1cm} \textsc{end}(\vec{x},\vec{y}) = \bigwedge_{i = 1}^k x_i.
\end{align*}
Note that $\textsc{start}$ and $\textsc{end}$ do not use $\vec{y}$, but in this form it will be easier to state the claim later explaining their purpose. We also define
\begin{align*}
 \textsc{next}(\vec{x},\vec{y},\vec{x}',\vec{y}') = \bigwedge_{i = 1}^k \hspace{0.5cm} \left( \neg x_i \wedge \bigwedge_{j = i+1}^k x_j \right) \implies \\
  \left( x_i' \wedge \bigwedge_{j = i+1}^k \neg x_j' \;\; \wedge \;\; \bigwedge_{j=1}^{i-1} (x_j \iff x_j') \wedge (y_j \iff y_j')  \right).
\end{align*}

To understand the formulas, it is easier to ignore the $\vec{y}$ and $\vec{y}'$ variables at first. Then these formulas essentially encode a binary counter with $k$ bits: $\textsc{start}$ encodes that all $x_i$ are $0$; $\textsc{end}$ encodes that all $x_i$ are $1$; and $\textsc{next}$ encodes that $\vec{x}'$ is $\vec{x}$ increased by $1$ in binary. The values of $\vec{y}$ can be guessed to anything in $\textsc{start}$ and $\textsc{end}$. In $\textsc{next}$ we keep consistently the guessed existential values for all unchanged universal variables.
The following lemma formally states the purpose of the formulas.

\begin{claim}\label{claim:formulas}
The formula $\psi$ in \cref{eq:qbf} is valid if and only if there exists a sequence $\vec{v}_1,\ldots,\vec{v}_n \in \set{0,1}^{2k}$ such that:
\begin{enumerate}
\item $\varphi(\vec{v}_i)$ is true for all $1 \le i \le n$;
 \item $\textsc{start}(\vec{v}_1)$ is true;
 \item $\textsc{next}(\vec{v}_i,\vec{v}_{i+1})$ is true for all $1 \le i \le n-1$;
 \item $\textsc{end}(\vec{v}_n)$ is true.
\end{enumerate}
\end{claim}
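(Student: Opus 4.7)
The plan is to prove both implications by relating the existence of a sequence $\vec{v}_1, \ldots, \vec{v}_n$ satisfying (1)--(4) to the existence of a winning Skolem strategy for $\psi$. I would first unpack the three auxiliary formulas on the $\vec{x}$-projection: conditions (2) and (4) pin the first and last $\vec{x}$ to $0\ldots 0$ and $1\ldots 1$, and $\textsc{next}$ encodes the standard binary increment on $\vec{x}$ (treating $x_k$ as the least significant bit), since the premise $\neg x_i \wedge x_{i+1} \wedge \ldots \wedge x_k$ picks out the unique carry-propagation position $i^\ast$ and forces $x_{i^\ast}' = 1$, $x_{i^\ast+1}' = \ldots = x_k' = 0$, and $x_j' = x_j$ for $j < i^\ast$. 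Consequently the sequence $(\vec{x}^{(m)})_{m \ge 1}$ deterministically increments from $0\ldots 0$, and the first time it reaches $1\ldots 1$ is at $m = 2^k$; so the first $2^k$ elements traverse $\{0,1\}^k$ in binary counter order, with each assignment occurring exactly once. The additional $\vec{y}$-clauses in $\textsc{next}$ then say that $y_j$ may change only when some $x_\ell$ with $\ell \le j$ changes, which precisely mirrors the quantifier dependency that $y_j$ is chosen with knowledge of $x_1, \ldots, x_j$ only.

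For the forward implication, validity of $\psi$ yields Skolem functions $f_j\colon \{0,1\}^j \to \{0,1\}$ with
\[
\varphi\bigl(x_1, f_1(x_1),\, \ldots,\, x_k, f_k(x_1,\ldots,x_k)\bigr) = \top \text{ for every } \vec{x} \in \{0,1\}^k.
\]
I would enumerate the $2^k$ assignments $\vec{x}^{(1)}, \ldots, \vec{x}^{(2^k)}$ in binary counter order and take $\vec{v}_m$ to have $\vec{x}$-part $\vec{x}^{(m)}$ and $y_j$-part $f_j(x_1^{(m)}, \ldots, x_j^{(m)})$. Condition (1) holds by Skolemization, (2) and (4) are the endpoints of the enumeration, and $\textsc{next}(\vec{v}_m, \vec{v}_{m+1})$ holds because the increment flips at some position $i^\ast$, and for $j < i^\ast$ the prefix $x_1, \ldots, x_j$ is preserved, so $y_j^{(m)} = f_j(x_1^{(m)}, \ldots, x_j^{(m)}) = f_j(x_1^{(m+1)}, \ldots, x_j^{(m+1)}) = y_j^{(m+1)}$.

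For the backward implication, I would restrict attention to the first $2^k$ elements of the sequence, which by the opening observation enumerate every $\vec{x} \in \{0,1\}^k$ exactly once. To recover Skolem functions I would use the key combinatorial fact that, for any prefix $P \in \{0,1\}^j$, the positions $m$ with $(x_1^{(m)}, \ldots, x_j^{(m)}) = P$ form a contiguous block in the enumeration, and between any two consecutive positions in this block the flip position $i^\ast$ satisfies $i^\ast > j$ (otherwise some $x_\ell$ with $\ell \le j$ would change). The $\textsc{next}$ clauses for $\vec{y}$ then force $y_j$ to take a common value on the entire block; I would define $f_j(P)$ to be this value. Condition (1) then asserts that $\varphi\bigl(\vec{x}, f_1(x_1), \ldots, f_k(x_1, \ldots, x_k)\bigr)$ is true for every $\vec{x}$, so the $f_j$ form a winning Skolem strategy and $\psi$ is valid.

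The main, but purely bookkeeping, obstacle is verifying the ``contiguity of prefix-classes'' property of the binary-counter enumeration and checking that the inequality $i^\ast > j$ between consecutive positions in a $j$-block really does propagate constancy of $y_j$ across the entire block; both follow by a short induction on $k - j$ together with determinism of the increment rule.
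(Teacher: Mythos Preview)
Your proposal is correct and follows the same approach the paper intends; the paper's own proof is a two-sentence sketch (``the formulas go through all evaluations of the universal variables while guessing existential values consistently; condition (1) then gives validity''), and your argument is a careful fleshing-out of exactly that idea via Skolem functions and the contiguity of prefix-blocks in the binary-counter enumeration. One tiny point worth making explicit: in the backward direction you should state that conditions (2), (3), (4) together with the determinism of the increment force $n \ge 2^k$, so that restricting to the first $2^k$ elements is legitimate; you rely on this but only implicitly.
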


\begin{claimproof}
The formulas $\textsc{start}$, $\textsc{next}$ and $\textsc{end}$ are defined in such a way that they go through all possible evaluations of universal variables, guessing consistently the values for existential variables. The first condition guarantees that $\psi$ is valid.
\end{claimproof}

Thanks to \cref{claim:formulas} we will not need to differentiate between universal and existential variables. 
In the following we will implicitly use \cref{lem:formula_to_poly}.
To avoid additional notation we will write formula names for their corresponding polynomials.
Let $\ell = \max\{|\textsc{start}|, |\textsc{end}|, |\textsc{next}|, |\varphi|\}$, then all polynomials corresponding to these formulas are $\ell$-copyless.
To ease the notation we write
$$
\vec{z} = (x_1^1, \ldots, x_k^1, y_1^1, \ldots, y_k^1, \ldots, x_1^\ell, \ldots, x_k^\ell, y_1^\ell, \ldots, y_k^\ell)
$$ for $\ell$ identical copies of vectors of variables in $\vec{x}$ and $\vec{y}$. Note that identical copies occur on indices equal modulo $2k$ (this will be useful when defining the transitions). The number of copies will be sufficient to evaluate all polynomials corresponding to the formulas in a copyless manner. To emphasise this, we will write $\textsc{start}(\vec{z})$, $\textsc{end}(\vec{z})$ and $\textsc{next}(\vec{z})$.

We are ready to define the CCRA $\C = (Q, p_0, d, \Sigma, \delta, \mu, \nu)$, where: $Q = \set{\, p_i, q_i \mid 0 \le i \le 2k\,}$; $d = 8\ell k+1$; $\Sigma = \set{0,1,\#}$. We denote the $8\ell k+1$ variables as follows: $\vec{z}$, $\vec{z}'$, $\vec{z}''$, $\vec{z}^{\text{old}}$ and $s$. That is: four disjoint copies corresponding to $\ell$ copies of $\vec{x}$, $\vec{y}$ and one extra variable $s$. We denote the variables in the copies by $z_i$, $z_i'$, $z_i''$, $z_i^{\text{old}}$ for $1 \le i \le 2\ell k$. The initial function assigns the value $1$ to all variables. It will be important that $\mu(s) = 1$; for all other variables the initial value could be arbitrary.
The final function is defined by $\nu(x) =0$ for all $x \in Q \setminus \set{q_{0}}$ and
$\nu(q_{0}) =  s \cdot \textsc{end}(\vec{z}^{\text{old}})$.

Before we define the transitions we give an intuition on how the automaton works. We call a subword of length $2k$ a block. The automaton will read a sequence of blocks which correspond to consecutive evaluations $\vec{v}_i$ from \cref{claim:formulas} and store them in multiple copies of $\vec{x}$ and $\vec{y}$. After reading every block the automaton will check whether: $\textsc{next}$ holds with the previous block; and whether $\varphi$ holds on the current block. As an invariant, the register $s$ will have value $1$ if no error has been detected, and $0$ otherwise.

Most of the transitions will initialise some registers. Given a set of variables $Z$ and $b \in \set{0,1}$, we define the copyless polynomial map $P_{Z,b}$ as: $P_{Z,b}(z) = b$ for $z \in Z$ and $P_{Z,b}(z) = z$ otherwise. In words, the variables in $Z$ are initialised to $b$ and all others keep their previous value. We will use one type of sets $Z$, defined as follows: $Z_i = \set{\,z_j, z_j',z_j'' \mid j \equiv i \mod 2k\,}$. This will allow us to remember $3\ell$ copies at once.

Formally, we define the transitions as follows (see \cref{fig:CCRA_QBF} for the shape of the automaton without the register updates):
\begin{enumerate}
 \item\label{eq:guess1} $\delta(p_{i-1},b) = (p_{i},P_{Z_i,b})$ for all $1 \le i \le 2k$ and $b \in \set{0,1}$.
\item\label{eq:guess2} $\delta(q_{i-1},b) = (q_{i},P_{Z_i,b})$ for all $1 \le i \le 2k$ and $b \in \set{0,1}$.
 \item\label{eq:firsthash} $\delta(p_{2k},\#) = (q_0,Q_{0})$, where $Q_0$ resets all variables to $0$ except for: $\vec{z}^{\text{old}}$ where it puts the content of $\vec{z}''$, i.e., $Q_0(z_i^{\text{old}}) = z_i''$ for all $1 \le i \le 2\ell k$; and $Q_0(s) = s \cdot \textsc{start}(\vec{z}') \cdot \varphi(\vec{z})$.
 \item\label{eq:secondhash} $\delta(q_{2k},\#) = (q_0',R_{0})$, where $R_0$ resets all variables to $0$ except for: $\vec{z}^{\text{old}}$ where it puts the content of $\vec{z}''$, i.e., $Q_0(z_i^{\text{old}}) = z_i''$ for all $1 \le i \le 2\ell k$; and $Q_0(s) = s \cdot \textsc{next}(\vec{z}^{\text{old}},\vec{z}') \cdot \varphi(\vec{z})$.
\end{enumerate}
Note that all polynomials are copyless.

\begin{figure}
\centering
\begin{tikzpicture}
\node[state] (p0) {$p_0$};
\node[state, right = 1cm of p0] (p1) {$p_1$};
\node[state, right = 1cm of p1] (p2) {$p_2$};

\node[state, right = 1cm of p2] (q0) {$q_0$};
\node[state, right = 1cm of q0] (q1) {$q_1$};
\node[state, right = 1cm of q1] (q2) {$q_2$};
   
\path
(p0) edge[->,bend right] node[below] {$0$} (p1)
(p1) edge[->,bend right] node[below] {$0$} (p2)
(p0) edge[->,bend left] node[above] {$1$} (p1)
(p1) edge[->,bend left] node[above] {$1$} (p2)
(p2) edge[->] node[above] {$\#$} (q0)
(q0) edge[->,bend right] node[below] {$0$} (q1)
(q1) edge[->,bend right] node[below] {$0$} (q2)
(q0) edge[->,bend left] node[above] {$1$} (q1)
(q1) edge[->,bend left] node[above] {$1$} (q2)
(q2) edge[->,bend right = 60] node[above] {$\#$} (q0)
(q0.north) edge[->] ++ (0,0.5)
(p0.west) edge[<-] ++ (-0.5, 0)
;
\end{tikzpicture}
\caption{Example for $k = 1$. The state $q_0$ has an outgoing edge as it is the only one that has a possibly nonzero output.}\label{fig:CCRA_QBF}
\end{figure}

The proof that the reduction works follows essentially from \cref{claim:formulas}. The transitions in \cref{eq:guess1} and \cref{eq:guess2} guess the evaluations $\vec{v}_i$. These are stored in three copies: $\vec{z}$, $\vec{z}'$, $\vec{z}''$. The remaining two transitions verify the correctness of these evaluations, i.e., whether they satisfy the conditions in \cref{claim:formulas}. Note that as an invariant these transitions keep in $\vec{v}^{\text{old}}$ the previous valuation. In both \cref{eq:firsthash}, \cref{eq:secondhash} we check whether $\varphi(\vec{v}_i)$ holds. Additionally, in \cref{eq:firsthash} we check whether $\textsc{start}(\vec{v}_1)$ is true; and in \cref{eq:secondhash} we check whether $\textsc{next}(\vec{v}_{i-1},\vec{v}_{i})$ is true. All checks are multiplied into the register $s$, which becomes $0$ if any error occurs, and remains $1$ otherwise. Finally, the output function guarantees that a nonzero value can be output only if $\textsc{end}(\vec{v}_n)$ holds.
\end{proof}



\bibliography{lipics-v2021-sample-article}

\begin{thebibliography}{10}

\bibitem{AlurDDRY13}
Rajeev Alur, Loris D'Antoni, Jyotirmoy~V. Deshmukh, Mukund Raghothaman, and
  Yifei Yuan.
\newblock {R}egular {F}unctions and {C}ost {R}egister {A}utomata.
\newblock In {\em 28th Annual {ACM/IEEE} Symposium on Logic in Computer
  Science, {LICS} 2013, New Orleans, LA, USA, June 25-28, 2013}, pages 13--22.
  {IEEE} Computer Society, 2013.
\newblock \href {https://doi.org/10.1109/LICS.2013.65}
  {\path{doi:10.1109/LICS.2013.65}}.

\bibitem{BarloyFLM22}
Corentin Barloy, Nathana{\"{e}}l Fijalkow, Nathan Lhote, and Filip Mazowiecki.
\newblock A robust class of linear recurrence sequences.
\newblock {\em Inf. Comput.}, 289(Part):104964, 2022.
\newblock \href {https://doi.org/10.1016/J.IC.2022.104964}
  {\path{doi:10.1016/J.IC.2022.104964}}.

\bibitem{bell2021noncommutative}
Jason Bell and Daniel Smertnig.
\newblock Noncommutative rational {P}\'olya series.
\newblock {\em Selecta Math. (N.S.)}, 27(3):Paper No. 34, 34, 2021.
\newblock \href {https://doi.org/10.1007/s00029-021-00629-2}
  {\path{doi:10.1007/s00029-021-00629-2}}.

\bibitem{BellS23}
Jason~P. Bell and Daniel Smertnig.
\newblock Computing the linear hull: Deciding deterministic? and unambiguous?
  for weighted automata over fields.
\newblock In {\em 38th Annual {ACM/IEEE} Symposium on Logic in Computer
  Science, {LICS} 2023, Boston, MA, USA, June 26-29, 2023}, pages 1--13.
  {IEEE}, 2023.
\newblock \href {https://doi.org/10.1109/LICS56636.2023.10175691}
  {\path{doi:10.1109/LICS56636.2023.10175691}}.

\bibitem{BenaliouaLR24}
Yahia~Idriss Benalioua, Nathan Lhote, and Pierre{-}Alain Reynier.
\newblock {M}inimizing {C}ost {R}egister {A}utomata over a {F}ield.
\newblock In Rastislav Kr{\'{a}}lovic and Anton{\'{\i}}n Kucera, editors, {\em
  49th International Symposium on Mathematical Foundations of Computer Science,
  {MFCS} 2024, August 26-30, 2024, Bratislava, Slovakia}, volume 306 of {\em
  LIPIcs}, pages 23:1--23:15. Schloss Dagstuhl - Leibniz-Zentrum f{\"{u}}r
  Informatik, 2024.
\newblock \href {https://doi.org/10.4230/LIPICS.MFCS.2024.23}
  {\path{doi:10.4230/LIPICS.MFCS.2024.23}}.

\bibitem{berstel-reutenauer11}
Jean Berstel and Christophe Reutenauer.
\newblock {\em Noncommutative {R}ational {S}eries with {A}pplications}, volume
  137 of {\em Encyclopedia of Mathematics and its Applications}.
\newblock Cambridge University Press, Cambridge, 2011.

\bibitem{Bezivin86}
Jean-Paul B\'ezivin.
\newblock Sur un th\'eor\`eme de {G}. {P}\'olya.
\newblock {\em J. Reine Angew. Math.}, 364:60--68, 1986.
\newblock \href {https://doi.org/10.1515/crll.1986.364.60}
  {\path{doi:10.1515/crll.1986.364.60}}.

\bibitem{CadilhacMPPS24}
Micha{\"{e}}l Cadilhac, Filip Mazowiecki, Charles Paperman, Michal Pilipczuk,
  and G{\'{e}}raud S{\'{e}}nizergues.
\newblock {O}n {P}olynomial {R}ecursive {S}equences.
\newblock {\em Theory Comput. Syst.}, 68(4):593--614, 2024.
\newblock \href {https://doi.org/10.1007/S00224-021-10046-9}
  {\path{doi:10.1007/S00224-021-10046-9}}.

\bibitem{ChattopadhyayMM21}
Agnishom Chattopadhyay, Filip Mazowiecki, Anca Muscholl, and Cristian Riveros.
\newblock Pumping lemmas for weighted automata.
\newblock {\em Log. Methods Comput. Sci.}, 17(3), 2021.
\newblock \href {https://doi.org/10.46298/LMCS-17(3:7)2021}
  {\path{doi:10.46298/LMCS-17(3:7)2021}}.

\bibitem{CzerwinskiLMPW22}
Wojciech Czerwinski, Engel Lefaucheux, Filip Mazowiecki, David Purser, and
  Markus~A. Whiteland.
\newblock The boundedness and zero isolation problems for weighted automata
  over nonnegative rationals.
\newblock In Christel Baier and Dana Fisman, editors, {\em {LICS} '22: 37th
  Annual {ACM/IEEE} Symposium on Logic in Computer Science, Haifa, Israel,
  August 2 - 5, 2022}, pages 15:1--15:13. {ACM}, 2022.
\newblock \href {https://doi.org/10.1145/3531130.3533336}
  {\path{doi:10.1145/3531130.3533336}}.

\bibitem{DaviaudJLMP021}
Laure Daviaud, Marcin Jurdzinski, Ranko Lazic, Filip Mazowiecki, Guillermo~A.
  P{\'{e}}rez, and James Worrell.
\newblock When are emptiness and containment decidable for probabilistic
  automata?
\newblock {\em J. Comput. Syst. Sci.}, 119:78--96, 2021.
\newblock \href {https://doi.org/10.1016/J.JCSS.2021.01.006}
  {\path{doi:10.1016/J.JCSS.2021.01.006}}.

\bibitem{droste2009handbook}
Manfred Droste, Werner Kuich, and Heiko Vogler.
\newblock {\em Handbook of {W}eighted {A}utomata}.
\newblock Springer Science \& Business Media, 2009.

\bibitem{everest2003recurrence}
Graham Everest, Alfred~Jacobus Van Der~Poorten, Igor Shparlinski, Thomas Ward,
  et~al.
\newblock {\em {R}ecurrence {S}equences}, volume 104.
\newblock American Mathematical Society Providence, RI, 2003.

\bibitem{FijalkowRW22}
Nathana{\"{e}}l Fijalkow, Cristian Riveros, and James Worrell.
\newblock Probabilistic automata of bounded ambiguity.
\newblock {\em Inf. Comput.}, 282:104648, 2022.
\newblock \href {https://doi.org/10.1016/J.IC.2020.104648}
  {\path{doi:10.1016/J.IC.2020.104648}}.

\bibitem{halava2005skolem}
Vesa Halava, Tero Harju, Mika Hirvensalo, and Juhani Karhum{\"a}ki.
\newblock {S}kolem’s {P}roblem --- {O}n the {B}order between {D}ecidability
  and {U}ndecidability.
\newblock {\em TUCS Technical Reports}, 683, 2005.

\bibitem{Hashiguchi88}
Kosaburo Hashiguchi.
\newblock {A}lgorithms for {D}etermining {R}elative {S}tar {H}eight and {S}tar
  {H}eight.
\newblock {\em Inf. Comput.}, 78(2):124--169, 1988.
\newblock \href {https://doi.org/10.1016/0890-5401(88)90033-8}
  {\path{doi:10.1016/0890-5401(88)90033-8}}.

\bibitem{JeckerMP24}
Isma{\"{e}}l Jecker, Filip Mazowiecki, and David Purser.
\newblock {D}eterminisation and {U}nambiguisation of {P}olynomially-{A}mbiguous
  {R}ational {W}eighted {A}utomata.
\newblock In Pawel Sobocinski, Ugo~Dal Lago, and Javier Esparza, editors, {\em
  Proceedings of the 39th Annual {ACM/IEEE} Symposium on Logic in Computer
  Science, {LICS} 2024, Tallinn, Estonia, July 8-11, 2024}, pages 46:1--46:13.
  {ACM}, 2024.
\newblock \href {https://doi.org/10.1145/3661814.3662073}
  {\path{doi:10.1145/3661814.3662073}}.

\bibitem{Kostolanyi23}
Peter Kostol{\'{a}}nyi.
\newblock {P}olynomially {A}mbiguous {U}nary {W}eighted {A}utomata over
  {F}ields.
\newblock {\em Theory Comput. Syst.}, 67(2):291--309, 2023.
\newblock \href {https://doi.org/10.1007/S00224-022-10107-7}
  {\path{doi:10.1007/S00224-022-10107-7}}.

\bibitem{MazowieckiR15}
Filip Mazowiecki and Cristian Riveros.
\newblock {M}aximal {P}artition {L}ogic: {T}owards a {L}ogical
  {C}haracterization of {C}opyless {C}ost {R}egister {A}utomata.
\newblock In Stephan Kreutzer, editor, {\em 24th {EACSL} Annual Conference on
  Computer Science Logic, {CSL} 2015, September 7-10, 2015, Berlin, Germany},
  volume~41 of {\em LIPIcs}, pages 144--159. Schloss Dagstuhl - Leibniz-Zentrum
  f{\"{u}}r Informatik, 2015.
\newblock \href {https://doi.org/10.4230/LIPICS.CSL.2015.144}
  {\path{doi:10.4230/LIPICS.CSL.2015.144}}.

\bibitem{MazowieckiR19}
Filip Mazowiecki and Cristian Riveros.
\newblock {C}opyless cost-register automata: {S}tructure, expressiveness, and
  closure properties.
\newblock {\em J. Comput. Syst. Sci.}, 100:1--29, 2019.
\newblock \href {https://doi.org/10.1016/J.JCSS.2018.07.002}
  {\path{doi:10.1016/J.JCSS.2018.07.002}}.

\bibitem{OuaknineW15}
Jo{\"{e}}l Ouaknine and James Worrell.
\newblock On linear recurrence sequences and loop termination.
\newblock {\em {ACM} {SIGLOG} News}, 2(2):4--13, 2015.
\newblock \href {https://doi.org/10.1145/2766189.2766191}
  {\path{doi:10.1145/2766189.2766191}}.

\bibitem{papadimitriou2003computational}
Christos~H. Papadimitriou.
\newblock Computational complexity.
\newblock In {\em Encyclopedia of Computer Science}, pages 260--265. 2003.

\bibitem{paz71}
Azaria Paz.
\newblock {\em {I}ntroduction to {P}robabilistic {A}utomata}.
\newblock Academic Press, 1971.

\bibitem{puch-smertnig24}
Antoni Puch and Daniel Smertnig.
\newblock {F}actoring through monomial representations: {A}rithmetic
  characterizations and ambiguity of weighted automata.
\newblock 2024.
\newblock Preprint.
\newblock \href {https://arxiv.org/abs/2410.03444} {\path{arXiv:2410.03444}}.

\bibitem{Reutenauer79}
Christophe Reutenauer.
\newblock On {P}olya series in noncommuting variables.
\newblock In Lothar Budach, editor, {\em Fundamentals of Computation Theory,
  {FCT} 1979, Proceedings of the Conference on Algebraic, Arthmetic, and
  Categorial Methods in Computation Theory, Berlin/Wendisch-Rietz, Germany,
  September 17-21, 1979}, pages 391--396. Akademie-Verlag, Berlin, 1979.

\bibitem{Ruzzo81}
Walter~L. Ruzzo.
\newblock {O}n {U}niform {C}ircuit {C}omplexity.
\newblock {\em J. Comput. Syst. Sci.}, 22(3):365--383, 1981.
\newblock \href {https://doi.org/10.1016/0022-0000(81)90038-6}
  {\path{doi:10.1016/0022-0000(81)90038-6}}.

\bibitem{Schutzenberger61b}
Marcel~Paul Sch{\"{u}}tzenberger.
\newblock {O}n the {D}efinition of a {F}amily of {A}utomata.
\newblock {\em Inf. Control.}, 4(2-3):245--270, 1961.
\newblock \href {https://doi.org/10.1016/S0019-9958(61)80020-X}
  {\path{doi:10.1016/S0019-9958(61)80020-X}}.

\bibitem{Tzeng96}
Wen{-}Guey Tzeng.
\newblock {O}n {P}ath {E}quivalence of {N}ondeterministic {F}inite {A}utomata.
\newblock {\em Inf. Process. Lett.}, 58(1):43--46, 1996.
\newblock \href {https://doi.org/10.1016/0020-0190(96)00039-7}
  {\path{doi:10.1016/0020-0190(96)00039-7}}.

\end{thebibliography}
\clearpage
\appendix

\section{Exponential Polynomials and LRS} \label{subsec:appendix-exppoly}

In this appendix we discuss exponential polynomials, exponential polynomial sequences and their relation to rational functions and LRS in somewhat more detail.
The material and treatment are essentially standard, see also \cite[Chapter~6]{berstel-reutenauer11}\cite{everest2003recurrence}\cite[Proposition 2.11]{halava2005skolem}, but a few points are somewhat subtle in positive characteristic.

Every EPS satisfies a linear recurrence and is hence an LRS.
In the converse direction, in characteristic $0$, if $(a_n)_n$ is an LRS, then there exists some $n_0$ such that $a_n$ for $n \ge n_0$ coincides with a (unique) EPS.
So, in characteristic $0$, there is a close relation between LRS and EPS.

In characteristic $p>0$, it is no longer true that every LRS coincides with an EPS (even for large enough $n$), because the map $n \mapsto \overline{\binom{n+j-1}{j-1}}$ for $j > p$ does not factor through a function $\bF_p \to \bF_p$, and hence cannot be represented by a polynomial $q \in \bF_p[x]$.
However, if an LRS can be expressed as an EPS, the EPS is still unique, with the caveat
that the EPS has multiple representations by exponential polynomials, leading us to consider the minimal degree representation, that is, that of degree at most $p-1$ (which is unique).
We now discuss this in detail.

Throughout the section, let $K$ be a field and $\overline{K}$ its algebraic closure.
If $K=\Q$, this is the field of all algebraic numbers $\overline{\Q}$, a subfield of the complex numbers.
The characteristic of $K$ is either $0$ or a prime number $p$.
If it is $0$, then $\Q$ embeds uniquely into $K$ as prime field, and we can assume $\Q \subseteq K$.
Similarly, if $p > 0$, then $\bF_p \subseteq K$.

A sequence $(a_n)_n$ is an LRS if and only if the (formal) generating series $F = \sum_{n=0}^\infty a_n x^n \in K\llbracket x \rrbracket$ is a rational function.
Using the existence and uniqueness of partial fraction decompositions of rational functions over the algebraically closed field $\overline{K}$, one obtains the following well-known result.

\begin{theorem} \label{t:lrs-repr}
    Let $(a_n)_{n}$ be an LRS over $K$.
    Then there exist $l \ge 0$, pairwise distinct $\lambda_1$, \dots,~$\lambda_l \in \overline{K}^\times$, and for each $1 \le i \le l$ natural numbers $k_i \ge 0$ and coefficients $\alpha_{i,1}$, \dots, $\alpha_{i,k_i} \in \overline{K}$ with $\alpha_{i,k_i} \ne 0$ such that
    \begin{equation} \label{eq:pfd}
    a_n = \sum_{i=1}^l \sum_{j=1}^{k_i} \alpha_{i,j} \binom{n+j-1}{j-1} \lambda_i^{n} \qquad\text{for all sufficiently large $n$.}
    \end{equation}
    The elements $\lambda_i$ and $\alpha_{i,j}$ are uniquely determined by the sequence $(a_n)_n$.
    The set $\{ \lambda_1, \dots, \lambda_l \}$ is the set of nonzero characteristic roots of $(a_n)_n$.
\end{theorem}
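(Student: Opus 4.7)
The plan is to prove \cref{t:lrs-repr} via the standard correspondence between LRS and rational generating functions, combined with partial fraction decomposition over the algebraically closed field $\overline{K}$.

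First I would recall that $(a_n)_n$ being an LRS is equivalent to its generating series $F(x) = \sum_{n=0}^\infty a_n x^n$ being a rational function $p/q$ with $p,q \in K[x]$ coprime and $q(0) \ne 0$; this is stated in the preliminaries and is classical. Viewing $F$ as an element of $\overline{K}(x)$, I factor the denominator completely as $q(x) = q(0) \prod_{i=1}^l (1 - \lambda_i x)^{k_i}$ with $\lambda_1, \dots, \lambda_l \in \overline{K}^\times$ pairwise distinct (the $\lambda_i$ are precisely the reciprocals of the roots of $q$, which are nonzero since $q(0) \ne 0$). Partial fraction decomposition over $\overline{K}$ then yields a unique expression
\[
F(x) = P(x) + \sum_{i=1}^l \sum_{j=1}^{k_i} \frac{\alpha_{i,j}}{(1 - \lambda_i x)^j},
\]
where $P \in \overline{K}[x]$ is the polynomial part (nonzero only if $\deg p \geq \deg q$) and $\alpha_{i,j} \in \overline{K}$ with $\alpha_{i,k_i} \ne 0$ (coprimality of $p,q$ ensures the top coefficients do not vanish).

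Next, I expand each simple summand as a formal power series using the standard identity
\[
\frac{1}{(1 - \lambda x)^j} = \sum_{n=0}^\infty \binom{n+j-1}{j-1} \lambda^n x^n,
\]
which holds over any field (and in positive characteristic the binomial coefficients are interpreted in the prime field). Reading off the coefficient of $x^n$ on both sides, for every $n > \deg P$ we obtain the asserted formula \eqref{eq:pfd}. For smaller $n$ the coefficient of $x^n$ in $P$ contributes an extra term, which is why the equality is asserted only for sufficiently large $n$.

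For uniqueness of the $\lambda_i$ and $\alpha_{i,j}$, I would argue that any representation as in \eqref{eq:pfd} valid for sufficiently large $n$ reconstructs the partial fraction decomposition of $F$ up to a polynomial correction: summing up $\sum_{i,j} \alpha_{i,j} \binom{n+j-1}{j-1}\lambda_i^n x^n$ over $n \geq 0$ yields a rational function differing from $F$ by a polynomial, and uniqueness of the partial fraction decomposition over $\overline{K}$ pins down both the $\lambda_i$ and the $\alpha_{i,j}$. Finally, the identification of $\{\lambda_1, \dots, \lambda_l\}$ with the set of nonzero characteristic roots follows from the third bullet listed in the preliminaries: writing $F = p/q$ in lowest terms, $1/\lambda_i$ are exactly the poles of $F$, which are exactly the nonzero characteristic roots of $(a_n)_n$.

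The step I expect to be the most delicate is the positive-characteristic subtlety. Specifically, in characteristic $p>0$ the polynomials $\binom{X+j-1}{j-1}$ are not uniquely determined by the sequence $n \mapsto \binom{n+j-1}{j-1}$ once $j > p$; so to assert uniqueness of the $\alpha_{i,j}$ I must pass through the generating series rather than trying to equate polynomial coefficients in $n$. Working inside $\overline{K}\llbracket x \rrbracket$ and using uniqueness of partial fraction decompositions in $\overline{K}(x)$ sidesteps this issue cleanly.
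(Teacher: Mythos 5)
Your proposal is correct and follows essentially the same route as the paper: pass to the rational generating series, take the partial fraction decomposition over $\overline{K}$, expand each term via $\tfrac{1}{(1-\lambda x)^j}=\sum_n \binom{n+j-1}{j-1}\lambda^n x^n$ (valid in any characteristic), and derive both existence (for $n$ beyond the degree of the polynomial part) and uniqueness from the uniqueness of the partial fraction decomposition. Your remarks on why $\alpha_{i,k_i}\ne 0$ follows from coprimality and on why the positive-characteristic subtlety is sidestepped by working in $\overline{K}\llbracket x\rrbracket$ rather than with polynomial identities in $n$ are correct and, if anything, slightly more explicit than the paper's own writeup.
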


We also recall (but do not need) that \cref{eq:pfd} holds for \emph{all} $n \ge 0$, that is, not only for sufficiently large $n$, if and only if all characteristic roots of $(a_n)_n$ are nonzero.
A converse to \cref{t:lrs-repr} also holds: every sequence expressed as in \cref{eq:pfd} is an LRS over the algebraically closed $\overline{K}$.

\begin{example}
    The sequence $(a_n)_n=n$, whose only characteristic root is $1$, can be represented as $a_n = n \cdot 1^n = \big(\binom{n+1}{1} - \binom{n}{0}\big) \cdot 1^n$.
    For the Fibonacci sequence one obtains the well-known representation $F_n = \tfrac{1}{\sqrt{5}} \varphi^n - \tfrac{1}{\sqrt{5}} \psi^n$.
    Note that $\varphi$, $\psi \in \overline{\Q} \setminus \Q$.
    While the Fibonacci sequence cannot be recognised by a polynomially ambiguous weighted automaton with weights in $\Q$, this formula shows that there is such an automaton with weights in the quadratic field $\Q(\varphi)$.
\end{example}

Before proving \cref{t:lrs-repr}, we recall one more lemma (and in particular, that it also holds in positive characteristic).

\begin{lemma}
    For all $\alpha \in K$,
    \[
    \frac{1}{(1-\alpha x)^k} = \sum_{n = 0}^\infty \binom{n+k-1}{k-1} \alpha^n x^n \in K\llbracket x \rrbracket.
    \]
\end{lemma}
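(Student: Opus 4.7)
The plan is to proceed by induction on $k \ge 1$, staying entirely within the formal power series ring $K\llbracket x\rrbracket$ so that the argument is valid in every characteristic.

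For the base case $k=1$, the identity reduces to the standard geometric series $\frac{1}{1-\alpha x} = \sum_{n=0}^\infty \alpha^n x^n$, which follows by verifying that $(1-\alpha x) \cdot \sum_{n=0}^\infty \alpha^n x^n = 1$ coefficient-wise. Since $\binom{n}{0} = 1$, the right-hand side of the claimed formula matches.

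For the inductive step, assume the formula holds for $k$. Write
\[
\frac{1}{(1-\alpha x)^{k+1}} = \frac{1}{1-\alpha x} \cdot \frac{1}{(1-\alpha x)^k},
\]
and expand the product using the Cauchy product of formal power series together with the inductive hypothesis. This yields
\[
\frac{1}{(1-\alpha x)^{k+1}} = \sum_{n=0}^\infty \alpha^n \left( \sum_{m=0}^n \binom{m+k-1}{k-1} \right) x^n.
\]
To close the induction it then suffices to invoke the hockey stick identity $\sum_{m=0}^n \binom{m+k-1}{k-1} = \binom{n+k}{k}$. This is an identity among integers that can be proved by a short induction on $n$ using Pascal's rule, and hence it descends to any field $K$ after applying the ring homomorphism $\Z \to K$.

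The main subtlety is to avoid the tempting shortcut of deriving the identity from $\frac{1}{1-y}$ by $(k-1)$-fold formal differentiation: that route introduces a factor of $(k-1)!$, which vanishes in $K$ whenever $\chr K = p \le k-1$. Working through the Cauchy product and relying on the integer identity for binomial coefficients keeps the argument characteristic-free, which is exactly what is needed elsewhere in the appendix for the positive-characteristic treatment of LRS.
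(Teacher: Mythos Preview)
Your proof is correct, but it takes a different route from the paper's. The paper first establishes the identity for $\alpha=1$ over $\Z$ (where formal differentiation of the geometric series is unproblematic), then applies the coefficient-wise ring homomorphism $\Z\llbracket x\rrbracket \to K\llbracket x\rrbracket$ to transfer the full identity $(1-x)^k \sum_{n\ge 0}\binom{n+k-1}{k-1}x^n = 1$ into $K$, and only at the end substitutes $\alpha x$ for $x$. You instead work directly in $K\llbracket x\rrbracket$ by induction on $k$ via the Cauchy product, and transfer only the integer hockey-stick identity through $\Z\to K$. Both approaches are clean; the paper's has the virtue of isolating all arithmetic to $\Z$ in one stroke, while yours is more self-contained and makes the inductive structure explicit. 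One small remark: the differentiation shortcut you warn against is exactly what the paper uses, but over $\Z$ rather than $K$, so the factor $(k-1)!$ causes no trouble there---the transfer to $K$ happens only after cancelling it.
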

    
\begin{proof}
    First suppose $\alpha=1$.
    For $K=\Z$, this is well-known and easily derived from the geometric series $(1-x)^{-1} = \sum_{n=0}^\infty x^n$ by formal differentiation (or a combinatorial argument).
    The ring homomorphism $\Z \to K$, $1 \mapsto 1_K$ extends coefficient-wise to a ring homomorphism $\Z\llbracket x \rrbracket \to K\llbracket x \rrbracket$.
    Applying the homomorphism to the identity
    \[
    (1-x)^k \sum_{n=0}^{\infty} \binom{n+k-1}{k-1} x^n = 1,
    \]
    and dividing by $(1_K - x)^k$ in $K\llbracket x \rrbracket$ proves the claim for $\alpha=1$.
    For arbitrary $\alpha \in K$, it follows by substituting $\alpha x$ for $x$.
\end{proof}
    
\begin{proof}[Proof of \cref{t:lrs-repr}]
    We may without restriction assume that $K$ is algebraically closed.
    Because $(a_n)_n$ satisfies an LRS, its generating function $F(x)=\sum_{n=0}^\infty a_n x^n \in K\llbracket x \rrbracket$ is rational.
    Thus, there exist coprime polynomials $p$,~$q \in K[x]$ such that $F = p/q$ with $q=(x-\lambda_1^{-1})^{k_1} \cdots (x- \lambda_l^{-1})^{k_l}$, where $\lambda_1$, \dots,~$\lambda_r \in \overline{K}^\times$ are pairwise distinct and $k_i \ge 1$.
    By partial fraction decomposition, there are uniquely determined $\alpha_{i,j} \in \overline{K}$ and a uniquely determined polynomial $r \in K[x]$ such that
    \[
    F = r + \sum_{i=1}^l \sum_{j=1}^{k_i} \frac{\alpha_{i,j}}{(1 - \lambda_i x)^j} = r + \sum_{i=1}^l \sum_{j=1}^{k_i} \alpha_{i,j} \sum_{n = 0}^\infty \binom{n+j-1}{j-1} \lambda_i^n x^n,
    \]
    which shows existence of the claimed representation for all $n > \deg(r)$.
    The uniqueness follows from the uniqueness of the partial fraction decomposition.
\end{proof}

To go further than \cref{t:lrs-repr}, we need to distinguish according to the characteristic of $K$.
If $\chr K = 0$, then $\binom{n+j-1}{j-1} = \frac{(n+1)\cdots (n+j-1)}{j!}$ allows us to view the binomial coefficients in \cref{eq:pfd} as polynomial functions in $n$.
Expanding shows that, for sufficiently large $n$, the sequence $(a_n)_n$ can be represented by a (uniquely determined) exponential polynomial \cite[Ch.~6.2]{berstel-reutenauer11}.

If $\chr K = p > 0$, then there neither needs to exist a representation by an exponential polynomial, nor need this representation be unique if it exists.

\begin{example} \label{exm:char2-not-exppoly}
    The triangular numbers $T_n \coloneqq \binom{n+1}{2} = \sum_{k=1}^n k$ satisfy the linear recurrence relation $T_{n+3} = 3T_{n+2} - 3T_{n+1} + T_n$ with $T_0=0$, $T_1=1$, $T_2=3$.
    The characteristic polynomial is $x^3 - 3x^2 + 3x - 1 = (x-1)^3$.
    In the normal form of \cref{t:lrs-repr},
    \[
    T_n = \binom{n+1}{2} \cdot 1^n = \bigg( \binom{n+2}{2} - \binom{n+1}{1} \bigg) \cdot 1^n
    \]
    In particular, in characteristic $0$, the sequence $T_n$, whose elements are $0$,~$1$, $3$, $6$, $10$, $15$, $21$,~$28$,~\dots, can be expressed using a polynomial, as $T_n = \tfrac{n^2 + n}{2}$.
    
    Now consider $K=\bF_2$. Reducing modulo $2$, the sequence $\overline{T_n} \in \bF_2$ still satisfies the same linear recurrence relation and $\overline{T_n} = \overline{\tfrac{n^2+n}{2}} = \overline{0}, \overline{1}, \overline{1}, \overline{0}, \overline{0}, \overline{1}, \overline{1}, \overline{0}$, etc.
    However, this sequence is not induced from a polynomial function $\bF_2 \to \bF_2$: indeed $(\overline{T_n})_n$ is not $2$-periodic. But, if there were a function $f\colon \bF_2 \to \bF_2$ such that $\overline{T_n}=f(\overline{n})$, then $\overline{T_n}$ would have to be $2$-periodic.
    We see that, in positive characteristic, not every LRS can be represented by an exponential polynomial (even for large enough $n$).
\end{example}

As the issues of uniqueness of exponential polynomial representations in positive characteristic are somewhat relevant in the present paper (in particular, in Appendix~\ref{ssec:ccra-positive-char}), we now discuss them in more detail.

A \emph{polynomial} $q=q(x) \in K[x]$ is a formal expression $q = \sum_{i=0}^k \alpha_i x^k$ with $\alpha_i \in K$.
Polynomials are multiplied and added according to the usual rules, by $K$-linear extension of $x^k \cdot x^l = x^{k+l}$.
Every polynomial $q$ induces a function $\overline{q}\colon K \to K$, $\lambda \mapsto q(\lambda)=\sum_{i=0}^k \alpha_i \lambda^k$, and every function of such a form is called a \emph{polynomial function}.
If the field $K$ is infinite (in particular if $\chr K=0$), then $q=q'$ for polynomials $q$,~$q' \in K[x]$ if and only if $\overline{q}=\overline{q'}$, that is, if and only if $q(\lambda)=q'(\lambda)$ for all $\lambda \in K$.
In this case, there is no need to carefully distinguish polynomials from polynomial functions.

\begin{example}
If $K$ is finite, then different polynomials may induce the same polynomial functions.
Indeed, there are infinitely many polynomials but only finitely many functions $K \to K$.
For instance, if $K=\bF_p$, then $x^p+1 \ne x + 1$ as polynomials, but $\lambda^p + 1 = \lambda +1$ for all $\lambda \in \bF_p$, so these two polynomials induce the same function $\bF_p \to \bF_p$.
\end{example}

As the previous example reminds us, in positive characteristic, we need to carefully distinguish between polynomials and polynomial functions.
In particular, for polynomial functions there is no canonical notion of the $i$-th coefficient, as the example shows.

In any characteristic, since there is a unique ring homomorphism $\Z \to K$, $n \mapsto n \cdot 1_K$, it also makes sense to evaluate polynomials at integers, and we can think of $q \in K[x]$ as inducing a sequence $\overline{q}\colon \N \to K$, $n \mapsto q(n \cdot 1_K)$ (by slight abuse of notation, overloading the notation $\overline{q}$).

We now extend these considerations from polynomials to exponential polynomials.
An \emph{exponential polynomial} is a formal expression $q(x)=\sum_{i=1}^k q_i(x) \lambda_i^x$ with $q_i \in K[x]$ and pairwise distinct $\lambda_i \in K^\times$.
The $\lambda_i$ for which $q_i\ne 0$ are the \emph{exponential bases} of $q$.
The \emph{degree} of $q$ is $\deg(q)\coloneqq \max\{\, \deg(q_i) \mid 1 \le i \le k\,\}$.
Exponential polynomials are again added and multiplied in the usual way, by $K$-linearly extending $(x^m \lambda_i^x) \cdot (x^n \lambda_j^x) = x^{m+n} (\lambda_i \lambda_j)^x$.\footnote{This can be made rigorous by considering exponential polynomials as elements of the group algebra $K[x][\Lambda]$ of the group $\Lambda=K^\times$ over the polynomial ring $K[x]$.
See \cite[Ch.~6]{berstel-reutenauer11}.}
An exponential polynomial $q$ induces a sequence $\overline{q}\colon \N \to K$, defined by $\overline{q}(n) = \sum_{i=1}^k q_i(n) \lambda_i^n$.
We call any sequence arising in such a way an \emph{exponential polynomial sequence \textup(EPS\textup)}.

The question to what degree the induced sequence determines the polynomials $q_i$ and the exponential bases $\lambda_i$ is answered by the following lemma.

\begin{lemma} \label{l:unique-exppoly}
    Let $p_1$, \dots,~$p_k$, $q_1$, \dots,~$q_l \in K[x]$. Let $\lambda_1$, \dots,~$\lambda_k \in K^\times$ be pairwise distinct, and let similarly $\mu_1$, \dots,~$\mu_l \in K^\times$ be pairwise distinct.
    Suppose
    \[
    \sum_{i=1}^k p_i(n) \lambda_i^n = \sum_{j=1}^l q_j(n) \mu_j^n \qquad\text{for all $n \in \N$}.
    \]
    Assume also that, for each $i$ and $j$ we have $p_i(\N) \ne \{0\}$ and $q_j(\N) \ne \{0\}$.
    Then, up to re-indexing, we have $k=l$, $\lambda_i=\mu_i$ for all $1 \le i \le k$, and $p_i(n) = q_i(n)$ for all $n \in \N$.
\end{lemma}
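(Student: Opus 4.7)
The plan is to reduce everything to the following key claim: if $\nu_1,\dots,\nu_r \in K^\times$ are pairwise distinct and $R_1,\dots,R_r \in K[x]$ satisfy $\sum_{i=1}^r R_i(n) \nu_i^n = 0$ for all $n \in \N$, then each $R_i$ induces the zero function on $\N$. Granting this claim, I would prove the lemma as follows. Subtract the two representations to obtain a single identity $\sum p_i(n) \lambda_i^n - \sum q_j(n) \mu_j^n = 0$, and regroup by letting the bases range over the distinct elements of $\{\lambda_1,\dots,\lambda_k\} \cup \{\mu_1,\dots,\mu_l\}$ (if some $\lambda_i = \mu_j$, the corresponding term has coefficient $p_i - q_j$). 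The key claim forces every coefficient to induce the zero sequence. Since $p_i(\N) \ne \{0\}$, each $\lambda_i$ must actually coincide with some $\mu_j$, and vice versa; so $k=l$ and, after re-indexing, $\lambda_i = \mu_i$ and $(p_i - q_i)(n) = 0$ for all $n$.

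To establish the key claim I would pass to formal generating series. For each $i$, set $G_i(x) \coloneqq \sum_{n=0}^\infty R_i(n) \nu_i^n x^n \in K\llbracket x \rrbracket$. Writing $R_i(x) = \sum_{k=0}^{d_i} \alpha_{i,k} x^k$, we have $G_i = \sum_k \alpha_{i,k} \sum_n n^k (\nu_i x)^n$, so the first step is to check that $\sum_{n=0}^\infty n^k y^n = P_k(y)/(1-y)^{k+1}$ in $K\llbracket y \rrbracket$ for some polynomial $P_k \in \Z[y]$. This is proved by induction on $k$, applying the operator $y\,\tfrac{d}{dy}$ on formal power series to the base case $\sum_n y^n = 1/(1-y)$; crucially, this identity is valid in any characteristic. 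As a consequence, each $G_i(x)$ is a rational function in $K(x)$ whose only possible pole is at $x = 1/\nu_i$.

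Now comes the pole-cancellation step. The hypothesis $\sum_i G_i = 0$ holds in $K\llbracket x \rrbracket$, and since proper rational functions embed into $K\llbracket x \rrbracket$, the identity also holds in $K(x)$. If some $G_i$ were nonzero, after clearing denominators it would have a pole at $1/\nu_i$ in $\overline{K}$; but the identity $G_i = -\sum_{j\ne i} G_j$ forces this pole to be cancelled by the $G_j$ with $j\ne i$, whose poles sit at the distinct points $1/\nu_j$. This contradiction shows $G_i = 0$ for every $i$, and then $R_i(n) \nu_i^n = 0$ for all $n$, hence $R_i(n) = 0$ for all $n \in \N$ (since $\nu_i \ne 0$).

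The main delicate point will be the positive-characteristic setting, where polynomials cannot be identified with their induced functions and the naïve linear-independence of sequences $(n^k \nu_i^n)_n$ can fail (for instance $n^p = n$ on $\N$ in characteristic $p$). The generating-function route sidesteps this cleanly because it works at the level of induced sequences, and the only ingredient that might be characteristic-sensitive, namely the rational form of $\sum_n n^k y^n$, survives because the derivation uses the formal operator $y\,\tfrac{d}{dy}$ rather than combinatorial identities involving $k!$.
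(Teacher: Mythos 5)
Your argument is correct in substance and travels essentially the same road as the paper's: both proofs rest on the rationality of the generating series and on the fact that contributions attached to distinct poles $1/\nu_i$ cannot cancel. The paper packages this by rewriting each polynomial in the binomial basis $\binom{x+j-1}{j-1}$ and invoking the uniqueness of the partial-fraction normal form already established in \cref{t:lrs-repr}, whereas you re-derive the rational form of $\sum_n n^k y^n$ via the formal operator $y\,\tfrac{d}{dy}$ and argue pole cancellation directly; your version is self-contained, the paper's reuses machinery it needs anyway. Your handling of positive characteristic (working with induced sequences and formal derivatives rather than with $k!$) is exactly the right precaution and matches the paper's concerns.

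One step is stated too quickly: you assert that a nonzero $G_i$ must have a pole at $1/\nu_i$, but a priori $G_i$ could be a nonzero element of $K(x)$ that is regular at $1/\nu_i$ --- since $1/\nu_i$ is its only candidate pole, this would make $G_i$ a nonzero \emph{polynomial}. (Equivalently, after cancelling poles you may only conclude that every $G_i$ is a polynomial, not that it is zero.) To close this, note that $G_i$ being a polynomial means $R_i(n)\nu_i^n=0$, hence $R_i(n)=0$, for all sufficiently large $n$; in characteristic $0$ a polynomial with infinitely many roots is zero, and in characteristic $p$ the value $R_i(n\cdot 1_K)$ depends only on $n \bmod p$, so vanishing for large $n$ forces vanishing for all $n\in\N$. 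Either way $G_i=0$ after all, and your key claim follows. This is a one-line repair, not a flaw in the strategy.
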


\begin{proof}
    Let $q \in K[x]$.
    It suffices to show that there exist $\alpha_1$, \dots,~$\alpha_m \in K$ such that $q(n) = \sum_{j=1}^m \alpha_j \binom{n+j-1}{j-1}$ for all $n \in \Z$.
    Then the claim follows from the (stronger) uniqueness statement of \cref{t:lrs-repr}.

    To show the desired expression for $q(n)$, it in turn suffices to show that $x^j \in \Z[x]$ is a $\Z$-linear combination of polynomials of the form $\binom{x+i-1}{i-1}$.
    For $j=0$, this is true because $x^0 = 1 = \binom{x}{0}$.
    For $j \ge 1$, note that $x^j - j! \binom{x+j}{j} = x^j - (x+1) \cdots (x+j)$ is a polynomial of degree strictly less than $j$, so the claim follows by induction.
\end{proof}

If $\chr K=0$, the uniqueness in the previous lemma implies that $p_i=q_i$ and so an exponential polynomial is uniquely determined by its induced EPS.

If $\chr K = p > 0$, then this is not the case.
In this case $p_i(n)=q_i(n)$ for all $n \in \N$ if and only if $p_i(n)=q_i(n)$ for $n \in \{0,\dots,p-1\}$.
This is the case if and only if the polynomial $p_i - q_i$ is divisible by $x(x-1) \cdots (x-p+1) = x^p - x$.
It follows that, when representing an EPS $(a_n)_n$ using an exponential polynomial $\sum_{i=1}^k p_i(x) \lambda_i^x$, we can always find a representation with $\deg(p_i) < p$.
In this case, the polynomials $p_i$ are uniquely determined by $(a_n)_n$, and we call the resulting exponential polynomial the \emph{exponential polynomial of minimal degree} representing $(a_n)_n$.

It is important to note that, independent of the characteristic of the field, the exponential bases appearing in a representation as in \cref{t:lrs-repr} or in an exponential polynomial representation (if it exists) are unique.
Further, every EPS is an LRS, and the exponential bases of the EPS are precisely the nonzero characteristic roots of the LRS.

\begin{definition}
    \label{definition:CoeffSum}
    If $q = \sum_{i=1}^m \sum_{j=0}^{l} \alpha_{i, j} x^j \lambda_i^x$ is an exponential polynomial, then $S_k(q)\coloneqq \sum_{i=1}^m \alpha_{i, k}$ is its \emph{sum of $k$-degree coefficients}.
\end{definition}

Thus, in $S_k(q)$, we are summing all coefficients next to some $x^k$, across all exponential bases.
The following straightforward observation on the behaviour of $S_k(q)$ on products will be useful in one of the proofs below.

\begin{lemma} \label{l:kcoeffsum-product}
    If $q$ and $q'$ are exponential polynomials, then $S_k(qq') =\sum_{j=0}^{k} S_j(q)S_{k-j}(q')$.
\end{lemma}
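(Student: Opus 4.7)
The plan is to prove the formula by a short direct computation, after first clarifying how $S_k$ interacts with exponential polynomials that are not yet in canonical form.

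First, I would make the following elementary but useful observation: although $S_k$ is defined via the canonical representation $q=\sum_{i=1}^m p_i(x)\lambda_i^x$ with pairwise distinct exponential bases $\lambda_i$, the same formula $S_k(q) = \sum_{i} [x^k]\, p_i(x)$ gives the correct value for any formal expression $\sum_{i} p_i(x)\lambda_i^x$ in which the bases $\lambda_i$ need not be distinct. Indeed, collecting terms that share a base amounts to replacing two summands $p_i(x)\lambda^x$ and $p_j(x)\lambda^x$ by $(p_i+p_j)(x)\lambda^x$, and since coefficient extraction in $x^k$ is additive, the total contribution to $S_k$ is unchanged. Here we treat exponential polynomials as formal objects, so this manipulation is valid in any characteristic.

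Next, I would write $q = \sum_{i=1}^m p_i(x)\lambda_i^x$ and $q' = \sum_{j=1}^{m'} r_j(x)\mu_j^x$ in their canonical forms and expand the product as
\[
qq' \;=\; \sum_{i=1}^m \sum_{j=1}^{m'} p_i(x)r_j(x)\,(\lambda_i\mu_j)^x.
\]
The bases $\lambda_i\mu_j$ appearing on the right need not be pairwise distinct, but by the observation above this does not matter:
\[
S_k(qq') \;=\; \sum_{i=1}^m \sum_{j=1}^{m'} [x^k]\bigl(p_i(x)r_j(x)\bigr).
\]

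Finally, applying the usual Cauchy product formula for coefficients of a product of polynomials, $[x^k](p_i r_j) = \sum_{l=0}^{k} [x^l]p_i \cdot [x^{k-l}]r_j$, and exchanging the order of summation yields
\[
S_k(qq') \;=\; \sum_{l=0}^{k} \Bigl(\sum_{i=1}^m [x^l]p_i\Bigr)\Bigl(\sum_{j=1}^{m'} [x^{k-l}]r_j\Bigr) \;=\; \sum_{l=0}^k S_l(q)\,S_{k-l}(q'),
\]
which is the desired identity. The only point requiring any care is the initial observation about non-normalized representations, which is really the heart of the matter; everything else is routine polynomial arithmetic.
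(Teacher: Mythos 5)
Your proof is correct and follows essentially the same route as the paper's: expand the product, apply the Cauchy product formula for polynomial coefficients, and exchange the order of summation. The only presentational difference is that you make the invariance of $S_k$ under collecting terms with equal bases an explicit preliminary observation, whereas the paper handles this implicitly by indexing coefficients over all of $K^\times$ and reparametrising the product bases via $\mu \mapsto \lambda\mu^{-1}$; both devices address the same (and indeed the only nontrivial) point.
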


\begin{proof}
    Suppose
    \[
    q = \sum_{\lambda \in K^\times} \sum_{i\ge 0} \alpha_{\lambda,i} x^i \lambda^x \quad\text{and}\quad q'=\sum_{\lambda \in K^\times} \sum_{i\ge 0} \alpha'_{\lambda,i} x^i \lambda^x.
    \]
    (It is notationally convenient to allow formally infinite sums; but in each case there are only finitely many nonzero terms.)
    Then $S_j(q) = \sum_{\lambda \in K^\times} \alpha_{\lambda,j}$, and analogously for $q'$.
    Now
    \[
    qq' = \sum_{\lambda, \mu \in K^\times} \sum_{k \ge 0} \Big(\sum_{j=0}^k \alpha_{\mu,j} \alpha'_{\lambda\mu^{-1},k-j} \Big) x^{k} \lambda^x.
    \]
This shows
    \[
    \begin{split}
    S_k(qq') &= \sum_{\lambda, \mu \in K^\times} \sum_{j=0}^k \alpha_{\mu,j} \alpha'_{\lambda \mu^{-1},k-j} = \sum_{j=0}^k \Big(\sum_{\mu \in K^\times} \alpha_{\mu,j} \Big) \Big( \sum_{\mu \in K^\times} \alpha'_{\mu,k-j} \Big)\\
    &= \sum_{j=0}^k S_j(q) S_{k-j}(q'),
    \end{split}
    \]
    where the second to last step uses $K^\times = \{\, \lambda \mu^{-1} \mid \mu \in K^\times \,\}$.
\end{proof}

In Appendix~\ref{ssec:ccra-positive-char}, we also consider sums of $k$-degree coefficients of EPS $(a_n)_n$ in positive characteristic.
Now there are several exponential polynomials representing the EPS, and the sums of $k$-degree coefficients depend on the particular representation, not just on the sequence $(a_n)_n$.

\begin{example}
    Over $K=\bF_p$, we have $\overline{n}^p + \overline{n} + \overline{1} = \overline{2}\, \overline{n} + \overline{1}$.
    However, $q=x^p + x + \overline{1}$ has $S_p(q)=S_1(q)=S_0(q)=\overline{1}$ and $S_k(q)=\overline{0}$ for all other $k$.
    By contrast, the polynomial $q'=\overline{2} x + \overline{1}$ has $S_1(q)=\overline{2}$, $S_0(q)=\overline{1}$ and $S_k(q) = \overline{0}$ for all other $k$.
\end{example}

However, the obstruction in the example, that terms of the form $x^{p}$ can be replaced by $x$ without changing the induced function, is the only one.
More formally, we still have the following.

\begin{lemma} \label{l:charp-coeff-sums}
Suppose $\chr K = p > 0$.
Let $q$,~$q'$ be two exponential polynomials inducing the same EPS, that is, with $q(n)=q'(n)$ for all $n \in \N$.
Then $S_0(q)=S_0(q')$ and for every $r \in \{1,\dots,p-1\}$,
\[
\sum_{k\ge 0} {S_{k(p-1)+r}(q)} = \sum_{k\ge 0} {S_{k(p-1)+r}(q')}.
\]
\end{lemma}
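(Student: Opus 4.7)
The plan is to let $r \coloneqq q - q'$, which is an exponential polynomial inducing the zero function on $\mathbb{N}$, and verify directly that $S_0(r)=0$ and $\sum_{k \ge 0} S_{k(p-1)+j}(r)=0$ for $j \in \{1,\dots,p-1\}$. Because $S_k$ is $K$-linear in the exponential polynomial representation, this implies the lemma. The first step is to write $r = \sum_i r_i(x) \lambda_i^x$ with pairwise distinct $\lambda_i \in K^\times$ and to argue that each $r_i$ individually vanishes on $\mathbb{N}$. If some $r_{i_0}$ did not, I would isolate $r_{i_0}(n) \lambda_{i_0}^n = -\sum_{i \ne i_0} r_i(n) \lambda_i^n$; discarding from the RHS the $r_i$ that already vanish on $\mathbb{N}$ and applying \cref{l:unique-exppoly} would force $\lambda_{i_0}$ to match one of the remaining RHS bases, contradicting distinctness (or, if the RHS is empty, forcing $r_{i_0}(\mathbb{N}) = \{0\}$ anyway).

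Next, since $r_i$ vanishes on $\mathbb{N}$ and the map $\mathbb{Z} \to K$ factors through $\mathbb{F}_p \subseteq K$, the polynomial $r_i \in K[x]$ vanishes on every element of $\mathbb{F}_p$, so $x^p - x = \prod_{c \in \mathbb{F}_p}(x-c)$ divides $r_i$ in $K[x]$. The heart of the argument is then a bookkeeping computation of $r_i = \sum_j \alpha_{i,j} x^j$ modulo $x^p - x$. In the quotient, $x^p \equiv x$ gives $x^{m} \equiv x^{m-(p-1)}$ for every $m \ge p$, so iterating reduces every exponent $j \ge 1$ to the unique $r \in \{1,\dots,p-1\}$ with $r \equiv j \pmod{p-1}$, while the constant term stays put. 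Thus the image of $r_i$ in $K[x]/(x^p-x)$ equals
\[
\alpha_{i,0} \;+\; \sum_{r=1}^{p-1} \Bigl(\sum_{k \ge 0} \alpha_{i,\, k(p-1)+r}\Bigr)\, x^r.
\]

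Since $(x^p-x) \mid r_i$, this image is the zero element, and because $\{1,x,\dots,x^{p-1}\}$ is a $K$-basis of $K[x]/(x^p-x)$, comparing coefficients yields $\alpha_{i,0}=0$ and $\sum_{k \ge 0} \alpha_{i,\,k(p-1)+r} = 0$ for every $r \in \{1,\dots,p-1\}$. Summing these identities over $i$ gives $S_0(r)=0$ and $\sum_{k\ge 0} S_{k(p-1)+r}(r)=0$, which by linearity of $S_k$ translates to $S_0(q)=S_0(q')$ and $\sum_{k\ge 0} S_{k(p-1)+r}(q)=\sum_{k\ge 0} S_{k(p-1)+r}(q')$, as required.

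The only mildly subtle point, and the step I would be most careful about, is getting the collapse of exponents right: the period is $p-1$ rather than $p$, because a single application of $x^p=x$ lowers the degree by $p-1$; this is also why the constant term $\alpha_{i,0}$ is not lumped together with the $x^r$-contributions but is singled out in the statement as $S_0(q)=S_0(q')$. Everything else is formal, relying only on the divisibility $x^p-x \mid r_i$ and the uniqueness information coming from \cref{l:unique-exppoly}.
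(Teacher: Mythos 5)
Your proof is correct, and its skeleton matches the paper's: both reduce to a single exponential base via \cref{l:unique-exppoly} (concluding that each $q_\lambda - q'_\lambda$ vanishes on $\N$), and both then exploit divisibility by $x^p - x$. The one place where you genuinely diverge is the final computation. The paper writes $q_\lambda - q'_\lambda = (x^p-x)h$ and evaluates $\sum_{k\ge 0} S_{k(p-1)+r}\bigl((x^p-x)h\bigr)$ using the product formula for coefficient sums (\cref{l:kcoeffsum-product}), obtaining a telescoping cancellation. You instead pass to the quotient ring $K[x]/(x^p-x)$, observe that the iterated substitution $x^m \mapsto x^{m-(p-1)}$ (valid for $m \ge p$) computes the canonical remainder, and read off the sums $\sum_{k\ge 0}\alpha_{k(p-1)+r}$ as the coefficients of that remainder with respect to the basis $\{1,x,\dots,x^{p-1}\}$; since the remainder is zero, they all vanish. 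Your bookkeeping of the period $p-1$ and the special role of the constant term is right. The quotient-ring route is slightly more self-contained here, as it bypasses \cref{l:kcoeffsum-product} entirely and makes it transparent why the exponents are grouped modulo $p-1$ with degree $0$ kept separate; the paper's route has the advantage that \cref{l:kcoeffsum-product} is needed elsewhere anyway, so it reuses an existing tool.
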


While the sums in the lemma are formally infinite (for notational convenience), they only involve finitely many nonzero terms.
Applied to the minimal degree exponential polynomial $q$ representing an EPS $(a_n)_n$, it follows that $S_r(q) = \sum_{k\ge 0} S_{k(p-1)+r}(q')$ for every other exponential polynomial $q'$ representing $(a_n)_n$.

\begin{proof}[Proof of \cref{l:charp-coeff-sums}]
    Write $q = \sum_{\lambda \in K^\times} q_\lambda \lambda^x$ and $q' = \sum_{\lambda \in K^\times} q_\lambda' \lambda^x$ with polynomials $q_\lambda$, $q_\lambda'$ (only finitely many of which are nonzero).
    It will suffice to show the claimed formula for all pairs $q_\lambda$ and $q_\lambda'$ in place of $q$ and $q'$.

    Fix $\lambda \in K^\times$.
    Since $q-q'$ vanishes on all of $\N$, by \cref{l:unique-exppoly}, we must have $(q_\lambda - q_\lambda')(n) = 0$ for all $n \in \N$.
    This means that the polynomial $q_\lambda - q_\lambda'$ is divisible by $x(x-1)\cdots (x-p+1) = x^p - x$, that is, there exists a polynomial $h$ such that $q_\lambda - q_{\lambda'} = (x^p - x)h$.
    In particular $S_0(q_\lambda - q_{\lambda'}) = 0$, and so $S_0(q_\lambda)=S_0(q_{\lambda'})$.
    For $r \ge 1$, using \cref{l:kcoeffsum-product},
    \[
    \begin{split}
    \sum_{k\ge 0} {S_{k(p-1)+r}(q_\lambda - q_\lambda')} &= \sum_{k\ge 0} {S_{k(p-1)+r}\big((x^p - x)h\big)} \\
    &= \sum_{k\ge 1} S_p(x^p-x) S_{k(p-1)+r-p}(h) + \sum_{k\ge 0} S_1(x^p-x) S_{k(p-1)+r-1}(h) \\
    &= \sum_{k\ge 1} S_{k(p-1)+r-p}(h) - \sum_{k\ge 0} S_{k(p-1)+r-1}(h) = 0.
    \end{split}
    \]
    So $\sum_{k\ge 0} S_{k(p-1)+r}(q_\lambda) = \sum_{k\ge 0} S_{k(p-1)+r}(q'_{\lambda})$.
\end{proof}

\begin{remark}
    One more thing can be observed (but will not be needed): Given any LRS $(a_n)_n$ over a field of characteristic $p > 0$, there exists some power $p^k$ such that the subsequences $(a_{n p^k +r})_{n}$ are representable by an EPS (for every $r$ and sufficiently large $n$).
    For instance, while the sequence $(\overline{T_n})_n$ in \cref{exm:char2-not-exppoly} is not $2$-periodic, it is $4$-periodic, and splitting it into four subsequences modulo $4$, each subsequence is constant, and hence obviously an EPS.

    In general, the periodicity appears because $n \mapsto \overline{\binom{n+j-1}{j-1}}$ is still periodic with period $p^k$ for sufficiently large $k$. This can be seen as a consequence of Lucas's theorem for expressing binomial coefficients modulo $p$, but is also easy to prove directly.
\end{remark}

\section{Additional Material} \label{sec:additional-proofs}

\subsection{Translating CCRA into Weighted Automata}
\label{ssec:ccra-to-wa}
\begin{lemma}\label{lemma:copyless_to_WA}
For every CCRA $\C$ there is a weighted automaton $\W$, of size exponential in the size of $\C$, such that $\C$ and $\W$ are equivalent.
\end{lemma}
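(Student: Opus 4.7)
The plan is to linearize $\C$'s polynomial updates via a monomial-vector expansion, producing a weighted automaton of size $|Q| \cdot 2^d$ (exponential in the size of $\C$). First I exploit the copyless property in its strongest form: in any transition polynomial map $p_{q,a} = (p^1, \dots, p^d)$, each variable $x_k$ appears at most once across the whole map, so the $p^j$'s use pairwise disjoint subsets of variables, and each $p^j$ is multilinear in the variables it uses. It follows that for every subset $S \subseteq \{1, \dots, d\}$, the product $\prod_{j \in S} p^j(\vec{x})$ is multilinear in $\vec{x}$.

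I then introduce the \emph{monomial vector} $\vec{y}$ of length $2^d$, indexed by subsets $S$, with $y_S \coloneqq \prod_{j \in S} x_j$. The key claim is that each CCRA transition induces a \emph{linear} transformation on $\vec{y}$: the new value $y_S^{\mathrm{new}} = \prod_{j \in S} p^j(\vec{x})$ is multilinear, so it expands uniquely as $\sum_T \beta^{q,a}_{S,T}\, y_T$ with scalar coefficients $\beta^{q,a}_{S,T} \in K$ read off by distributing the products. Symmetrically, since the output polynomial $\nu(q)$ is copyless, hence multilinear, it admits an expansion $\nu(q)(\vec{x}) = \sum_S \zeta_{S,q}\, y_S$ with scalars $\zeta_{S, q} \in K$.

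The weighted automaton $\W$ is then constructed with state set $Q \times 2^{\{1, \dots, d\}}$: its initial vector places weight $\prod_{j \in S} \mu_j$ on $(q_0, S)$ and $0$ elsewhere; its transition matrix on letter $a$ sends $(q, T)$ to $(q', S)$ with weight $\beta^{q,a}_{S,T}$, where $q'$ is the CCRA successor of $q$ on $a$; and its final vector assigns weight $\zeta_{S, q}$ to state $(q, S)$. A short induction on $|w|$ shows that the value of state $(q, S)$ in $\W$ after reading $w$ equals $[q_w = q] \cdot y_S(\vec{x}(w))$; summing against the final vector then yields $\W(w) = \sum_S \zeta_{S, q_w}\, y_S(\vec{x}(w)) = \nu(q_w)(\vec{x}(w)) = \C(w)$, as required. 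The size bound $|Q| \cdot 2^d$ is clearly exponential in $|\C|$.

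The crux of the argument is the linearization step, which relies decisively on the copyless property: without it, the induced update on $\vec{y}$ would produce genuinely nonlinear terms (products of different $y_T$'s arising from repeated reads of the same register). The $2^d$ blowup is essentially tight for this construction, since a CCRA can compute any multilinear expression in the initial values $\mu_1, \dots, \mu_d$, and the monomials $\{m_S\}_{S}$ form a natural basis.
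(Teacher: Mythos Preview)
Your proof is correct and takes essentially the same approach as the paper: both expand the register vector to the $2^d$ square-free monomials $\prod_{j\in S} x_j$, observe that the copyless property forces $\prod_{j\in S} p^j$ to be multilinear (hence a $K$-linear combination of the $y_T$'s), and conclude that the induced updates are linear. The only cosmetic difference is that the paper phrases the intermediate object as a linear CRA with the same state set $Q$ and dimension $2^d$, whereas you fold the states into the weighted automaton directly to get dimension $|Q|\cdot 2^d$.
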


\begin{proof}
Fix a CCRA $\C = (Q, q_0, d, \Sigma, \delta, \mu, \nu)$.
We use the fact that linear CRA are equivalent to weighted automata and define an equivalent linear CRA $\C'= (Q, q_0, d', \Sigma, \delta', \mu', \nu')$ (the states remain the same). We note that $\C'$ need not be copyless. The new dimension is $d' = 2^d$ with the following intuition. If the variables $x_1$, \dots,~$x_d$ represent the registers of $\C$, then the registers in $\C'$ correspond to all possible square-free monomials, i.e., monomials of the form $\prod_{i \in I} x_i$ for every $I \subseteq \set{1,\ldots,d}$ (note that there are $2^d$ square-free monomials).

The automaton $\C'$ mimics the behaviour of $\C$ with the intuitive meaning that registers (monomials) in $\C'$ store the corresponding product of registers (variables) in $\C$. We observe that this can be maintained as an invariant. Recall that $x_1,\ldots x_d$ are registers in $\C$. Let $p_{q,a} = (p^1,\ldots,p^d) \in \Poly^d$ be a copyless polynomial mapping occurring in one of the transitions of $\C$. We show how to update accordingly registers in $\C'$. Let $m$ be a register in $\C'$ corresponding to a monomial $\prod_{i \in I} x_i$. We define the new value of $m$ as $\prod_{i \in I} p_i$, written as a sum of monomials. We observe that since $p$ is copyless, all monomials in $\prod_{i \in I} p_i$ are square-free. Thus we have updated $m$ as a linear combination of previous monomials.

It remains to define $\mu'$ so that the invariant also holds in the first step; and $\nu'$ as the expanstion of $\nu$ into a sum of monomials.
\end{proof}

\subsection{Proof of \autoref{l:polyamb_upper_triangular}}

\polyambuppertriangular*
\begin{proof}
The \emph{transition graph} of a word $w$ is the unlabeled directed graph, possibly containing loops, on the vertex set $\{1,...,d\}$ in which there is an edge $i \to j$ if and only if $M(w)[i,j] \ne 0$.
Let us consider the transition graph $G$ corresponding to $M(w)$ and how it relates to the transition graph $H$ corresponding to $M( w^{d!} )$.
Any edge of $H$ comes from a directed walk of length $d!$ in $G$.

We first show that the only closed directed walks in $H$ are loops.
Indeed, suppose to the contrary that $H$ contains a closed directed cycle of length at least two.
Then $H$ contains a directed cycle\footnote{A closed directed walk $i_0 \to i_1 \to \dots \to i_l$ with $i_0=i_l$ and $i_j \ne i_j'$ unless $\{j,j'\}=\{0,l\}$} of length $l \ge 2$ from some vertex $i$ to itself.
This cycle arises from a directed walk $C$ from $i$ to itself in $G$ of length $l d!$.
In particular, since there exists a directed walk in $G$ from $i$ to itself, there exists a directed cycle $D_0$ in $G$ that is based at $i$.
The length $k$ of $D_0$ is of course at most $d$, and hence divides $d!$.
Let $D \coloneqq D_0^{ld!/k}$ denote the $ld!/k$-fold repetition of $D_0$.
Then $D$ and $C$ are two distinct directed walks in $G$ of length $ld!$: after $d!$ steps, the walk $D$ will be at $i$, but $C$ will not.
This leads to a contradiction with polynomial ambiguity: the word $w^{nld!}$ for $n \ge 1$, gives rise to at least $2^n$ directed walks from $i$ to itself in $G$, because in each repetition of $w^{ld!}$ we can choose to either follow $C$ or $D$.
Since $\A$ is trim, the state $i$ lies on an accepting run, and hence there exist words $u$,~$v$ such that $u (w^{nld!})v$ has at least $2^n$ accepting runs.

Now, since the only closed directed walks in $H$ are loops, we can define a total order $\preccurlyeq$ on $\{1,...,d\}$ such that there is no directed walk in $H$ from $i$ to $j$ if $j \prec i$. 
Permuting the standard basis vectors correspondingly, which means conjugating the matrix $M(w^{d!})$ by a permutation matrix $P$, we find that $P M(w^{d!})P^{-1}$ is upper triangular.

In particular, the eigenvalues of $M(w^{d!})$ are precisely the diagonal entries.
The $i$-th diagonal entry is the sum of the weights of all directed walks from $i$ to itself in $\A$ that are labeled by $w^{d!}$.
If there were two such walks, that would directly contradict polynomial ambiguity.
This means that the diagonal entries of $M(w^{d!})$ come from at most one walk, meaning they are products of transition weights of the automaton.
\end{proof}

\subsection{Proof of \autoref{lemma:coeff_sum}}

\coeffsum*
\begin{proof}
Since $\chr K= 0$, each EPS has a unique exponential polynomial representing it (\cref{subsec:appendix-exppoly}), allowing us to prove the statement for a specific exponential polynomial representing the EPS.
Now notice that all the sequences in \cref{d:generable} have this property. 
We need to show that the property is preserved under sums and products.
Let $(a_n)_n$,~$(a_n')_n$ be $R$-generable EPS, and let $q$,~$q'$ be the exponential polynomials representing them, respectively.
Then $q+q'$ represents $(a_n+a_n')_n$, so the claim is holds for sums.
The product $qq'$ represents $(a_n a_n')_n$.
By \cref{l:kcoeffsum-product}, all the sums of $k$-degree coefficients of $qq'$ are in $R$.
\end{proof}

\subsection{Proof in \autoref{example:maybe_unrecognisable}}

We prove that $f$ is an $\tfrac{1}{2}\Z$-generable EPS.
To see this, let $u = u_1u_2 \hdots u_r$, $w = w_1w_2 \hdots w_t$, $v=v_1v_2 \hdots v_l$. 
Then (with some constants $C$, $D \in \N$, independent of $n$)
\[
\begin{split}
f(uw^nv) &= \sum_{j=1}^r j u_j + \sum_{k=0}^{n-1} \sum_{j=1}^t (r+kt+j) w_j + \sum_{j=1}^l (r+nt+j) v_j \\
&= Cn + D + \sum_{j=1}^t \Big(\sum_{k=0}^{n-1} (r+kt+j)\Big) w_j \\
&= Cn + D + \sum_{j=1}^t \Big((r+j)n + \frac{(n-1)n}{2}t\Big) w_j.
\end{split}
\]
We see that $n \mapsto f(uw^nv)$ is a polynomial function, with coefficients in $A=\tfrac{1}{2}\Z$.
Hence, the LRS $(f(uw^nv))_n$ is an $A$-generable EPS.

\subsection{Proof of \autoref{lem:formula_to_poly}}

\formulatopoly*
\begin{proof}
By induction on the formula size. For the base case $k=1$ and then given $\varphi = x_1$ we define $p = x_1$. Otherwise, suppose we have polynomials $p$, $p_1$ and $p_2$ corresponding to some formulas $\varphi$, $\varphi_1$ and $\varphi_2$, respectively. We build polynomials as follows.
\begin{itemize}
 \item For the formula $\neg \varphi$ we define the polynomial $1 - p$.
 \item For the formula $\varphi_1 \land \varphi_2$ we define the polynomial $p_1 \cdot p_2$.
\end{itemize}
It is easy to see that the construction preserves the properties of the lemma. In particular the base case formulas are trivially $1$-copyless; and by induction every subformula $\psi$ is $|\psi|$-copyless. Moreover, every Boolean formula can be build from $\neg$ and $\land$, and the final polynomials have polynomial size in the size of the input formula.
\end{proof}

\subsection{Additional Example in \autoref{subsection:examples}}

Another promising approach to showing the non-sufficiency of the conditions in \cref{theorem:CCRA}, is to show that the class of functions recognised by CCRA is not closed under reversal.
For tropical semirings this is known~\cite{MazowieckiR19}.

\begin{conjecture}
    For $\card{\Sigma} \ge 2$, there exists a function $f\colon \Sigma^* \to \Q$ that is recognisable by a $\Q$-CCRA, but for which the reversal $w \mapsto f(w^r)$, with $w^r$ denoting the reversal of $w$, is not recognisable by a $\Q$-CCRA.
\end{conjecture}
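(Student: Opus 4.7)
The plan is to adapt the separation strategy for the tropical semiring from \cite{MazowieckiR19} to the field setting. The attack decomposes into three steps: identify an explicit candidate function $f$ recognisable by a CCRA via a left-to-right accumulator; develop a lower-bound technique for CCRA strong enough to rule out the reversal (since, as noted, the PSF criterion of \cref{theorem:CCRA} is reversal-invariant and hence useless here); and apply the technique.

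For the candidate, I would try $\Sigma = \{a,b\}$ and, for $w = a^{n_1} b a^{n_2} b \cdots b a^{n_k}$, a positionally weighted combination of block lengths such as
\[
f(w) = \prod_{i=1}^k (n_i + 2^i),
\]
or an accumulated Horner-style expression $(\cdots ((n_1 + c_1) \alpha + n_2 + c_2) \alpha + \cdots)$. A copyless CCRA for $f$ can maintain a running product (or running Horner accumulator) $x$, a block-counter $y$ reset after each $b$, and a precomputed positional constant $c$ carried forward by $c \coloneqq \alpha c$ (constant factor, hence copyless in one use of $c$). To avoid double-use of $c$ during the transition that must simultaneously update $x$ and advance $c$, I would carry shifted copies $c, c' = \alpha c$ precomputed, so that on $b$ we execute $x \coloneqq x \cdot (y + c)$, $c \coloneqq c'$, $c' \coloneqq \alpha c'$, $y \coloneqq 0$, all copyless. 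The point of the particular $f$ is that in the \emph{reverse} reading, blocks are seen in the order $n_k, n_{k-1}, \ldots, n_1$, while positional weights $2^i$ are indexed from the left; hence a right-to-left CCRA would need to simultaneously accumulate a product and a positional power whose index depends on the total block count, which is unknown until the end.

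The technical heart is step two. Because $\PSF(f \circ r) = \PSF(f)$ (via $u \leftrightarrow v^r$, $w \leftrightarrow w^r$), \cref{theorem:CCRA} cannot separate the two functions, so I would introduce a refined, one-sided pumping family $\PSF_\ell(f) \coloneqq \{(f(u w^n))_n : u, w \in \Sigma^*\}$ and the symmetric $\PSF_r$; these are genuinely not reversal-symmetric. The target lemma is: if $f$ is $R$-CCRA recognisable, then every $(h(n))_n$ of the form $(f(u w^{m(n+1)}))_n$ is an $R$-generable EPS whose coefficient-sum data reflects not only the transition weights but also the \emph{depth} of the register dependency forest that can still receive contributions from the pumped suffix. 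The proof skeleton would mimic \cref{lemma:ccra_simple}--\cref{lemma:general_ccra_generable}: after $s!$-fold pumping we land in a cycle with a simple register decomposition, but because there is no arbitrary suffix the class of EPS that can appear is strictly smaller than in the two-sided case. The candidate $f$ above is arranged so that the forward CCRA visibly satisfies this refined $\PSF_\ell$ property, whereas for $f \circ r$ one can pump a family $\hat{f \circ r}(\varepsilon, a^k b, \varepsilon)$ that yields coefficient sums reproducing, as $k$ varies, the obstruction of \cref{exm:wfa-not-ccra} (denominators realising infinitely many primes via Fermat's Little Theorem), contradicting the refined criterion.

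The main obstacle is precisely formulating and proving the refined one-sided pumping lemma strongly enough to exclude every CCRA for $f \circ r$. Non-recognisability proofs for CCRA over fields are essentially absent in the literature --- even the innocuous-looking \cref{example:maybe_unrecognisable} is unresolved --- and a CCRA is a very flexible device: it may use arbitrarily many registers, polynomial updates, and extra states whose sole purpose is to defer work. Any argument must therefore pin down an invariant that survives all such implementation freedom, most plausibly by working with the composed copyless polynomial map $p_w$ directly and exploiting that its monomial structure forms a forest. I expect that, as for the tropical case, the final separating function may need to be more elaborate than the first-guess $f$ above, and designed hand-in-hand with the invariant so that the two match cleanly.
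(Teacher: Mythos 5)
The statement you are attempting is a \emph{conjecture} in the paper: the authors themselves offer no proof, only a candidate function (the Horner-style accumulation $(\cdots((m_1+k_1)m_2+k_2)\cdots)m_t+k_t$ on $\{0,1\}^*$, essentially the same shape as your second candidate) together with the admission that they are ``currently unable to prove that the reverse is not recognisable.'' So there is no paper proof to match your attempt against, and your proposal does not close the gap either: it is a research plan whose decisive step is missing. The ``refined one-sided pumping lemma'' for $\PSF_\ell$ is never stated precisely, let alone proven, and everything downstream depends on it. You correctly diagnose that \cref{theorem:CCRA} is reversal-invariant (since $\PSF(f\circ r)=\PSF(f)$ up to reversing $u$, $v$, $w$) and hence useless here; that observation is sound and matches the authors' implicit reasoning for why this is hard. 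But asserting that ``the class of EPS that can appear is strictly smaller'' in the one-sided case is exactly the open problem, not a step toward solving it.

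Two more concrete worries. First, your claimed application is unsubstantiated: for $f(w)=\prod_i(n_i+2^i)$ you assert that pumping $\widehat{f\circ r}(\varepsilon,a^kb,\varepsilon)$ reproduces the denominator obstruction of \cref{exm:wfa-not-ccra}, but you never compute these sequences, and it is not at all clear that they produce coefficient sums with infinitely many primes in denominators --- the values of $f$ itself are integers, and the Fermat's-little-theorem mechanism in \cref{exm:wfa-not-ccra} arises from a specific interaction of a linear factor with a geometric sum that your candidate does not obviously exhibit. Second, any one-sided invariant must survive the full flexibility of CCRA (extra states and registers that defer work to the output expression), and the paper's own \cref{example:maybe_unrecognisable} shows that even for seemingly simple functions no such non-recognisability argument is currently known. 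Your proposal identifies the right obstacle and a sensible candidate, but as a proof of the conjecture it has a genuine, acknowledged gap at its core.
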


Again, there is a promising candidate for which we are currently unable to prove that the reverse is not recognisable.

\begin{example}
Consider the following function $f\colon \{0, 1\}^* \to \Q$.
\[
\underbrace{0 \hdots 0}_{m_1} \underbrace{1 \hdots 1}_{k_1} \underbrace{0 \hdots 0}_{m_2} \underbrace{1 \hdots 1}_{k_2} \hdots \underbrace{0 \hdots 0}_{m_t} \underbrace{1 \hdots 1}_{k_t} \mapsto ( \hdots ((m_1+k_1)m_2+k_2) \hdots )m_t + k_t
\]
It is easy to evaluate $f(w)$ with a CCRA --- the bracketing in the formula above can be interpreted as a recipe for doing so. However, there is no obvious way to recognise $f^r(w) \coloneqq f(w^r)$.

\end{example}

\subsection{\autoref{theorem:CCRA} in Positive Characteristic.}
\label{ssec:ccra-positive-char}

In this subsection we discuss how the second property in \autoref{theorem:CCRA} changes when the field $K$ has positive characteristic $p > 0$.

First, there is the issue that the exponential polynomial $q$ representing $h(n)$ is not unique.
As discussed in Appendix~\ref{subsec:appendix-exppoly}, this can be overcome by choosing $q$ to be of minimal degree.
Then, the polynomials $p_i$ have degree at most $p-1$, and the second property of \cref{theorem:CCRA} holds, that is, we have $S_k(h) \in R$ for all $k$.

However, without explicit control over $m$, the claim is vacuous in this context: suppose the first claim of \autoref{theorem:CCRA} holds with $m=m'p$. (The proof of \autoref{theorem:CCRA} shows that $m$ can always be replaced by any multiple, so this situation is not a restriction.)

Then $(g(m'n+m'))_n$ is an $R$-generable EPS, and hence so is $(g(m'n+m))_n$ (using $m=m'p$).
Let $(g(m'n+m))_n$ be represented by the exponential polynomial $q=\sum_{i=1}^t p_i(x) \lambda_i^x$.
Then $p_i(np)=p_i(0)$ is constant for all $i$ and $n \ge 0$, so
\[
(h(n))_n \coloneqq g(mn+m)=g(m'pn+m)=\sum_{i=1}^t p_i(0) \lambda_i^{pn}
\]
is represented by the exponential polynomial $q'=\sum_{i=1}^t p_i(0) \lambda_i^{px}$ with constant coefficients.
Now $S_0(q') \in R$ just says $\sum_{i=1}^t p_i(0) = h(0) \in R$, which is trivially true.
Further, $S_k(q')=0$ for all $k \ge 1$ also holds trivially.

To obtain a non-vacous claim, we must therefore control $m$ more explicitly.
The following version of \autoref{theorem:CCRA} is obtained using a closer inspection of the proof.

\begin{theorem}
\label{theorem:CCRA-pos-char}
If $R \subseteq K$ is a subsemiring and $f\colon \Sigma^* \to K$ is recognised by an $R$-CCRA with $r$ registers and $s$ states. Let $m \coloneqq (4r+2)!s!$. Then
\begin{itemize}
\item for every $g \in \PSF(f)$, the sequence $(h(n))_n = (g(m(n+1)))_n$ is an $R$-generable EPS, and
\item $q$ is the exponential polynomial of minimal degree representing $h$, then for every $k \in \N$ the sum of $k$-degree coefficients $S_k(q)$ is in $R$.
\end{itemize}
\end{theorem}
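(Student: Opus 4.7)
The first bullet is immediate from \cref{lemma:general_ccra_generable} applied to the CCRA defining $f$ and to the words $u,w,v$ coming from any $g \in \PSF(f)$: the constant $m=(4r+2)!s!$ in the statement is exactly the one produced there. The second bullet is where the real work lies, and the main obstacle is that in positive characteristic the exponential polynomial representing an EPS is not unique, so we must reconcile whatever representation we build during the induction with the fixed minimal-degree one specified in the statement.

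To handle this, I would strengthen the inductive invariant used in \cref{lemma:coeff_sum}. Declare that a sequence $(a_n)_n$ \emph{admits an $R$-sum representation} if there exists some, not necessarily minimal-degree, exponential polynomial $q^*$ inducing $(a_n)_n$ with $S_k(q^*) \in R$ for every $k \ge 0$. I would then show by induction on the construction of an $R$-generable EPS via \cref{d:generable} that every such sequence admits an $R$-sum representation. The base cases are a direct inspection of the four families: the natural representations of $(\alpha)_n$, $(n \cdot 1_K)_n$, $(\alpha^n)_n$ and of the geometric-sum sequences $\big(\tfrac{1}{\alpha-1}\alpha^n - \tfrac{1}{\alpha-1}\big)_n$ yield coefficient sums equal to $\alpha$, to $0$ and $1$, to $1$, and to the cancellation $\tfrac{1}{\alpha-1}-\tfrac{1}{\alpha-1}=0$, respectively, all of which lie in $R$ (even though the individual coefficients need not). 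Sums are handled by $S_k(q^* + q'^*) = S_k(q^*) + S_k(q'^*)$, and products by \cref{l:kcoeffsum-product}, giving $S_k(q^* q'^*) = \sum_{j=0}^k S_j(q^*) S_{k-j}(q'^*) \in R$ since $R$ is a subsemiring. Crucially, I would not attempt to reduce the product back to minimal degree at this stage: the whole point of loosening the invariant is to make this step routine.

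Finally, to pass from the resulting $q^*$ representing $h$ to the minimal-degree representative $q$ required by the statement, I would invoke \cref{l:charp-coeff-sums}. This yields $S_0(q) = S_0(q^*) \in R$ and, for each $r \in \{1,\ldots,p-1\}$, $S_r(q) = \sum_{k \ge 0} S_{k(p-1)+r}(q^*)$, which is a finite sum of elements of $R$ and hence lies in $R$. For $k \ge p$, minimality of $q$ forces $S_k(q)=0 \in R$. In characteristic $0$ the representation is already unique, so this final reduction step is vacuous and we directly recover the original \cref{theorem:CCRA}. The only genuinely new ingredient is thus the bookkeeping provided by \cref{l:charp-coeff-sums}, which is also precisely the reason the statement needs to single out the minimal-degree polynomial.
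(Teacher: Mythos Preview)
Your proposal is correct and uses the same two key ingredients as the paper: \cref{l:kcoeffsum-product} for products and \cref{l:charp-coeff-sums} for the passage to the minimal-degree representative. The only difference is organizational. The paper proves the standalone \cref{l:coeff-sums-ccra-poschar} by maintaining throughout the induction the invariant that the \emph{minimal-degree} exponential polynomial already has all $S_k$ in $R$; at each product step it forms $qq'$ (of degree at most $2p-2$), observes via \cref{l:kcoeffsum-product} that its coefficient sums lie in $R$, and then immediately reduces back to the minimal-degree representative $q''$ using \cref{l:charp-coeff-sums}. You instead relax the invariant to ``some representation has all $S_k$ in $R$'', let degrees grow freely during the induction, and perform a single reduction at the very end. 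Both routes are valid; the paper's version yields the slightly stronger intermediate lemma (about the minimal-degree representative of \emph{any} $R$-generable EPS), while yours avoids repeated reductions and is marginally cleaner for proving the theorem as stated.
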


In particular, if $p$ is sufficiently large compared to $r$ and $s$, this still yields nontrivial restrictions on the coefficients of $q$.
However, the dependence on the number of registers and states makes this much less useful than the corresponding statement in characteristic $0$.

In positive characteristic, there are multiple exponential polynomials representing $h$, and one must be careful to indeed choose the one of minimal degree (in which all the occurring polynomials have degree at most $p-1$). 
That is illustrated in the next example. See also Appendix~\ref{subsec:appendix-exppoly}.

\begin{figure}
\centering
\begin{minipage}[t]{.4\textwidth}
    \captionsetup{width=0.9\linewidth}
    \centering

    \begin{tikzpicture}
    \node[state, right = 2cm of pout] (q0) {};
    \node[left = 1.5cm of q0] (qin) {};
    \node[right = 1.5cm of q0] (qout) {};
        
    \path
    (qin) edge[->] node[above] {$x, y\coloneqq \overline{0}$} node[below] {$z \coloneqq \overline{1}$} (q0)
    (q0) edge[->] [loop above] node[above, align=center] {$x\coloneqq x+ \overline{1}$, $y \coloneqq y+\overline{1}$,\\ $z \coloneqq z+\overline{1}$} (q0)
    (q0) edge[->] node[above] {$xyz+\overline{1}$} (qout);
    \end{tikzpicture}
            
    \captionof{figure}{The CCRA over $\bF_3$ illustrating coefficient sums in positive characteristic (\cref{exm:poschar-coeffsum}).}
    \label{fig:ccra_f3}
\end{minipage}
\end{figure}
       
\begin{example} \label{exm:poschar-coeffsum}
    Consider the single-letter CCRA $\A$ in \cref{fig:ccra_f3} over $\bF_3=\{\overline{0}, \overline{1}, \overline{2}\}$.
    After reading $a^n$, the registers hold the values $(\overline{n}, \overline{n}, \overline{n} + 1)$.
    Hence, $\A(a^n) = \overline{n}^2(\overline{n}+\overline{1})+\overline{1} = \overline{n}^3 + \overline{n}^2 + \overline{1}$.
    However, the polynomial $x^3 + x^2 + 1$ is not the minimal degree polynomial representing $g(n)=\overline{n}^3 + \overline{n}^2 + 1$.
    Since $\overline{n}^3 + \overline{n}^2 + 1 = \overline{n}^2 + \overline{n} + 1$, instead $q=x^2+x+1$ is the minimal-degree representative.
    We have $S_0(q) = \overline{1}$, $S_1(q) = \overline{1}$, $S_2(q) = \overline{1}$ and $S_k(q) = \overline{0}$ for all other $k \in \N$.
\end{example}

The proof of \autoref{theorem:CCRA-pos-char} proceeds in the same way as that of \autoref{theorem:CCRA}.
We just need to replace \autoref{lemma:coeff_sum} by the following lemma.

\begin{lemma} \label{l:coeff-sums-ccra-poschar}
If $R$ is a subsemiring, $(a_n)_n$ is an $R$-generable EPS and $q$ is the exponential polynomial of minimal degree representing $(a_n)_n$, the sum of $k$-degree coefficients of $q$ is in $R$.
\end{lemma}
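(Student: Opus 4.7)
The plan is to mimic the characteristic-zero proof of \cref{lemma:coeff_sum} by structural induction on the way an $R$-generable EPS is built from the four generator families of \cref{d:generable}. The essential new wrinkle over \cref{lemma:coeff_sum} is that in positive characteristic $p$ an EPS does not have a unique exponential polynomial representative, so I will work throughout with the minimal-degree representative $\bar q$. For any representative $q = \sum_\lambda q_\lambda \lambda^x$, the minimal-degree $\bar q$ is obtained by reducing each $q_\lambda$ modulo $x^p - x$; since $a^p = a$ for all $a \in \bF_p \subseteq K$, this reduction does not change the induced function on $\N$. I aim to prove the stronger statement that every $S_r(\bar q)$ lies in $R$ for $r \in \{0, 1, \ldots, p-1\}$; the conclusion of the lemma follows because $S_k(\bar q) = 0$ for $k \ge p$ by minimality of $\bar q$.

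The base cases pose no real difficulty. For constant sequences, pure exponentials $(\alpha^n)_n$, and the linear sequence $(n \cdot 1_K)_n$, the minimal-degree representative is already of the stated form and the only nonzero values of $S_r(\bar q)$ are $\alpha$ or $1_K$, both in $R$. The potentially dangerous family is $\big(\tfrac{\alpha^n - 1}{\alpha - 1}\big)_n$, for which $1/(\alpha-1)$ need not belong to $R$; however, the coefficients at the exponential bases $\alpha$ and $1_K$ cancel, forcing $S_r(\bar q) = 0$ for every $r$.

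For the inductive step, sums are immediate since the minimal-degree representative of a sum is the sum of minimal-degree representatives and $S_r$ is additive. The main obstacle is the product step, where I have to argue about the reduction of $\bar q \bar q'$ modulo $x^p - x$. Using that modular reduction is a ring homomorphism on polynomials and therefore extends to the group algebra of exponential polynomials, one gets $\overline{qq'} = \overline{\bar q \bar q'}$. Because each product $\bar q_\lambda \bar q'_\mu$ has degree at most $2p - 2$, and the reduction $x^n \mapsto x^{r(n)}$ sends at most two classical-degree terms to each minimal-degree term, one obtains
\[
S_0(\overline{\bar q \bar q'}) = S_0(\bar q)\,S_0(\bar q') \quad \text{and} \quad S_r(\overline{\bar q \bar q'}) = S_r(\bar q \bar q') + S_{r+p-1}(\bar q \bar q') \quad (1 \le r \le p-1).
\]
Expanding via \cref{l:kcoeffsum-product} rewrites each invariant as a finite $\Z$-linear combination of products $S_j(\bar q)\,S_{j'}(\bar q')$, all of which lie in $R$ by the inductive hypothesis, so the product inherits the property. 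The only genuinely delicate point is the modular-reduction bookkeeping in this product step; once that is set up correctly, the induction closes.
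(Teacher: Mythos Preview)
Your proof is correct and follows essentially the same structural induction as the paper: base cases, sums trivially, and for products reduce the degree-at-most-$2p-2$ product $\bar q \bar q'$ to its minimal-degree form and observe that $S_r$ of the reduction equals $S_r(\bar q \bar q') + S_{r+p-1}(\bar q \bar q')$, each of which lies in $R$ by \cref{l:kcoeffsum-product} and the induction hypothesis. The only cosmetic difference is that the paper obtains that last formula by invoking \cref{l:charp-coeff-sums}, whereas you derive it directly from the reduction map modulo $x^p - x$; the two arguments are equivalent once one notes that the degree bound $2p-2$ makes all but two terms of the infinite sum in \cref{l:charp-coeff-sums} vanish.
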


\begin{proof}
If $\chr K =0$, this is just \cref{lemma:coeff_sum}. Suppose $\chr K = p > 0$.

First notice that all the sequences in \cref{d:generable} have this property. 
We need to show that the property is preserved under sums and products.
Let $(a_n)_n$,~$(a_n')_n$ be $R$-generable EPS, and let $q$,~$q'$ be the exponential polynomials of minimal degree representing them, respectively.
Then $q+q'$ represents $(a_n+a_n')_n$ and is of minimal degree (since this just means $\deg(q+q') < p$).

Now $qq'$ represents $(a_n a_n')_n$.
By \cref{l:kcoeffsum-product}, all the sums of $k$-degree coefficients of $qq'$ are in $R$.
However, the product $qq'$ may not be the minimal degree exponential polynomial representing $(a_n a_n')_n$ (we only have $\deg(qq') \le 2p-2$, recall \cref{exm:poschar-coeffsum}).
Let $q''$ be the exponential polynomial of minimal degree representing $(a_n a_n')_n$.
Now \cref{l:charp-coeff-sums} implies $S_0(q'')=S_0(qq') \in R$ and $S_r(q'') = S_r(qq') + S_{r+p-1}(qq') \in R$ for all $r \in \{1, \dots, p-1\}$.
For $k \ge p$, we have $S_k(q'')=0$.
So again, it holds that $S_k(q'') \in R$ for all $k$.
\end{proof}

\end{document}